% $Id: main.tex 337 2012-07-16 00:03:13Z xkrcal4 $
\documentclass[a4paper,UKenglish]{llncs}

\usepackage[utf8]{inputenc}
\usepackage[english]{babel}

\usepackage{amssymb,amsmath,amsfonts}

\usepackage{enumerate}
\usepackage{wrapfig}
\usepackage{xspace}
\usepackage{stmaryrd}
\usepackage{textcomp} %tilde

\usepackage{tikz,pgffor}
\usetikzlibrary{arrows}
\usetikzlibrary{shapes}
\usetikzlibrary{automata}
\usetikzlibrary{decorations.pathmorphing} % for snake lines
\usetikzlibrary{decorations.pathreplacing} % for snake lines
\usetikzlibrary{calc} % for manimulation of coordinates

%%%%%%%%%%%%%%%%%%%% general math

\newcommand{\playercon}{{\mathbf{con}}}
\newcommand{\playerenv}{{\mathbf{env}}}
\newcommand{\playerconc}{\text{the player $\playercon$} }
\newcommand{\playerenvc}{\text{the player $\playerenv$} }

\newcommand{\Nset}{\mathbb{N}}
\newcommand{\Nseto}{\Nset_0}

\newcommand{\Rset}{\mathbb{R}}
\newcommand{\Rsetp}{\mathbb{R}_{>0}}
\newcommand{\Rsetpo}{\mathbb{R}_{\ge 0}}

\newcommand{\dist}{\mathcal{D}}

\renewcommand{\vec}[1]{\mathbf{#1}}
\newcommand{\de}[1]{\mathit{d#1}}  %

\newcommand{\sigmafield}{\mathcal{F}}
  %frac(x)

%%%%%%%%%%%%%%%  game and its semantics

\newcommand{\imc}{\mathcal{C}} %M
\newcommand{\env}{\mathcal{E}} %M
\newcommand{\closed}{\imc|\env}                                                                                    %{\imc(\env)}

\newcommand{\states}{S}
\newcommand{\act}{\mathbb{A}\mathrm{ct}^\tau}
\newcommand{\actnotau}{\mathbb{A}\mathrm{ct}}

\newcommand{\itran}{\mathord{\hookrightarrow}}
\newcommand{\mtran}{\mathord{\rightsquigarrow}} %twoheadrightarrow

\newcommand{\acttran}[1]{{}\mathchoice%
    {\stackrel{#1}{\itran}}
% 
%     {\stackrel{#1}{\itran}}
    % vyskove uspornejsi varianta pro \textstyle, aby se nerozbijel radkovy rejstrik
     {\mathop {\smash\itran}\limits^{\vrule width 0pt height 0pt depth 4pt\smash{#1}}}
    {\stackrel{#1}{\itran}}
    {\stackrel{#1}{\itran}}
{}}
\newcommand{\acttranlong}[1]{{}\mathchoice%
    {\stackrel{#1}{\itran}}
% 
%     {\stackrel{#1}{\itran}}
    % vyskove uspornejsi varianta pro \textstyle, aby se nerozbijel radkovy rejstrik
     {\mathop {\smash\itran}\limits^{\vrule width 0pt height 0pt depth 0pt\smash{#1}}}
    {\stackrel{#1}{\itran}}
    {\stackrel{#1}{\itran}}
{}}
\newcommand{\probtran}[1]{{}\mathchoice%
    {\stackrel{#1}{\mtran}}
    % vyskove uspornejsi varianta pro \textstyle, aby se nerozbijel radkovy rejstrik
    {\mathop {\smash\mtran}\limits^{\vrule width 0pt height 0pt depth 4pt\smash{#1}}}
    {\stackrel{#1}{\mtran}}
    {\stackrel{#1}{\mtran}}
{}}
\newcommand{\probtranlong}[1]{{}\mathchoice%
    {\stackrel{#1}{\mtran}}
    % vyskove uspornejsi varianta pro \textstyle, aby se nerozbijel radkovy rejstrik
    {\mathop {\smash\mtran}\limits^{\vrule width 0pt height 0pt depth 0pt\smash{#1}}}
    {\stackrel{#1}{\mtran}}
    {\stackrel{#1}{\mtran}}
{}}

\newcommand{\tauc}{\acttranlong{\playercon}}
\newcommand{\taue}{\acttranlong{\playerenv}}

\newcommand{\init}{{s_0}}

\newcommand{\rates}{\mathrm{R}}
\newcommand{\prob}{\mathbf{P}}

\newcommand{\reach}{\Diamond}
\newcommand{\goal}{G}
\newcommand{\setofruns}{\mathbb{R}\mathrm{uns}}
\newcommand{\probm}{\mathcal{P}}

\newcommand{\scheduler}{\mathfrak{S}}
\newcommand{\histories}{\mathbb{H}\mathrm{istories}} %in CE games
\newcommand{\history}{\mathfrak{h}} %in CE games
 %induced history in the discrete game
\newcommand{\paths}{\mathbb{P}\mathrm{aths}} %in IMC
\newcommand{\mypath}{\mathfrak{p}} %in IMC

\newcommand{\IMC}{\mathrm{IMC}}
\newcommand{\ENV}{\mathrm{ENV}}

\newcommand{\comp}[1]{\parallel_{#1}}
\newcommand{\hide}[1]{\diagdown{#1}}

%%%%%%%%%%%%%%%%%%%%%% others

\newcommand{\eps}{{\varepsilon}}

%{\mathit{v}}

\newcommand{\last}[1]{#1\mathord{\downarrow}}
\newcommand{\wc}{\mathrm{WC}}

\newcommand{\game}{\mathcal{G}}

\newcommand{\gamed}{\Delta}
\newcommand{\igame}{{\game'}}
\newcommand{\dgame}{{\game''}}
\newcommand{\dPi}{\bar{\Pi}}
\newcommand{\almostzero}{0^\leftarrow}
\newcommand{\almost}[1]{^{\rightarrow}{#1}}
\newcommand{\pattern}{\sim}

\newcommand{\pathsfield}{\mathfrak{P}}

\newcommand{\QED}{\qed}

\newcommand{\commit}{\mathit{Commit}}

\newcommand{\destutter}{destutter}
\newcommand{\id}{proj_1}

\newcommand{\exponential}{\mathrm{Exp}}

%%%%%%%%%%%%%%%%%%%%%% deprecated

%%%%%%%%%%%%%%%%%%%%%% new state types

%%%%%%%%%%%%%%%%%%%%%% GSSMC semantics
\newcommand{\kernel}{P}

\newcommand{\step}{\kappa}
\newcommand{\supp}{\mathrm{supp}}

\newcommand{\maxconst}{b}

% specific
% \newcommand{\appref}[1]{\cite[#1]{myproofs}}
\newcommand{\appref}[1]{#1}
% general
% \newcommand{\apprefwhole}{\cite{myproofs}}
\newcommand{\apprefwhole}{the appendix}

\setlength{\marginparwidth}{4.3cm}
\setlength{\marginparsep}{0.1cm}

\newcommand{\out}[1]{} %{#1}

%%%%%%%%%%%%%%%%%% reftheorem

\newcommand{\theoremlike}[2]{\par\medskip\penalty-250\refstepcounter{theorem}{{\bfseries\noindent#2
\ref{#1}.}}}

\newcommand{\thmhelperpre}[2]{\theoremlike{#1}{#2}}
\newcommand{\thmhelperpost}{\par\medskip}

\newenvironment{reftheorem}[1]{\thmhelperpre{#1}{Theorem}}{\thmhelperpost}

%\newtheorem{claim}[theorem]{Claim}
%\theoremstyle{plain}
%\newtheorem{proposition}[theorem]{Proposition}

%%%%%%%%%%%%%%%%%%%%%%%%%%%%%%%%%%%%%%%%%%

%new paper stuff:

\newcommand{\semantics}[1]{\llbracket #1 \rrbracket}
\newcommand{\spec}{\mathcal S}
\newcommand{\may}[1]{\stackrel{#1}{\dashrightarrow}}
\newcommand{\must}[1]{\stackrel{#1}{\longrightarrow}}
\newcommand{\timetran}[1]{\stackrel{#1}{\rightsquigarrow}}
\newcommand{\timetranlong}[1]{{}\mathchoice%
    {\stackrel{#1}{\rightsquigarrow}}
    % vyskove uspornejsi varianta pro \textstyle, aby se nerozbijel radkovy rejstrik
    {\mathop {\smash\rightsquigarrow}\limits^{\vrule width 0pt height 0pt depth 0pt\smash{#1}}}
    {\stackrel{#1}{\rightsquigarrow}}
    {\stackrel{#1}{\rightsquigarrow}}
{}}

\newcommand{\omay}{\mathord{\may{}}}
\newcommand{\omust}{\mathord{\must{}}}
\newcommand{\otimetran}{\mathord{\rightsquigarrow}}
\newcommand{\rel}{\mathcal{R}}
\newcommand{\product}{\imc\times\spec}
\newcommand{\closedspec}{(\product)|\env}                                                                            
%{\imc\times\spec\,(\env)}
\newcommand{\closedspeci}[1]{(\product_{#1})|\env}                                                           %{\imc\times\spec_{#1}\,(\env)}
\newcommand{\nowtran}{\mathsf{\mathbf{Now}}}
\newcommand{\committran}{\mathsf{\mathbf{Change}}}
\newcommand{\commitstate}{\mathit{commit}}
\newcommand{\nowstate}{\mathit{now?}}
\newcommand{\arena}{G_i}%{S_\game}

\newcommand{\current}{\textrm{last}}%{S_\game}

\newcommand{\leftend}{\vdash}
\newcommand{\inside}{+}
\newcommand{\rightend}{\dashv}

%%%%%%%%%%%%%%%%%%%%%%%%%%%%%%%%%%%%%%%%%%

%\newcommand{\QED}{\qed}
%\newcommand{\QED}{}

\newcommand{\myspace}{\vspace*{-1em}}
\newcommand{\myspaceb}{\vspace*{-0.5em}}

%%%%%%%%%%%%%%%%%%%%%%%%%%%%%%%%%%%%%%%%%%

\title{Compositional Verification and Optimization \\ of Interactive Markov Chains}

% \author{%
%   \IEEEauthorblockN{%
%     Jan Kr\v{c}\'{a}l\IEEEauthorrefmark1 \quad%
%     Jan K{\v r}et{\'\i}nsk{\'y}\IEEEauthorrefmark1\IEEEauthorrefmark2
%   }
%   \IEEEauthorblockA{%
%     \IEEEauthorrefmark1 Faculty of Informatics, Masaryk University, Brno, Czech Republic}
%   \IEEEauthorblockA{%
%     \IEEEauthorrefmark2 Institut f\"ur Informatik, Technische Universit{\"a}t M{\"u}nchen, Germany}
% }

\author{ 
Holger Hermanns\inst{1} \and Jan Kr\v{c}\'{a}l\inst{2} \and Jan K\v{r}et\'{i}nsk\'{y}\inst{2,3}
}

\institute{
Saarland University -- Computer Science, Saarbr\"ucken, Germany
    \texttt{hermanns@cs.uni-saarland.de}
\and 
Faculty of Informatics, Masaryk University, Czech Republic
    \texttt{\{krcal,jan.kretinsky\}@fi.muni.cz} 
\and
    Institut f\"ur Informatik, Technical University Munich, Germany
}

\begin{document}

\pagestyle{plain}

\maketitle

\setcounter{footnote}{0}
\renewcommand{\thefootnote}{(\arabic{footnote})}

\myspace

%%%%%% theorems for appendix

\newcommand{\theoremerrorbound}{
For every IMC $\imc$ and MCA $\spec$, $v_\game$ is approximated by $v_\gamed$: \myspaceb

% $v_\gamed \, \leq \, v_\game\, \leq \, v_\gamed + 2\step\lambda^2 T$
$$ \textstyle |v_\game - v_\gamed| \;\;\; \leq \;\;\;  10 \step (\maxconst T)^2 \ln\frac{1}{\step}.$$
A strategy $\sigma^\ast$ optimal in $\gamed$ defines a strategy $(10 \step (\maxconst T)^2 \ln\frac{1}{\step})$-optimal in $\game$. %$v_\spec(\imc)$. 
Further, $v_\gamed$ and $\sigma^\ast$ can be computed in time polynomial in $|\gamed|$, hence in time $2^{\mathcal{O}(|\game|)}$.
}

%%%%%%

\begin{abstract}
Interactive Markov chains (IMC) are compositional behavioural models
extending labelled transition systems and continuous-time Markov chains. 
We provide a framework and algorithms for compositional verification and optimization of IMC with respect to time-bounded properties.
Firstly, we give a specification formalism for IMC. 
Secondly, given a time-bounded property, an IMC component  and the assumption that its unknown environment satisfies a given specification, we synthesize a scheduler for the component optimizing the probability that the property is satisfied in any such environment.
\end{abstract}

% \begin{bibunit}[alpha]

% $Id: intro.tex 341 2012-10-09 13:11:50Z xkretins $

\myspace

\section{Introduction}

\myspaceb

The ever increasing complexity and size of systems together with software reuse strategies  naturally enforce the need for component based system development. For the same reasons, checking reliability and optimizing performance of such systems needs to be done in a \emph{compositional} way. The task is to get useful guarantees on the behaviour of a component of a larger system. The key idea is to incorporate assumptions on the rest of the system into the verification process. This \emph{assume-guarantee reasoning} is arguably a successful divide-and-conquer technique in many contexts~\cite{DBLP:journals/tse/MisraC81,DBLP:conf/lics/AlurH96,\out{DBLP:conf/concur/AlurH97,}DBLP:conf/hybrid/HenzingerMP01}. 

In this work, we consider a continuous-time stochastic model called \emph{interactive Markov chains} (IMC). First, we give a language for expressing assumptions about IMC. Second, given an IMC, an assumption on its environment and a property of interest, we synthesize a controller of the IMC that optimizes the guarantee, and we compute this optimal guarantee, too. 

\smallskip

\textbf{Interactive Markov chains} are behavioural models of
probabilistic systems running in continuous real time appropriate for
the component-based approach~\cite{fmco09\out{,Her02a}\out{,Bra06}}.
IMC have a well-understood compositional theory rooted in process
algebra\out{~\cite{BePS01}}, and are in use as semantic backbones for
dynamic fault trees,\out{~\cite{Bou10},} architectural description
languages\out{~\cite{arcade,esa10}}, generalized stochastic Petri
nets\out{~\cite{HKNZ10}} and Statemate\out{~\cite{Boe09}} extensions,
see~\cite{fmco09} for a survey. IMC are applied in a large spectrum of
practical applications, ranging from \out{networked hardware on
chips~\cite{Cos09} to }water treatment facilities~\cite{water} to %and
ultra-modern satellite designs~\cite{jpk-icse12}.

\begin{wrapfigure}{r}{0.25\textwidth}
\vspace*{-1.3em}
\begin{tikzpicture}[x=1.7cm,->,>=stealth',
% state style
state/.style={shape=circle,draw,font=\scriptsize,inner sep=1mm,outer sep=0.8mm, minimum size=0.75cm},
branch/.style={shape=circle,draw,inner sep=0.5mm,thick,outer sep=0.8mm},
multiline/.style={text width=12mm,text centered},
% target state
target/.style={double distance=0.2mm},
% probability style
transition/.style={},
rate/.style={above,color=black,font=\scriptsize},
tau/.style={rate,color=blue,font=\large},
prob/.style={rate,color=black},
]

%%%% original layout
% \node[state] (X) {$\mathit{init}$};
% \node[state] (Y) at (1,1) {$s$};
% \node[state] (Y') at (1,0) {$t$};
% \node[state] (Y'') at (1,-1) {$u$};
% \node[state] (Z) at (2,0) {$\mathit{goal}$};
% 
% \path ($(X)+(-0.5,0)$) 		edge [transition]		(X);
% \path[->] (X) edge node[above]{$a$} (Y); 
% \path[->] (X) edge node[below]{$\tau$} (Y');
% \path[->] (X) edge node[below]{$\tau$} (Y'');
% \path[->] (Y) edge node[above]{$2$} (Z);
% \path[->] (Y') edge node[below]{$1$} (Z);
% \path[->] (Y'') edge node[below]{$3$} (Z);

\node[state] (X) {$\mathit{init}$};
\node[state] (Y) at (1.2,0) {$s$};
\node[state] (Y') at (0.6,-0.75) {$u$};
\node[state] (Y'') at (0,-1.5) {$v$};
\node[state] (Z) at (1.2,-1.5) {$\mathit{goal}$};

\path ($(X)+(-0.5,0)$) 		edge [transition]		(X);
\path[->] (X) edge node[above]{$a$} (Y); 
\path[->] (X) edge node[below left]{$\tau$} (Y');
\path[->] (X) edge node[left]{$\tau$} (Y'');
\path[->] (Y) edge node[left]{$2$} (Z);
\path[->] (Y') edge node[below left]{$1$} (Z);
\path[->] (Y'') edge node[below]{$3$} (Z);
\end{tikzpicture}
\vspace*{-4em}
\end{wrapfigure}
IMC arise from classical labelled transition systems by incorporating the possibility to change state according to a random \emph{delay} governed by a negative exponential distribution with a given rate, see transitions labelled 1, 2 and 3 in the figure. Apart from delay expirations, state transitions may be triggered by the execution of \emph{internal} ($\tau$) actions or \emph{external} (synchronization) actions. Internal actions are assumed to happen instantaneously and therefore take precedence over delay transitions. External actions are the process algebraic means for interaction with other components, see $a$ in the figure.
By dropping the delay transitions, labelled transition systems are regained in their entirety. Dropping
action-labelled transitions instead yields continuous-time Markov chains -- one of the most used performance and reliability models.

The fundamental problem in the analysis of IMC
%akin to problems of paramount importance in many general real-time systems, 
is that of \emph{time-bounded reachability}. It is the problem to approximate the probability that a given set of states is reached within a given deadline. We illustrate the compositional setting of this problem in the following examples. 

\smallskip

\textbf{Examples.} 
% We illustrate the compositional verification task on an example. 
In the first example, consider the IMC $\imc$ from above and an unknown environment $\env$ with no assumptions. Either $\env$ is initially not ready  to synchronize on the external action $a$ and thus one of the internal actions is taken, or $\env$ is willing to synchronize on $a$ at the beginning. In the latter case, whether $\tau$ or $a$ happens is resolved non-deterministically. Since this is out of control  of $\imc$, we must assume the worst case and let the environment decide which of the two options will happen. For more details on this design choice, see~\cite{DBLP:conf/fsttcs/BrazdilHKKR12}. If there is synchronization on $a$, the probability to reach $\mathit{goal}$ within time $t=1.5$ is $1-e^{-2t}\approx0.95$. Otherwise, $\imc$ is given the choice to move to $u$ or $v$. Naturally, $v$ is the choice maximizing the chance to get to $\mathit{goal}$ on time as it has a higher rate associated. In this case the probability amounts to  $1-e^{-3t}\approx0.99$, while if $u$ were chosen, it would be only $0.78$. Altogether, the guaranteed probability is $95\%$ and the strategy of $\imc$ is to choose $v$ in $\mathit{init}$.

\begin{wrapfigure}{r}{0.5\textwidth}
\vspace*{-2.3em}
\begin{tikzpicture}[x=1.7cm,->,>=stealth',
% state style
state/.style={shape=circle,draw,font=\scriptsize,inner sep=1mm,outer sep=0.8mm, minimum size=0.75cm},
branch/.style={shape=circle,draw,inner sep=0.5mm,thick,outer sep=0.8mm},
multiline/.style={text width=12mm,text centered},
% target state
target/.style={double distance=0.2mm},
% probability style
transition/.style={},
rate/.style={above,color=black,font=\scriptsize},
tau/.style={rate,color=blue,font=\large},
prob/.style={rate,color=black},
]

\node[state] (X) {$\mathit{init}$};
\node[state] (Y) at (1,0) {$\mathit{proc}$};
\node[state] (Y') at (2,0) {$\mathit{ret}$};
\node[state] (Z) at (3,0) {$\mathit{goal}$};

\path ($(X)+(-0.5,0)$) 		edge [transition]		(X);
\path[->] (X) edge node[above]{$\mathsf{req}$} (Y); 
\path[->] (Y) edge node[above]{$\tau$} (Y');
\path[->] (Y') edge node[above]{$\mathsf{resp}$} (Z);
\draw [->,rounded corners] (X) |- ($(X)+(1,-0.8)$) node[above,pos=1.3]{$\tau$} -| (Z);
\end{tikzpicture}
\vspace*{-2.5em}
\end{wrapfigure}
The example depicted on the right illustrates the necessity of assumptions on the environment:
%Without any assumptions 
As it is, the environment can drive the component %depicted below 
to state $\mathit{ret}$  and let it get stuck there by not  synchronising on $\mathsf{resp}$ ever. Hence no better guarantee than $0$ can be derived. 
However, this changes if we know some specifics about the
behaviour of the environment: Let us assume that we know that once synchronization on $\mathsf{req}$ occurs, the environment must be ready to synchronise on $\mathsf{resp}$ within some random time according to, say, an exponential distribution with rate $2$. Under this assumption, we are able to  derive a guarantee of $95\%$, just as in the previous example.

Observe the form of the time constraint we imposed in the last
example: ``within a random time distributed according to Exp(2)'' or
symbolically $\Diamond_{\leq Exp(2)}\varphi$. We call this a
\emph{continuous time constraint}. If a part of the environment is
e.g.~a model of a communication network, it is clear we cannot impose
hard bounds (discrete time constraints) such as ``within 1.5'' as in
e.g.~a formula of MTL $\Diamond_{\leq 1.5}\varphi$. Folklore tells us
that messages might get delayed for longer than that. Yet we want to
express high assurance that they arrive on time. In this case one
might use e.g.~a formula of CSL $Pr_{\geq0.95}(\Diamond_{\leq
  1.5}\varphi)$. 
However, consider now a system with two transitions labelled with $\mathsf{resp}$ in a row. Then this CSL formula yields only a zero guarantee. By splitting the time $1.5$ in halves, the respective $Pr_{\geq0.77}(\Diamond_{\leq 0.75}\varphi)$ yields only the guarantee $0.77^2=0.60$. The actual guarantee $0.80$ is given by the convolution of the two exponential distributions and as such can be exactly obtained from our continuous time constraint $\Diamond_{\leq Exp(2)}\varphi$.
% However, when considering sequential composition of
% delay transitions as in $\mathit{init}\tran{2}s\tran{3}\mathit{goal}$
% it is not clear which combination of time bounds and probabilities we
% should ask for. 
% Instead we will arrive at more precise results by
% supporting all conceivable combinations, which corresponds to the information about
% the whole distribution.
% Further, to cover operators like $Pr_{\geq0.95}$, we can consider subdistributions.

\smallskip

\textbf{Our contribution} is the following:
\begin{enumerate}
 \item We introduce a specification formalism to express assumptions on continuous-time stochastic systems. The novel feature of the formalism are the continuous time constraints, which are vital for getting guarantees with respect to time-bounded reachability in IMC.
%  \item We show how to synthesize $\varepsilon$-optimal schedulers for IMC in an unknown environment and approximate the resulting guarantee.% in polynomial time.
  \item We incorporate the assume-guarantee reasoning to the IMC framework. We show how to synthesize $\epsilon$-optimal schedulers for IMC in an \emph{unknown environment satisfying a given specification} and approximate the respective guarantee.% in polynomial time.
\end{enumerate}

In our recent work~\cite{DBLP:conf/fsttcs/BrazdilHKKR12} we considered a very restricted setting of the second point. Firstly, we considered no assumptions on the environment as the environment of a component might be entirely unknown in many scenarios. Secondly, we were restricted to IMC that never enable internal and external transitions at the same state. This was also a severe limitation as this property is not preserved during the IMC composition process and restricts the expressivity significantly. Both examples above violate this assumption. In this paper, we lift the assumption. 

Each of the two extensions shifts the solution methods from complete information stochastic games to (one-sided) \emph{partial observation} stochastic games, where we need to solve the quantitative reachability problem. While this is undecidable in general, we reduce our problem to a game played on an \emph{acyclic graph} and show how to solve our problem in exponential time. (Note that even the qualitative reachability in the acyclic case is PSPACE-hard~\cite{DBLP:conf/lpar/ChatterjeeD10}.)

% Although partial observation these are lot more complicated, we show that this class can be solved in polynomial time as opposed to the general PSPACE-complete case. The class of games we introduce is interesting in its own right since it allows for modelling continuous-time components in an environment which can delay communication both in a probabilistic and non-deterministic way.
% 
% \todo{co time-bounded ``properties''? chceme neco obecnejsiho jako csl...?}

\smallskip

\textbf{Related work.}
The \emph{synthesis} problem is often stated as a game where the first player controls a component and the second player simulates an environment~\cite{ramadge1989}. Model checking of \emph{open} systems, i.e.~operating in an unknown environment, has been proposed in~\cite{KV_CAV_96}. There is a body of work on \emph{assume-guarantee} reasoning for parallel composition of \emph{real-time}  systems~\cite{DBLP:conf/concur/TasiranAKB96,\out{DBLP:conf/concur/AlurH97,}DBLP:conf/hybrid/HenzingerMP01}. Lately, games with \emph{stochastic continuous-time} have gained attention, for a very general class see~\cite{DBLP:conf/icalp/BouyerF09\out{,DBLP:journals/acta/RabeS11}\out{,Bra09}\out{,DBLP:conf/concur/BrazdilKKKR10}}. While the second player models possible schedulers of the environment, the structure of the environment is fixed there and the verification is thus not compositional. The same holds for~\cite{DBLP:conf/qest/Sproston11,DBLP:conf/qest/HahnNPWZ11}, where time is under the control of the components.

A compositional framework requires means for specification of systems. A specification can be also viewed as an \emph{abstraction} of a set of systems. Three valued abstractions stemming from~\cite{DBLP:conf/lics/LarsenT88} have also been applied to the timed setting, namely in~\cite{DBLP:conf/cav/KatoenKLW07} to continuous-time Markov chains (IMC with no non-determinism), or in~\cite{DBLP:conf/formats/KatoenKN09} to IMC. Nevertheless, these abstractions do not allow for constraints on time distributions. Instead they would employ abstractions on transition probabilities. Further, a compositional framework with timed specifications is presented in~\cite{DBLP:journals/sttt/DavidLLMNRSW12}. This framework explicitly allows for time constraints. However, since the systems under consideration have non-deterministic flow of time (not stochastic), the natural choice was to only allow for discrete (not continuous) time constraints.

Although IMC support compositional design very well, analysis techniques for IMC proposed so far (e.g.~\cite{\out{cadp,}mrmc,\out{conf/mmb/HermannsJ08,}DBLP:conf/formats/KatoenKN09,Neu10,imca} are not compositional. They are all bound to the assumption that the analysed IMC is a \emph{closed} system, i.e.\ it does not depend on interaction with the environment (all actions are internal). Some preliminary steps to develop a framework for synthesis of controllers based on models of hardware and control requirements have been taken in~\cite{DBLP:conf/acsd/Markovski11}. The first attempt at compositionality is our very recent work~\cite{DBLP:conf/fsttcs/BrazdilHKKR12} discussed above. 

Algorithms for the \emph{time-bounded reachability} problem for closed IMC %has been studied and  to compute it 
have been given in~\cite{\out{conf/mmb/HermannsJ08,}Neu10,buchholz-schulz11,HH13} and compositional
abstraction techniques to compute it are developed in~\cite{DBLP:conf/formats/KatoenKN09}. In the closed interpretation,
IMC have some similarities with continuous-time Markov decision
processes. For this formalism, %CTMDP.  
algorithms for time-bounded reachability %and corresponding games 
are developed in~\cite{Bai05b,\out{Bra09,}\out{DBLP:journals/acta/RabeS11,}buchholz-schulz11}.

%%% Local Variables:
%%% mode: latex
%%% TeX-master: "main"
%%% End:

% $Id: defs.tex 341 2012-10-09 13:11:50Z xkretins $

\myspace

\section{Interactive Markov Chains}
\myspaceb

In this section, we introduce the formalism of interactive Markov chains
together with the standard way to compose them. 
We denote by $\Nset$, $\Rsetp$, and $\Rsetpo$ the sets of
positive integers, % natural numbers, natural numbers with zero, 
positive real numbers and non-negative real
numbers, respectively. Further, let $\dist(S)$ denote the set of probability distributions over the set $S$.

\begin{definition}[IMC]
  An \emph{interactive Markov chain (IMC)} is a %quin
quintuple $\imc=
  (\states,\act,\mathord{\itran},\mathord{\mtran},\init)$ where
$\states$ is a finite set of \emph{states}, $\act$ is a finite set of
  \emph{actions} containing a designated \emph{internal action}
  $\tau$, $\init\in\states$ is an \emph{initial} state, 
  \begin{itemize}
  \item $\itran\subseteq\states\times\act \times \states$ is an
    \emph{interactive transition} relation, and
  \item $\mtran\subseteq\states\times\Rsetp\times\states$ is a
    \emph{Markovian transition} relation.  %\item $\rate\in\Rsetpo$ is a \emph{rate},
  \end{itemize}
\end{definition}

Elements of $\actnotau:=\act\smallsetminus\{\tau\}$ are called \emph{external
  actions}.  We write $s\acttran{a}t$ whenever $(s,a,t)\in\itran$, and
$s\probtran{\lambda}t$ whenever $(s,\lambda,t) \in \mtran$ where $\lambda$
is called a \emph{rate} of the transition. We say that an external action $a$, or internal $\tau$, or Markovian transition
is \emph{available} in $s$, if $s\acttran{a}t$, $s\acttran{\tau}t$ or $s\probtran{\lambda}t$ for some $t$ (and $\lambda$), respectively. %A state with available $\tau$ is called \emph{immediate}.

IMC are well suited for compositional modelling, where systems are
built out of smaller ones using standard composition operators. 
\emph{Parallel composition} $\parallel_A$ over  a \emph{synchronization alphabet} $A$ produces a product of two IMC with transitions given by the rules 
\vspace*{-0.5em}
\begin{description}
  \item[\textbf{(PC1)}] $(s_1,s_2)\acttran{a}(s'_1,s'_2)$ for each $s_1\acttran{a}s'_1$ and $s_2\acttran{a}s'_2$ and $a\in A$,
  \item[\textbf{(PC2, PC3)}] $(s_1,s_2)\acttran{a}(s'_1,s_2)$ for each $s_1\acttran{a}s'_1$ and $a\not\in A$, and symetrically,
  \item[\textbf{(PC4, PC5)}] $(s_1,s_2)\probtran{\lambda}(s'_1,s_2)$ for each $s_1\probtran{\lambda} s'_1$, and symmetrically.
\end{description}
\vspace*{-0.5em}
Further, \emph{hiding} $\hide{A}$ an alphabet $A$, yields a system, where each $s\acttran{a}s'$ with $a\notin A$ is left as it is, and each $s\acttran{a}s'$ with $a\in A$ is replaced by internal $s\acttran{\tau}s'$. 

Hiding $\hide{\actnotau}$ thus yields a 
\emph{closed} IMC, where external actions do not
appear as transition labels (i.e.\ $\itran\subseteq\states\times\{\tau\}
\times \states$). 
A closed IMC 
(under a scheduler $\sigma$, see below) moves from state to state and thus produces a {\em run} which is an
infinite sequence of the form $s_0\,t_1\,s_1\,t_2\,s_2\cdots$ where $s_n$ is the
$n$-th visited state and $t_n$ is the time of arrival to $s_n$.
After $n$ steps, the scheduler resolves the non-determinism among internal $\tau$ transitions based on the {\em path}
$\mypath=s_0\, t_1 \cdots t_n \, s_n$.

\begin{definition}[Scheduler]
A \emph{scheduler} of an IMC $\imc= (\states,\act,\mathord{\itran},\mathord{\mtran},\init)$ is a measurable function $\sigma:(\states\times\Rsetpo)^*\times\states\to\dist(\states)$ such that
for each path $\mypath = s_0\,t_1\,s_1 \cdots t_n\, s_n$ with $s_n$ having $\tau$ available,
$\sigma(\mypath)(s)>0$ implies $s_n\acttran{\tau}s$.  The set of all
schedulers for $\imc$ is denoted by $\scheduler(\imc)$.
\end{definition}
The decision of the scheduler $\sigma(\mypath)$ determines $t_{n+1}$ and
$s_{n+1}$ as follows. If $s_n$ has available $\tau$, then the run proceeds
immediately, i.e.\ at time $t_{n+1}:=t_n$, to a state
$s_{n+1}$ randomly chosen according to the distribution $\sigma(\mypath)$.
Otherwise, only Markovian transitions are available in $s_n$. In such a case, after waiting for a random time $t$ chosen according to the exponential distribution with the rate $\rates(s_n)= \sum_{s_n\probtran{\lambda}s'} \lambda$, the run moves at time $t_{n+1}:=t_n+t$ to a randomly chosen next state $s_{n+1}$ with probability $\lambda/r$ where $s_n\probtran{\lambda}s_{n+1}$.
This defines a probability space 
% $\mathcal P_{\imc,\sigma}=
$(\setofruns,\sigmafield,\probm_\imc^\sigma)$ over the runs in the standard way~\cite{Neu10}.

%%% Local Variables:
%%% mode: latex
%%% TeX-master: "main"
%%% End:

% $Id: compverif.tex 338 2012-07-16 09:08:22Z xrehak $
\myspace

\section{Time-Bounded Reachability}
\myspaceb

In this section, we introduce the studied problems.
One of the fundamental problems in verification and performance analysis of
continu\-ous-time stochastic systems is time-bounded reachability. Given
a \emph{closed} IMC $\imc$, a set of goal
states $\goal\subseteq\states$ and a~time bound $T\in\Rsetpo$, the
\emph{value of time-bounded reachability} is defined as
${\sup_{\sigma\in\scheduler(\imc)}\probm^{\sigma}_\imc\big[\reach^{\leq T}\goal\big]}$
where $\probm^{\sigma}_\imc\big[\reach^{\leq T}\goal\big]$ denotes the probability
that a run of $\imc$ under the scheduler $\sigma$ visits a state of $G$
before time $T$. We have seen an example in the introduction. A standard assumption over all analysis techniques published for IMC~\cite{\out{cadp,}mrmc,\out{conf/mmb/HermannsJ08,}DBLP:conf/formats/KatoenKN09,Neu10,imca} is that each cycle contains a Markovian transition.
It implies that the probability of taking infinitely many transitions in
finite time, i.e. of Zeno behaviour, is zero.
One can $\varepsilon$-approximate the value and compute the respective scheduler in time $\mathcal O(\lambda^2T^2/\varepsilon)$~\cite{Neu10} recently improved to $\mathcal O(\sqrt{\lambda^3T^3/\varepsilon})$~\cite{HH13}.

For an \emph{open} IMC to be put in parallel with an unknown environment, the optimal scheduler is computed so that it optimizes the guarantee against all possible environments. Formally, for an IMC
$\imc=(C,\act,\mathord{\itran},\mathord{\mtran},c_0)$ and an environment IMC $\env$ with the same action alphabet $\act$, we introduce a composition
$\closed = (\imc\comp{\actnotau}\env)\hide{\actnotau }$ %\quad 
where all open actions are hidden, yielding a closed system.
In order to compute guarantees on $\closed$ provided we use a scheduler $\sigma$ in $\imc$, 
we consider schedulers $\pi$ of $\closed$ that \emph{respect $\sigma$} on the internal actions of $\imc$, written $\pi\in\scheduler_\sigma(\closed)$; the formal definition is below. The \emph{value of compositional time-bounded reachability} is then defined in~\cite{DBLP:conf/fsttcs/BrazdilHKKR12} as
% \vspace*{-1em}

$$\sup_{\sigma\in\scheduler(\imc)} \inf_{\substack{\env\in \ENV\\
      \pi\in\scheduler_\sigma(\closed)}}
  \probm^{\pi}_{\closed}\big[\reach^{\leq T}\goal\big]$$
\vspace*{-1em}

\noindent where $\ENV$ denotes the set of all IMC with the action alphabet $\act$ and
$\reach^{\leq T}\goal$ is the set of runs that reach $G$ in the first component before $T$. 
Now $\pi$ \emph{respects} $\sigma$ on internal actions of $\imc$ if 
for every path $\mypath=(c_0,e_0)\,t_1\cdots t_{n}(c_n,e_n)$ of $\closed$ there is $p\in[0,1]$ such that for each internal transition $c_n\acttran{\tau}c$ of $\imc$, we have $\pi(\mypath)(c,e_n)=p\cdot\sigma(\mypath_\imc)(c)$. Here $\mypath_\imc$ is the projection of $\mypath$ where %to states of $\imc$ only, since 
$\sigma$ can only see the path of moves in $\imc$ and not in which states $\env$ is. Formally, we define \emph{observation} of a path $\mypath=(c_0,e_0)\,t_1\cdots t_{n}(c_n,e_n)$ as $\mypath_\imc=c_0 t_1\cdots t_{n} c_n$ where each maximal consecutive sequence $t_i\,c_i\cdots t_j\,c_j$ with $c_k=c_i$ for all $i\leq k\leq j$ is rewritten to $t_i\,c_i$. This way, $\sigma$ ignores precisely the internal steps of $\env$.

%%% Local Variables:
%%% mode: latex
%%% TeX-master: "main"
%%% End:

% \myspace
\myspaceb

\subsection{Specifications of environments}
% \myspaceb

In the second example in the introduction, %we have seen that 
without any assumptions on the environment
only zero guarantees could be derived. The component was thus indistinguishable from an entirely useless one. In order to get a better guarantee, we introduce a formalism to
specify assumptions on the behaviour of environments. \myspace

\begin{example}
In the mentioned example, if we knew that after an occurrence of $\mathsf{req}$ the environment is ready to synchronize on $\mathsf{resp}$ in time distributed according to $\mathrm{Exp}(3)$ or faster, we would be able to  derive a guarantee of $0.26$. We will depict this assumption as shown below.
\end{example}
\vspace*{-0.5em}

\begin{wrapfigure}{r}{0.38\textwidth}
\vspace*{-3.5em}
\begin{tikzpicture}[->,>=stealth',
% state style
state/.style={shape=circle,draw,inner sep=1mm,outer sep=0.8mm, minimum size=.5cm,},
branch/.style={shape=circle,draw,inner sep=0.5mm,thick,outer sep=0.8mm},
multiline/.style={text width=13mm,text centered},
% target state
target/.style={double distance=0.2mm},
% probability style
transition/.style={rounded corners},
rate/.style={auto,color=black},
external/.style={rate,color=red},
tau/.style={rate,color=blue},
prob/.style={rate,color=black},
]

%%%%%%%%%%%%%%%%%%%%%%%%%%%%%%%%%%%%%%%%%%%%%%%%%%%%%
% STATES

\node (s) [state] {};
\node (t) at (1.8,0) [state] {};
\node (u) at (4,0) [state] {};
%%%%%%%%%%%%%%%%%%%%%%%%%%%%%%%%%%%%%%%%%%%%%%%%%%%%%
% TRANSITIONS

\path ($(s)+(-0.6,0)$) 		edge 		(s);
\path (s) edge [loop above,densely dashed] node[above] {$\mathsf{resp}$} (s);
\path (u) edge [loop above,densely dashed] node[above] {$\mathsf{req}$} (u);
\path (t) edge [loop above,densely dashed] node[above] {$\mathsf{req}$} (t);
\path (s) edge [densely dashed] node[above] {$\mathsf{req}$} (t);
% \path (u) edge [bend left] node[below] {$\mathsf{resp}$} (s);

\draw [->,rounded corners] (u) |- ($(u)+(-0.8,-0.8)$) -- node[below=-1,pos=0.8]{$\mathsf{resp}$} ($(s)+(0.8,-0.8)$)  -| (s);
% \draw [->,rounded corners] (u) -- ($(u)+(-0.6,-0.6)$) -- node[above=-1,pos=0.2]{$\top$} ($(s)+(0.6,-0.6)$)  -- (s);
\path (u) edge [bend left=30] node[above=-1,pos=0.35] {$\top$} (s);

\path (t) edge [] node[above] {$\leq\mathrm{Exp}(3)$} (u);
\end{tikzpicture}
\vspace*{-3.8em}
\end{wrapfigure}
The dashed arrows denote \emph{may} transitions, which may or may not be available, whereas the full arrows denote \emph{must} transitions, which the environment is ready to synchronize on. 
Full arrows are further used for time transitions. % could be any distribution. \out{\todo{phase-type commented out}}
%but we restrict to phase-type distributions.
% % (and to trivially true, denoted by $T$). 
%The reason for this is twofold. Firstly, they are dense on the space of continuous functions, hence  we can approximate any distributions with a phase-type distribution. Secondly, they are composed of exponential distributions, hence easy to work with.

Although such a system resembles a timed automaton, there are several fundamental differences. Firstly, the time constraints are given by probability distributions instead of constants. 
Secondly, there is only one clock that, moreover, gets reset
whenever the state is \emph{changed}. 
Thirdly, we allow modalities of may and must transitions. 
Further, as usual with timed or stochastic specifications, we require determinism.
\vspace*{-0.7em}

\begin{definition}[MCA syntax]
A \emph{continuous time constraint} is either $\top$ or of the form $\bowtie d$ with  $\mathord{\bowtie}\in\{\leq,\geq\}$ and $d$ a continuous
%\todo{removed ``phase-type'' to justify approximation sequences more, do we need it now?} %phase-type 
distribution.%
%\footnote{Our framework can easily be extended to handle conjunctions of constraints, for details, see~\appref{Appendix~\ref{app:conjunctions}}.} 
%, or empty, denoted by $T$. \todo{(stands for \emph{t}ime)} 
We denote the set of all continuous time constraints by $\mathcal {CTC}$. 
A \emph{modal continuous-time automaton (MCA)} over $\Sigma$ is a tuple 
$\spec = (Q, q_0, \omay,\omust,\otimetran)$, where
\vspace*{-0.7em}

\begin{itemize}
 \item $Q$ is a non-empty finite set of \emph{locations} and $q_0 \in Q$ is an \emph{initial location},
 \item $\omust,\omay:Q\times\Sigma\to Q$ are \emph{must} and \emph{may} transition functions, respectively, satisfying $\omust\subseteq\omay$,
 \item $\otimetran: Q \to \mathcal {CTC} \times Q$ is a \emph{time flow} function.
\end{itemize}
\end{definition}

We have seen an example of an MCA in the previous example. Note that upon taking $\mathsf{req}$ from the first state, the waiting time 
%distributed according to $\mathrm{Exp(3)}$ - or faster!!
is chosen and the waiting starts. On the other hand, when $\mathsf{req}$ \emph{self-loop} is taken in the middle state, the waiting process is not restarted, but continues on the background independently.\footnote{This makes no difference for memoryless exponential distributions, but for all other distributions it does.} We introduce this independence as a useful feature to model properties as ``response follows within some time after request'' in the setting with concurrently running processes. Further, we have transitions under $\top$ corresponding to ``$>0$'', meaning there is no restriction on the time distribution except that the transition takes non-zero time. We formalize this in the following definition. With other respects, the semantics of may and must transitions follows the standards of modal transition systems~\cite{DBLP:conf/lics/LarsenT88}.

\begin{definition}[MCA semantics]\label{def:semantics}
An IMC $\env=(E,\act,\mathord{\itran},\mathord{\mtran},e_0)$ \emph{conforms} to 
an MCA specification $\spec=(Q, q_0, \omay,\omust,\otimetran)$, written $\env\models\spec$, 
if there is a \emph{satisfaction relation} $\rel\subseteq E\times Q$ containing $(e_0,q_0)$ and satisfying for each $(e,q)\in\rel$ that whenever
\begin{enumerate}
 \item $q\must{a}q'$ then there is some $e\acttran{a}e'$ and if, moreover, $q\neq q'$ then $e'\rel q'$,
 \item $e\acttran{a}e'$ then there is (unique) $q\may{a}q'$ and if, moreover, $q\neq q'$ then $e'\rel q'$,
 %\item $s\timetran{T}s'$ then $m\rel s'$,
 \item $e\acttran{\tau}e'$ then $e'\rel q$,
 \item $q\timetran{ctc}q'$ then for every IMC $\imc$ and every scheduler $\pi\in\scheduler(\imc|e)$,\footnote{Here $e$ stands for the IMC $\env$ with the initial state $e$.} 
%choosing whenever possible only transitions under $a$ (before hiding) with $s\may{a}s$, 
there is a random variable $\mathit{Stop}:\setofruns\to\Rsetp$ on 
% $\mathcal P_{\imc(m)}^\pi$
the probability space $(\setofruns,\sigmafield,\probm_{\imc|e}^\pi)$
such that 
\begin{itemize}
 \item if $ctc$ is of the form $\bowtie d$ then the cumulative distribution function of $\mathit{Stop}$ is point-wise $\bowtie$ cumulative distribution function of $d$ (there are no constraints when $ctc=\top$), and
 \item for every run $\rho$ of $\imc|e$ under $\pi$, either a transition corresponding to synchronization on action $a$ with 
$q\may{a}q''\neq q$ is taken before time $\mathit{Stop(\rho)}$, or 
\begin{itemize}
 \item the state $(c,e')$ visited at time $\mathit{Stop}(\rho)$ satisfies $e'\rel q'$, and
 \item for all states $(\bar c,\bar e)$ visited prior to that, whenever
%  $(\bar e,q)$ satisfies that whenever
\begin{enumerate}
 \item $q\must{a}q''$ then there is $\bar e \acttran{a}e'$,
 \item $\bar e\acttran{a}e'$ then there is $q\may{a}q''$.
\end{enumerate}
\end{itemize}
\end{itemize}
\end{enumerate}
The semantics of $\spec$ is the set $\semantics{\spec}=\{\env\in\IMC\mid \env\models\spec\}$ of all conforming IMC.
\end{definition}

\begin{wrapfigure}{r}{0.305\textwidth}
\vspace*{-2.5em}
\begin{tikzpicture}[->,>=stealth',
% state style
state/.style={shape=circle,draw,inner sep=1mm,outer sep=0.8mm, minimum size=.5cm,},
branch/.style={shape=circle,draw,inner sep=0.5mm,thick,outer sep=0.8mm},
multiline/.style={text width=13mm,text centered},
% target state
target/.style={double distance=0.2mm},
% probability style
transition/.style={rounded corners},
rate/.style={auto,color=black},
external/.style={rate,color=red},
tau/.style={rate,color=blue},
prob/.style={rate,color=black},
]

%%%%%%%%%%%%%%%%%%%%%%%%%%%%%%%%%%%%%%%%%%%%%%%%%%%%%
% STATES

\node (s1) [state] {};
\node (t1) at (2.5,0) [state] {};
%%%%%%%%%%%%%%%%%%%%%%%%%%%%%%%%%%%%%%%%%%%%%%%%%%%%%
% TRANSITIONS
\path ($(s1.west)+(-0.3,0)$) 		edge [transition]		(s1);
\path (t1) edge [loop right] node[right] {$b$} (t1);
\path (s1) edge [] node[above] {$\leq\mathrm{Er}(3,1)$} (t1);
\path (s1) edge [bend right] node[below] {$a$} (t1);
% \end{tikzpicture}
% ~
% \begin{tikzpicture}[->,>=stealth',
% % state style
% state/.style={shape=circle,draw,inner sep=1mm,outer sep=0.8mm, minimum size=.8cm,},
% branch/.style={shape=circle,draw,inner sep=0.5mm,thick,outer sep=0.8mm},
% multiline/.style={text width=13mm,text centered},
% % target state
% target/.style={double distance=0.2mm},
% % probability style
% transition/.style={rounded corners},
% rate/.style={auto,color=black},
% external/.style={rate,color=red},
% tau/.style={rate,color=blue},
% prob/.style={rate,color=black},
% ]

%%%%%%%%%%%%%%%%%%%%%%%%%%%%%%%%%%%%%%%%%%%%%%%%%%%%%
% STATES

\node (s) at (0,-1.3) [state] {};
\node (t) at (1.3,-1.3) [state] {};
\node (u) at (2.5,-1.3) [state] {};
%%%%%%%%%%%%%%%%%%%%%%%%%%%%%%%%%%%%%%%%%%%%%%%%%%%%%
% TRANSITIONS
\path ($(s.west)+(-0.3,0)$) 		edge [transition]		(s);
\path (s) edge node[above] {$1$} (t);
\path (t) edge node[above] {$1$} (u);
\path (s) edge[bend right] node[below] {$a$} (u.south);
\path (t) edge[bend right=20] node[below] {$a$} (u);
\path (u) edge [loop right] node[right] {$b$} (u);
\end{tikzpicture}
\vspace*{-3.5em}
\end{wrapfigure}
\addtocounter{example}{1}\noindent
\textit{Example \theexample.\ }
We illustrate this definition.
Consider the MCA on the right above specifying that $a$ is ready and $b$ will be ready either immediately after taking $a$ or within the time distributed according to the Erlang distribution $\mathrm{Er}(3,1)$, which is a convolution of three $\mathrm{Exp(1)}$ distributions. 
The IMC below conforms to this specification (here, $\mathit{Stop} \sim Er(2,1)$ can be chosen). However, observe that 
% that if were not for the right $a$ transition, it would not satisfy it. 
% This explains the last bullet of the previous definition.
%% Krc
it would not conform, if there was no transition under $a$ from the middle to the right state. Satisfying the modalities throughout the waiting is namely required by the last bullet of the previous definition.

%\myspace

\subsection{Assume-Guarantee Optimization}
\label{sec:assume}

\myspaceb

We can now formally state what guarantees on time-bounded reachability we can derive provided the unknown environment conforms to a specification $\spec$.
Given an \emph{open} IMC $\imc$,
a set of goal states $\goal\subseteq C$ and a time bound $T\in\Rsetpo$,
the \emph{value of compositional time-bounded reachability conditioned by an MCA $\spec$} is defined as
\begin{align*}
v_{\spec}(\imc)&\quad:=\quad
\sup_{\sigma\in\scheduler(\imc)} \inf_{\substack{\env\in \ENV: \env\models\spec\\
      \pi\in\scheduler_\sigma(\closed)}}
  \probm^{\pi}_{\closed}\big[\reach^{\leq T}\goal\big] %\tag{$v_{\spec}(\imc)$}
\end{align*}

In this paper, we pose a technical assumption on the set of schedulers of $\imc$. For some clock resolution $\delta > 0$, we consider only such schedulers $\sigma$ that take the same decision for any pair of paths $c_0 t_1 \ldots t_{n} c_n$ and $c_0 t'_1 \ldots t'_{n} c_n$ with $t_i$ and $t'_i$ equal when rounded down to a multiple of $\delta$ for all $1 \leq i \leq n$. This is no practical restriction as it is not possible to achieve arbitrary resolution of clocks when implementing the scheduler. Observe this is a safe assumption as it is \emph{not} imposed on the unknown environment.

We consider specifications $\spec$ where distributions have differentiable density functions. 
In the rest of the paper we show how to 
approximate $v_\spec(\imc)$ for such $\spec$. Firstly, we make a product of the given IMC and MCA. Secondly, we transform the product to a game. This game is further discretized into a partially observable stochastic game played on a dag 
where the quantitative reachability is solved.
% We conclude by showing how to solve quantitative reachability there.
%Full proofs can be found in~\cite{myproofs}.
For full proofs, see~\apprefwhole.%Appendix http://www.model.in.tum.de/\texttildelow kretinsk/concurIMC.pdf

\myspace

\section{Product of IMC and Specification}\label{sec:product}
\myspaceb

In this section, we first translate MCA $\spec$ into a sequence of IMC $(\spec_i)_{i\in\Nset}$. Second, we combine the given IMC $\imc$ with the sequence $(\spec_i)_{i\in\Nset}$ into a sequence of product IMC $(\imc\times\spec_i)_{i\in\Nset}$ that will be further analysed. The goal is to reduce the case where the unknown environment is bound by the specification to a setting where we solve the problem for the product IMC while quantifying over all possible environments (satisfying only a simple technical assumption  discussed at the end of the section), denoted $\ENV'$. The reason why we need a sequence of products instead of one product is that we need to approximate arbitrary distributions with more and more precise and detailed hyper-Erlang distributions expressible in IMC. Formally, we want to define the sequence of the products $\product_i$ so that 
$$
v_{\mathit{product}}(\imc\times\spec_i) \quad:=\quad
\sup_{\sigma\in\scheduler(\imc)} \inf_{\substack{\env\in \ENV'\\ \pi\in\scheduler_\sigma(\closedspeci{i})}}
  \probm^{\pi}_{\closedspeci{i}}\big[\reach^{\leq T}\goal\big]
$$
approximates the compositional value:

\begin{theorem}\label{thm:product}
For every IMC $\imc$ and MCA $\spec$, $\displaystyle v_{\spec}(\imc)=\lim_{i\to\infty}v_{\mathit{product}}(\imc\times\spec_i)$.
\end{theorem}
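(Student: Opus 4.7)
The plan is to prove the theorem by a density-plus-sandwiching argument. First, I would fix the sequence $\spec_i$ by replacing each continuous time constraint $\bowtie d$ appearing in $\spec$ with $\bowtie d_i$, where $d_i$ is a hyper-Erlang distribution whose CDF converges uniformly to that of $d$ as $i \to \infty$. Since $\spec$ was assumed to use distributions with differentiable densities, such a sequence exists by the known density of hyper-Erlang distributions in the class of absolutely continuous distributions on $\Rsetpo$ (and uniform CDF convergence follows from pointwise convergence via the differentiability of the limit density). Hyper-Erlang distributions are directly realizable as finite IMC fragments, so $\spec_i$ is naturally an IMC with the same modal/synchronization structure as $\spec$ but with the time flow $\otimetran$ unfolded into chains of Markovian transitions. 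The product $\imc\times\spec_i$ is then obtained by the standard parallel composition with hiding over $\actnotau$, lifting $\goal$ to the product state space.

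Next, I would establish the inequality $v_\spec(\imc) \ge \limsup_{i\to\infty} v_{\mathit{product}}(\imc\times\spec_i)$. Fix any $\sigma\in\scheduler(\imc)$. Given any environment $\env\models\spec$ together with $\pi\in\scheduler_\sigma(\imc|\env)$, I would construct an environment $\env'_i\in\ENV'$ and a scheduler $\pi'_i\in\scheduler_\sigma(\closedspeci{i})$ such that $\env'_i$ re-uses the interactive behaviour of $\env$ and delegates the realization of the $\mathit{Stop}$ random variables of Definition~\ref{def:semantics} to the hyper-Erlang chains inside $\spec_i$, choosing when to fire the synchronizations so as to match the timing prescribed by $\env$. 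The probability of $\reach^{\le T}\goal$ under $(\pi'_i, \env'_i)$ differs from that under $(\pi, \env)$ by at most the total variation between the CDFs of $d$ and $d_i$ on $[0,T]$ (composed finitely many times along a run, bounded by $T$-many firings thanks to the Zeno-freeness assumption), which vanishes with $i$.

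For the reverse inequality $v_\spec(\imc) \le \liminf_{i\to\infty} v_{\mathit{product}}(\imc\times\spec_i)$, fix $\sigma$ and an $\env'_i\in\ENV'$ with $\pi'_i\in\scheduler_\sigma(\closedspeci{i})$. I would project $\env'_i$ jointly with the spec component of the product into an environment $\env_i\in\ENV$ satisfying $\spec_i$, and then show that any $\env_i\models\spec_i$ can be lifted to an $\env\models\spec$ that emulates $\env_i$, defining $\mathit{Stop}$ as the first time at which $\env_i$'s hyper-Erlang phase reaches the target location of the time flow. The modality side-conditions of Definition~\ref{def:semantics} (the ``throughout the waiting'' clauses) are enforced structurally by the spec component: in the product, any deviation from the may/must pattern blocks the synchronization that would certify the run, so environments in $\ENV'$ that reach $\goal$ are precisely those tracking the modalities. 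Again the approximation error between $d_i$ and $d$ tends to zero.

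The main obstacle will be matching the full path-dependence allowed to $\mathit{Stop}$ in the MCA semantics (where $\mathit{Stop}$ may depend on the joint history of $\imc$ and $\env$) with the more rigid, internally driven evolution of the hyper-Erlang chains in $\spec_i$. I plan to absorb this flexibility into $\env'_i$ (respectively $\env_i$), which retains full observation of the joint state and can therefore reproduce any history-dependent stopping rule of $\env$ up to the hyper-Erlang approximation; the fact that $\sigma$ is quantified universally outside both infima keeps the construction safe. The remaining bookkeeping is a Fubini-style argument bounding the aggregate error by the uniform CDF distance $\sup_{t\le T}|F_d(t)-F_{d_i}(t)|$ summed over the at most $|\spec|$ time flows and the bounded number of spec-transitions performed before $T$, whence the limit follows.
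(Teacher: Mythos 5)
Your overall architecture coincides with the paper's: replace each constraint $\bowtie d$ by a hyper-Erlang phase-type $d_i$, realize $\spec_i$ as an IMC, and prove the two inequalities by translating environments back and forth. However, two ideas that carry the actual proof are missing, and without them both halves break.

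For the direction in which you turn $\env'_i\in\ENV'$ into a conforming environment of $\spec$: conformance $\env\models\spec$ is an exact, all-or-nothing property, so your closing remark that ``the approximation error between $d_i$ and $d$ tends to zero'' buys nothing here --- the environment obtained by combining the spec component with $\env'_i$ either admits a $\mathit{Stop}$ variable whose CDF is pointwise $\bowtie$ that of $d$, or it does not, and there is no approximate conformance to fall back on. The paper makes this step \emph{exact} by choosing the hyper-Erlang approximants one-sidedly: the CDF of the phase-type completion time is required to be pointwise on the correct side of the CDF of $d$ (pointwise $\geq$ for a $\geq d$ constraint, pointwise $\leq$ for $\leq d$), so that defining $\mathit{Stop}$ as the phase-type completion time satisfies the constraint for \emph{every} $i$. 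With mere uniform CDF convergence, as you propose, the constructed environment will in general violate $\spec$ and cannot be entered into the infimum defining $v_\spec(\imc)$; repairing it into a genuinely conforming environment is precisely the content of this direction that your sketch omits.

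For the other direction, you propose to ``absorb the flexibility into $\env'_i$, which retains full observation of the joint state and can therefore reproduce any history-dependent stopping rule of $\env$.'' But $\env'_i$ must itself be an IMC in $\ENV'$: which external actions it offers at a given continuous time is determined by its current state, and it can change state only via its own transitions. Reproducing the continuously evolving, history-dependent behaviour of $\env$ therefore requires an explicit mechanism --- the paper's $\env_i$ commits to a set of available actions and revises this commitment only at Markovian control points of a rate growing with $i$, and additionally generates random bits to reproduce the internal randomization of $\env$ and $\pi$ conditioned on the observed hyper-Erlang branch. Consequently your error bound is not the right quantity: besides the distributional mismatch between $d$ and $d_i$, there is the error from delaying each synchronization to the next control point, and bounding that term is exactly where the clock-resolution assumption $\delta$ on $\sigma$ and the Markovian-transition-on-every-cycle assumption are used. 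Your count of ``at most $T$-many firings'' is also not a valid bound on the number of specification transitions before time $T$; the paper controls this probabilistically. As written, the first inequality contains no construction of $\env'_i$ as an IMC and no correct error decomposition.
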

\noindent Note that in $v_{\mathit{product}}$, $\sigma$ is a scheduler over $\imc$, not the whole product $\imc\times\spec_i$.\footnote{Here we overload the notation $\scheduler_\sigma(\closedspeci{i})$ introduced for pairs in a straightforward way to triples, where $\sigma$ ignores both the second and the third components.} Constructing a product with the specification intuitively corresponds to adding a known, but uncontrollable and unobservable part of the environment to $\imc$. We proceed as follows: We translate the MCA $\spec$ into a sequence of IMC $\spec_i$ and then the product will be defined as basically a parallel composition of $\imc$ and $\spec_i$.

There are two steps in the translation of $\spec$ to $\spec_i$. Firstly, we deal with the modal transitions. A \emph{may} transition under $a$ is translated to a standard external transition under $a$ 
% Krc 
that has to synchronize with $a$ in both $\imc$ and $\env$ simultaneously, 
so that the environment may or may not 
% synchronize on it. 
let the synchronization occur.
Further, each \emph{must} transition under $a$ is replaced by an external transition, that synchronizes with $a$ in $\imc$, but is 
% immediately hidden. 
hidden before making product with the environment.
This way, we guarantee that $\imc$ can take $a$ and make progress no matter if the general environment $\env$ would like to synchronize on $a$ or not.

%  Krc: prejmenovani na q
% 
% Formally, the must transition are transformed into special ``barred'' transition that will be immediately hidden in the product $\imc\times \spec$ as opposed to transitions arising from may transitions.
% Let $\overline \actnotau=\{\bar a\mid a\in\actnotau\}$ denote a fresh copy of the original alphabet. 
% We replace all modal transitions as follows
% \begin{itemize}
%  \item whenever $s\may{a}t$ set $s\acttran{a}t$,
%  \item whenever $s\must{a}t$ set $s\acttran{\bar a}t$. 
% \end{itemize}
% 

Formally, the must transitions are transformed into special ``barred'' transitions that will be immediately hidden in the product $\product_i$ as opposed to transitions arising from may transitions.
Let $\overline \actnotau=\{\bar a\mid a\in\actnotau\}$ denote a fresh copy of the original alphabet. 
We replace all modal transitions as follows
\begin{itemize}
 \item whenever $q\may{a}r$ set $q\acttran{a}r$,
 \item whenever $q\must{a}r$ set $q\acttran{\bar a}r$. 
\end{itemize}

% 
% The second step is to deal with the \emph{timed} transitions, especially with the phase-type-distribution constraints. Every phase type distribution corresponds to a continuous-time Markov chain (an IMC with only timed transitions) with a given starting state and a sink state.
% 
%% Krc
The second step is to deal with the \emph{timed} transitions, especially with the constraints  of the form $\bowtie d$. Such a transition is, roughly speaking, replaced by a phase-type approximation of $d$. This is a continuous-time Markov chain (an IMC with only timed transitions) with a sink state such that the time to reach the sink state is distributed with $d'$. For any continuous distribution $d$, we can find such $d'$ arbitrarily close to $d$.

% \begin{example}
\vspace{0.5em}
\addtocounter{example}{1}\noindent
\textit{Example \theexample.\ }
Consider the following MCA on the left. It specifies that whenever $\mathsf{ask}$ is taken, it cannot be taken again for at least the time distributed by $\mathrm{Er}(2,\lambda)$ and during all that time, it is ready to synchronize on $\mathsf{answer}$.
% all the time. 
This specifies systems that are allowed to \text{ask}, but not too often, and whenever they ask, they must be ready to receive (possibly more) $\mathsf{answer}$s for at least the specified time.

\begin{center}
\vspace*{-2em}
\begin{tikzpicture}[->,>=stealth',
% state style
state/.style={shape=circle,draw,inner sep=0mm,outer sep=0.8mm, minimum size=.7cm,},
branch/.style={shape=circle,draw,inner sep=0.0mm,thick,outer sep=0.8mm},
multiline/.style={text width=13mm,text centered},
% target state
target/.style={double distance=0.2mm},
% probability style
transition/.style={rounded corners},
rate/.style={auto,color=black},
external/.style={rate,color=red},
tau/.style={rate,color=blue},
prob/.style={rate,color=black},
]
%%%%%%%%%%%%%%%%%%%%%%%%%%%%%%%%%%%%%%%%%%%%%%%%%%%%%
% STATES
\node (s1) [state] {r};
\node (t1) at (1.8,0) [state] {q};
%%%%%%%%%%%%%%%%%%%%%%%%%%%%%%%%%%%%%%%%%%%%%%%%%%%%%
% TRANSITIONS
\path ($(s1.west)+(-0.4,0)$) 		edge [transition]		(s1);
\path (t1) edge [loop above] node[above] {$\mathsf{answer}$} (t1);
\path (s1) edge [densely dashed] node[above] {$\mathsf{ask}$} (t1);
\path (t1) edge [bend left] node[below] {$\geq\mathrm{Er}(2,\lambda)$} (s1);
%%%%%%%%%%%%%%%%%%%%%%%%%%%%%%%%%%%%%%%%%%%%%%%%%%%%%

\begin{scope}[xshift=3.3cm]
% STATES
\node (s) at (0,0) [state] {1};
\node (t) at (1.1,0) [state] {2};
\node (u) at (2.2,0) [state] {0};
%%%%%%%%%%%%%%%%%%%%%%%%%%%%%%%%%%%%%%%%%%%%%%%%%%%%%
% TRANSITIONS
\path ($(s.west)+(-0.4,0)$) 		edge [transition]		(s);
\path (s) edge node[above] {$\lambda$} (t);
\path (t) edge node[above] {$\lambda$} (u);
\end{scope}

\begin{scope}[xshift=7cm]
 %%%%%%%%%%%%%%%%%%%%%%%%%%%%%%%%%%%%%%%%%%%%%%%%%%%%%
% STATES
\node (s) [state] {r};
\node (t) at (1.4,0) [state] {q=1};
\node (aux) at (2.6,0) [state] {2};
\node (sink) at (3.8,0) [state] {0};
%%%%%%%%%%%%%%%%%%%%%%%%%%%%%%%%%%%%%%%%%%%%%%%%%%%%%
% TRANSITIONS
\path ($(s.west)+(-0.3,0)$) 		edge [transition]		(s);
\path (t) edge [loop above,min distance=5mm,in=75,out=105] node[above] {$\overline{\mathsf{answer}}$} (t);
\path (aux) edge [loop above,min distance=5mm,in=75,out=105] node[above] {$\overline{\mathsf{answer}}$} (aux);
\path (sink) edge [loop above,min distance=5mm,in=75,out=105] node[above] {$\overline{\mathsf{answer}}$} (sink);
\path (s) edge [] node[above] {$\mathsf{ask}$} (t);
\path (t) edge [] node[above] {$\lambda$} (aux);
\path (aux) edge [] node[above] {$\lambda$} (sink);
\draw [->,rounded corners] (sink) |- ($(sink)+(-1,-0.8)$) node[above,pos=1.4]{$\nowtran$} -| (s);
% \path (sink) edge [bend left] node[below] {$\nowtran$} (s);
\end{scope}
\end{tikzpicture}
\vspace*{-2em}
\end{center}
After performing the first step of replacing the modal transitions as described above, we proceed with the second step as follows. We replace the timed transition with a phase-type, e.g.~the one represented by the IMC in the middle. Observe that while the Markovian transitions are taken, $\mathsf{answer}$ must still be available. Hence, we duplicate the corresponding self-loops on all the new states. Further, since the time constraint is of the form $\geq$, getting to the state $(q,0)$ does not guarantee that we already get to the state $r$. It can possibly take longer. To this end, we connect the states $(q,0)$ and $r$ by a special external action $\nowtran$. Since this action is synchronized with $\env \in \ENV'$, the environment can block the progress for arbitrarily long time. Altogether, we obtain the IMC on the right.

% \begin{center}
% \begin{tikzpicture}[->,>=stealth',
% % state style
% state/.style={shape=circle,draw,inner sep=1mm,outer sep=0.8mm, minimum size=1.0cm,},
% branch/.style={shape=circle,draw,inner sep=0.5mm,thick,outer sep=0.8mm},
% multiline/.style={text width=5mm,text centered},
% % target state
% target/.style={double distance=0.2mm},
% % probability style
% transition/.style={rounded corners},
% rate/.style={auto,color=black},
% external/.style={rate,color=red},
% tau/.style={rate,color=blue},
% prob/.style={rate,color=black},
% ]
% 
% \end{tikzpicture}
% \end{center}
% %
In the case of ``$\leq$'' condition, we would instead add the $\nowtran$ transition from each auxiliary state to the sink, which could instead shorten the waiting time.
% \end{example}
\vspace{0.5em}

When constructing $\spec_i$, we replace each distribution $d$ with its hyper-Erlang phase-type approximation $d_i$ with $i$ branches of lengths 1 to $i$ and rates $\sqrt i$ in each branch. For formal description, see~\appref{Appendix~\ref{app:hypererlang}}.
Formally, let $\nowtran\notin\actnotau\cup\overline\actnotau$ be a fresh action. We replace all timed transitions as follows:
\begin{itemize}
 \item whenever $q\timetran{\top}r$ such that $q \neq r$ set $q\acttranlong{\nowtran}r$,
 \item whenever $q\timetran{\bowtie d}r$ where the phase-type $d_i$ corresponds to a continuous-time Markov chain (IMC with only timed transitions) with the set of states $D$, the initial state $\mathit{1}$ and the sink state $\mathit{0}$, then 
\begin{enumerate}
 \item identify the states $q$ and $\mathit{1}$,
 \item for every $u\in D$ and $q\acttran{\alpha}q$, %with $\mathord{\Rightarrowtran{}}\in\{\omust,\omay\}$, 
set $u\acttran{\alpha}u$,
 \item for every $u\in D$ and $q\acttran{\alpha}p$ with %$\mathord{\Rightarrowtran{}}\in\{\omust,\omay\}$ and 
$p\neq q$, set $u\acttran{\alpha}p$,
% \item if $\mathord{\bowtie} = \mathord{=}$, then identify $r$ and $\mathit{0}$,
 \item if $\mathord{\bowtie} = \mathord{\leq}$, then identify $r$ and $\mathit{0}$, and set $u\acttranlong{\nowtran}r$ for each $u\in D$,
 \item if $\mathord{\bowtie} = \mathord{\geq}$, then set $\mathit{0}\acttranlong{\nowtran}r$.
\end{enumerate}
\end{itemize}
Intuitively, the new timed transitions model the delays, while in the ``$\leq$'' case, the action $\nowtran$ can be taken to speed up the process of waiting, and  in the ``$\geq$'' case, $\nowtran$ can be used to block further progress even after the delay has elapsed.

The product is now the parallel composition of $\imc$ and ${\spec_i}$, where each action $\bar a$ synchronizes with $a$ and the result is immediately hidden. Formally, the product $\imc\times\spec$ is defined as $\,\imc\;\parallel_{\actnotau\cup\overline{\actnotau}}^{\textbf{PC6}}\;{\spec_i}\,$, where $\parallel_{\actnotau\cup\overline{\actnotau}}^{\textbf{PC6}}$ is the parallel composition with one additional axiom:
\begin{description}
 \item[\textbf{(PC6)}] $s_1\acttran{a}s'_1$ and $s_2\acttran{\bar{a}}s'_2$ implies
    $(s_1,s_2)\acttran{\tau}(s'_1,s'_2)$,
\end{description}
saying that $a$ synchronizes also with $\bar a$ and, in that case, is immediately hidden (and any unused $\bar a$ transitions are thrown away).

The idea of $\nowtran$ is that it can be taken in arbitrarily short, but non-zero time. To this end, we define $\ENV'$ in the definition of %Theorem~\ref{thm:product} 
$v_{\mathit{product}}(\imc\times\spec_i)$
to denote all environments where $\nowtran$ is only available in states that can be entered by only a Markovian transition. Due to this requirement, each $\nowtran$ can only be taken after waiting for some time.
%We now give a brief idea why our construction of the product $\imc\times\spec$ is correct. For the full proof, see~\cite{myproofs}.

% Kratsi:
% The idea of $\nowtran$ is that it can be taken in arbitrarily short, but non-zero time. To this end, we define $\ENV'$ of %Theorem~\ref{thm:product} 
% $v_{\mathit{product}}(\imc\times\spec_i)$
% to denote all environments where $\nowtran$ is only available in states that can be entered by only a Markovian transition. Thus each $\nowtran$ is only taken after waiting for some time.
% %We now give a brief idea why our construction of the product $\imc\times\spec$ is correct. For the full proof, see~\cite{myproofs}.

% $Id: cegames.tex 341 2012-10-09 13:11:50Z xkretins $

\myspace

\section{Controller-Environment Games}\label{sec:ceg}

\myspaceb

So far, we have reduced our problem to computing $\lim_{i\to\infty}v_{\mathit{product}}(\imc\times\spec_i)$.
%$$\displaystyle v_{\mathit{product}}(\imc\times\spec)=\sup_{\sigma\in\scheduler(\imc)} \inf_{\substack{\env\in \ENV'\\
%       \pi\in\scheduler(\closedspec, \sigma)}}
%   \probm^{\pi}_{\closedspec}\big[\reach^{\leq T}\goal_\env\big]$$
%
% Note that we are still quantifying over unknown environments. Therefore, the behaviour of each $\env$ is limited by the uncontrollable \emph{stochastic} flow of time caused by its timed transitions. For example, in some environment, $a$ is not available until some particular random time passes. This setting is still too difficult to be solved directly. Therefore, in this section, we reduce this setting to one, where the stochastic flow of time of the environement (limited in an unknow way) is replaced by a free \emph{non-deterministic} choice of the \emph{second player}. % Krc: shortened  heavily
%
Note that we are still quantifying over unknown environments. Further, the behaviour of each environment is limited by the uncontrollable \emph{stochastic} flow of time caused by its Markovian transitions. 
%For example, when waiting in a state without an $a$ transition, it cannot synchronize over $a$ before a Markovian transition is taken. 
%Since $\ENV$ is an infinite set, t
This setting is still too difficult to be solved directly. Therefore, in this section, we reduce this setting to one, where the stochastic flow of time of the environment (limited in an unknown way) is replaced by a free \emph{non-deterministic} choice of the \emph{second player}.

We want to turn the product IMC $\imc\times\spec_i$ into a two-player \emph{controller--environment game} (CE game) $\game_i$, where
player $\playercon$ controls the decisions over internal transitions in $\imc$; and player $\playerenv$ simulates the environment including speeding-up/slowing-down $\spec$ using $\nowtran$ transitions.
In essence, $\playercon$ chooses in each state with internal transitions one of
them, and $\playerenv$ chooses in each state with external (and hence
synchronizing) transitions either which of them should be taken, or a \emph{delay} $d \in \Rsetp$ 
during which no synchronization occurs.
The internal and external transitions take zero time to be executed if chosen. 
% If no zero time transition is chosen, the delay $t_\ext$ determined by $\playerenv$ competes with the Markovian transitions.
% The delay, if chosen, may be terminated sooner by a Markovian transitions.
Otherwise, the game waits until either the delay $d$ elapses or a Markovian transition occurs.

% for $t_M < t_\ext$ if a Markovian transition at a sooner time instant $t_M$.

This is the approach taken in~\cite{DBLP:conf/fsttcs/BrazdilHKKR12} where no specification is considered. However, there is a catch. This construction is only correct under the assumption of~\cite{DBLP:conf/fsttcs/BrazdilHKKR12} that there are no states of $\imc$ with both external and internal transitions available.

\begin{wrapfigure}{r}{0.35\textwidth}
\vspace*{-4em}
\begin{tikzpicture}[outer sep=2,/tikz/initial text=,->,
state/.style={shape=circle,draw,inner sep=1mm,outer sep=0.8mm, minimum size=.7cm,},
transition/.style={rounded corners}]
\node[state] (5) {$\mathsf{i}$};
\node[state] (0) at (1.1,0) {$\mathsf{?}$};
%\node[state] (0) [below right of=5] {$\mathsf{?}$};
\node[state] (1) at (2,0.6) {$\mathsf{yes}$};
\node[state] (2) at (2,-0.6) {$\mathsf{no}$};
\node[state,accepting] (4) at (3.2,0.6) {$\mathsf{win}$};
\node[state] (3) at (3.2,-0.6) {$\mathsf{fail}$};
\path  (5) edge node[auto] {$\lambda$} (0) 
         (1) edge node[auto] {$\mathsf{a}$} (4)
         (0) edge node[auto,swap] {$\tau$} (2)
         (0) edge node[auto] {$\tau$} (1)
         (2) edge node[auto,swap,below right=-3.5] {$\tau$} (4)
         (2) edge node[auto,swap] {$\mathsf{a}$} (3);
\path ($(5.west)+(-0.3,0)$) 		edge [transition]		(5);
%\node  [font=\large] at (0,0.9) {$\imc_\mathit{imperfect}$};
\end{tikzpicture}
% ~
% \begin{tikzpicture}[node distance=1.2cm, outer sep=2,/tikz/initial text=,->,font=\scriptsize]
% \node[state,initial] (s) {q};
% \path  (s) edge[loop above, densely dashed] node[above] {$\mathsf{a}$} (s);
% \end{tikzpicture}
\vspace*{-3.5em}
\end{wrapfigure}
% \begin{example}
\vspace{0.5em}
\addtocounter{example}{1}\noindent
\textit{Example \theexample.\ }
Consider the IMC $\imc$ on the right 
(for instance with a trivial specification not restricting the environment).
%(having a single state and a may self-loop for every action).
Note that there are both internal and external actions available in $\mathsf{no}$.

As $\tau$ transitions take zero time, the environment $\env$ must spend almost all the time in states without $\tau$.
Hence, when $\mathsf{?}$ is entered, $\env$ is almost surely in such a state $e$.
Now $\tau$ form $\mathsf{?}$ is taken and $\env$ cannot move to another state when $\mathsf{yes}$/$\mathsf{no}$ is entered. 
Since action $\mathsf{a}$ either \emph{is} or \emph{is not} available in $e$, the environment cannot choose to synchronize in $\mathsf{no}$ and not to synchronize in $\mathsf{yes}$.
%
% Therefore, if $\mathsf{yes}$ is entered and $\mathsf{a}$ was available in $e$, the $\mathsf{a}$ transition to $\mathsf{win}$ has to be taken. Similarly, if the scheduler of $\imc$ decides to move to $\mathsf{no}$ and there was no $\mathsf{a}$ available in $e$, it cannot be taken now, and $\tau$ to $\mathsf{win}$ must be taken.
%
% Yet, the availability of $\mathsf{a}$ is crucial: the scheduler of the IMC wins in $\mathsf{yes}$ iff $\mathsf{a}$ is available in $e$ (and must be taken) and wins in $\mathsf{no}$ iff $\mathsf{a}$ is not available in $e$ (and cannot be taken).
%
%All in all, 
As a result, the environment \emph{``commits'' in advance} to synchronize over $\mathsf{a}$ either in both $\mathsf{yes}$ and $\mathsf{no}$ or in none of them. Therefore, in the game we define, $\playerenv$ cannot completely freely choose which external transition is/is not taken.
Further, note that the scheduler of $\imc$ cannot observe whether $\mathsf{a}$ is currently available in $\env$, which intrinsically induces imperfect information.
% \end{example}
\vspace{0.5em}

In order to transfer these ``commitments'' to the game, we again make use of the compositionality of IMC and put the product $\imc\times\spec_i$ in parallel with an 
IMC $\mathit{Commit}$ and then define the game on the result. 

\begin{wrapfigure}{r}{0.38\linewidth}
\vspace*{-2.2em}
\begin{tikzpicture}[/tikz/initial text=,font=\footnotesize,->,sloped,rounded corners,
state/.style={shape=circle,draw,inner sep=0mm,outer sep=0.8mm, minimum size=.7cm,}
]

\node[state,inner sep=4.25] (C) at (-0.3,-0.5) {$com.$};
\node[state,inner sep=4.25] (N) at (3.7,-0.5) {$now?$};
\node[state] (a) at (1.5,0) {$\{\mathsf{a}\}$};
\node[state] (e) at (1.5,-1) {$\emptyset$};

\path ($(C.west)+(-0.3,0)$) 		edge		(C);

\path (a) edge [bend right=30] node[above] {$\mathsf{a}$} (C);
\path (C) edge node[above] {$\tau$} (a);
\path (C) edge node[below] {$\tau$} (e);
\path (a) edge node[above] {$\committran$} (N);
\path (e) edge node[below] {$\committran$} (N);

% \draw (N) |- (1.5,.5) node[above,pos=0.3] {$\tau$} -| (C);
% \draw (N) |- (1.5,-1.5) node[right,pos=0.1] {$\nowtran$} -| (C);
\draw (N) |- (1.5,.5)  -| (C);
\draw (N) |- (1.5,-1.5)   -| (C);
\node at (3.9,0.3) {$\tau$};
\node at (4.2,-1.3) {$\nowtran$};
\end{tikzpicture}
\vspace*{-3.2em}
\end{wrapfigure}

The action alphabet of $\mathit{Commit}$ is $\actnotau\cup\{\nowtran,\committran\}$ and the state space is $2^{\actnotau} \cup \{\commitstate,\nowstate\}$ (in the figure, $\actnotau = \{\mathsf{a}\}$; for formal description, see~\appref{Appendix~\ref{app:commit}}).
State $A \subseteq \actnotau$ corresponds to $\env$ being committed to the set of currently available actions $A$. Thus $A\acttran{a}\commitstate$ for each $a\in A$.
% 
% We have just seen that in each moment, 
% In each moment, the environment $\env$ is ``committed'' to a subset of currently available actions that it must take if possible, and must not take the others. 
% 
% Then it has to respect its commitment until it takes a transition or a timed transition occurs.
% 
% As this commitment of $\env$ corresponds to its current state, 
This commitment must be respected until the state of $\env$ is changed: either (1) by an external transition from the commitment set (which in $\mathit{Commit}$ leads to the state $\commitstate$ where a new commitment is immediately chosen); or (2) by a $\committran$ transition (indicating the environment changed its state due to its Markovian transition).
% (on which $\mathit{Commit}$ and $\env$ synchronize only after $\env$ makes a Markovian transition into a state where $\committran$ is available).
% 

% \begin{wrapfigure}{r}{0.38\linewidth}
% \vspace*{-1em}
% \begin{tikzpicture}[/tikz/initial text=,font=\footnotesize,->,sloped,rounded corners,
% state/.style={shape=circle,draw,inner sep=1mm,outer sep=0.8mm, minimum size=.6cm,}
% ]
% 
% \node[state,inner sep=4.25] (C) at (-0.3,-0.5) {$c.$};
% \node[state,inner sep=4.25] (N) at (3.7,-0.5) {$n?$};
% \node[state] (ab) at (1.5,1) {$ab$};
% \node[state] (a) at (1.5,0) {$a$};
% \node[state] (b) at (1.5,-1) {$b$};
% \node[state] (e) at (1.5,-2) {$\emptyset$};
% \node (middle) at (0.7,-0.5) {};
% \node (middle2) at (2.2,-0.5) {};
% 
% \path ($(C.west)+(-0.3,0)$) 		edge		(C);
% 
% \path (ab) edge [bend right=30] node[above] {$a,b$} (C);
% \path (a) edge [bend right=30] node[above] {$a$} (C);
% \path (b) edge [bend left=30] node[below] {$b$} (C);
% 
% \draw (C) -- (middle.center)  node[right=2] {$\tau$}  -- (a);
% \draw (C) -- (middle.center) -- (b);
% \draw (C) -- (middle.center) -- (ab.230);
% \draw (C) -- (middle.center) -- (e.140);
% 
% \draw (ab) -- (middle2.center)  -- node[above] {$\committran$} (N);
% \draw (a) -- (middle2.center) -- (N);
% \draw (b) -- (middle2.center) -- (N);
% \draw (e) -- (middle2.center) -- (N);
% 
% \draw (N) |- (1.5,1.5) node[below,pos=0.7] {$\tau$} -| (C);
% \draw (N) |- (1.5,-2.5) node[above,pos=0.7] {$\nowtran$} -| (C);
% \end{tikzpicture}
% \vspace*{-3.2em}
% \end{wrapfigure}
% 
% 
The game $\game_i$ is played on the arena $\big(\imc\times\spec_i \parallel_{\actnotau\cup\{\nowtran\}}\commit\big)\,\hide\,\big({\actnotau\cup\{\nowtran\}}\big)$ with its set of states denoted by ${\arena}$. Observe that external actions have either been hidden (whenever they were available in the commitment), or discarded (whenever not present in the current commitment). The only external action that remains is $\committran$. 
%\todo{Krc: I would prefer replacing the definition with a figure, if there is time to draw one.}
% 
The game $\game_i$ is played as follows. There are two types of states: \emph{immediate} states with some $\tau$ transitions available and \emph{timed} states with no $\tau$ available. The game starts in $v_0=(c_0,q_0,\mathit{\commitstate})$.
\begin{itemize}
\item %
% In an immediate vertex $v_n=(c,s,e)$, we distinguish two type of internal transitions. \emph{Controller's} transitions correspond to the internal transitions in $\imc$. \emph{Environment's} transitions are all the remaining, i.e. correspond to the synchronization transitions or to the internal transitions in $\spec$ or $\commit$.
% %  
In an immediate state $v_n=(c,q,e)$, $\playercon$ chooses a probability distribution over transitions corresponding to the internal transitions in $\imc$ (if there are any). Then, $\playerenv$ either approves this choice (chooses $\checkmark$) and $v_{n+1}$ is chosen randomly according to this distribution, or rejects this choice and chooses a $\tau$ transition to some $v_{n+1}$ such that the transition does \emph{not} correspond to any internal transitions of $\imc$. Then the game moves at time $t_{n+1}=t_{n}$ to $v_{n+1}$.
%  
%  In an immediate vertex, $v_n=(c,s,e)$, $\playercon$ chooses a $\tau$ transition to some $v_{n+1}:=(c',s,e)$ if there is any (this corresponds to choosing a $\tau$ transition of $\imc$). The $\playerenv$ either approves this choice, or rejects and chooses another $\tau$ transition to another $v_{n+1}$ that does \emph{not} correspond to any $\tau$ transition of $\imc$. Then the game moves in time $t_n=0$ to $v_{n+1}$.
%  
 \item In a timed state $v_n=(c,q,e)$, $\playerenv$ chooses a delay $d>0$. Then Markovian transitions (if available) are resolved by randomly sampling a time $t$ according to the exponential distribution with rate $\rates(v_n)$ and randomly choosing a target state $v_{n+1}$ where each $v_n\probtran{\lambda}v$ is chosen with probability $\lambda/\rates(v_n)$.
\begin{itemize}
 \item If $t<d$, $\game_i$ moves at time $t_{n+1}=t_n+t$ to $v_{n+1}$,\;
 {(\scriptsize Markovian transition wins)}
 \item else $\game_i$ moves at time $t_{n+1}=t_n+d$ to $(c,q,\nowstate)$.\hfill
 {(\scriptsize $\env$ takes $\committran$)}
\end{itemize}
\end{itemize}
This generates a run $v_0 t_1 v_1 t_1\cdots$. The set $(\arena\times \Rsetpo)^\ast\times \arena$ of prefixes of runs is denoted $\histories(\game)$.
% 
% According to the definition of schedulers in IMC, 
We formalize the choice
of $\playercon$ as a \emph{strategy} $\sigma: \histories(\game_i) \rightarrow
\dist(\arena)$. We further allow the $\playerenv$ to randomize and thus his \emph{strategy}  is
$\pi:\histories(\game_i) \rightarrow \dist(\{\checkmark\}\cup\arena)\cup\dist(\Rsetp)$.
%\{\checkmark\}\cup\arena\cup\Rsetp$.
% 
We denote by $\Sigma$ and $\Pi$ the sets of all strategies of the players $\playercon$ and $\playerenv$, respectively. 

Since $\playercon$ is not supposed to observe the state of the specification and the state of $\commit$, 
% Further, to keep the correspondence of a strategy $\sigma$ and a scheduler $\sigma\in\scheduler(\closedspec)$, 
we consider in $\Sigma$ only those strategies that satisfy $\sigma(p)=\sigma(p')$, whenever \emph{observations} of $p$ and $p'$ are the same. Like before, the observation of $(c_0,q_0,e_0)t_1\cdots t_{n}(c_n,q_n,e_n)\in\histories(\game)$ is a sequence obtained from $c_0t_1\cdots t_{n}c_n$ by replacing each maximal consecutive sequence $t_i\,c_i\cdots t_j\,c_j$ with all $c_k$ the same, by $t_i\,c_i$. This replacement takes place so that the player cannot observe transitions that do not affect $\imc$. Notice that now $\scheduler(\imc)$ is in one-to-one correspondence with $\Sigma$.
Further, in order to keep CE games out of Zeno behaviour, we consider in $\Pi$ only those strategies 
for which the induced Zeno runs have zero measure, i.e.\ the sum of the chosen
delays diverges almost surely no matter what $\playercon$ is doing.
The \emph{value of} $\game_i$ is now defined as
$$
v_{\game_i} \quad:=\quad 
\sup_{\sigma \in \Sigma} \inf_{\pi \in \Pi} \probm^{\sigma, \pi}_{\game_i}\big[\reach^{\leq T}\goal\big]
$$
where $\probm^{\sigma, \pi}_{\game_i}\big[\reach^{\leq T}\goal\big]$ is the probability of all runs of $\game_i$ induced by $\sigma$ and $\pi$ and reaching a state with the first component in $\goal$ before time $T$. %\todo{leave out this sentence?}
We now show that it coincides with the value of the $i$th product:

\begin{theorem}\label{thm:ceg-ith}
For every IMC $\imc$, MCA $\spec$, $i\in\Nset$, we have $v_{\game_i}=v_{\mathit{product}}(\imc\times\spec_i)$.
\end{theorem}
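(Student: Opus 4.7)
The plan is to prove Theorem~\ref{thm:ceg-ith} by establishing the two inequalities $v_{\game_i}\leq v_{\mathit{product}}(\imc\times\spec_i)$ and $v_{\game_i}\geq v_{\mathit{product}}(\imc\times\spec_i)$ separately. Using the noted bijection between $\scheduler(\imc)$ and $\Sigma$, I fix $\sigma\in\Sigma$ together with the corresponding $\bar\sigma\in\scheduler(\imc)$; it then suffices to establish
$\inf_{\pi\in\Pi}\probm^{\sigma,\pi}_{\game_i}[\reach^{\leq T}\goal]=\inf_{\env,\pi'}\probm^{\pi'}_{\closedspeci{i}}[\reach^{\leq T}\goal]$, the right-hand infimum ranging over $\env\in\ENV'$ and $\pi'\in\scheduler_{\bar\sigma}(\closedspeci{i})$.

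For $v_{\game_i}\geq v_{\mathit{product}}$, given any pair $(\env,\pi')$ on the product side I build an equivalent $\playerenv$-strategy $\pi$ that silently tracks the hidden state $e$ of $\env$ along the history (information invisible to $\playercon$, matching the observation restriction defining $\Sigma$). Whenever $\commit$ must issue a fresh availability set, $\pi$ picks $A\subseteq\actnotau$ equal to the external actions enabled at $e$; whenever a $\tau$-branch originating in $\env$ has to be resolved, $\pi$ copies $\pi'$'s distribution; in timed states $\pi$ samples a delay equal to the time until the next state-change of $\env$, after which $\nowtran$ triggers a $\committran$ so that a new availability set can be committed. The $\ENV'$ rule that $\nowtran$ be preceded by a Markovian step, combined with memorylessness of exponentials, implies that the joint distribution on $\imc$-observations and goal-reach times induced by $(\sigma,\pi)$ in $\game_i$ equals the one induced by $(\bar\sigma,\pi')$ in $\closedspeci{i}$, so the reach probabilities coincide.

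The converse $v_{\game_i}\leq v_{\mathit{product}}$ goes the other way: from each $\pi\in\Pi$ I construct an environment $\env_\pi\in\ENV'$ and a scheduler $\pi'\in\scheduler_{\bar\sigma}$ realizing the same reach probability (or approximating it to arbitrary precision, which is all that infimum-equality requires). The states of $\env_\pi$ enumerate the commitments $\pi$ may issue: each availability set $A\subseteq\actnotau$ becomes a state of $\env_\pi$ enabling exactly the externals in $A$ together with a ``dummy'' Markovian transition into an auxiliary state whose only enabled external is $\nowtran$. The history-dependence of $\pi$ is carried entirely by the scheduler $\pi'$, which routes the $\tau$-branches inside $\env_\pi$ to the appropriate next commitment state and resolves the hidden $\nowtran$ to match the delay distribution chosen by $\pi$. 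The main obstacle is here: $\playerenv$'s delay choices in the game are arbitrary real-valued randomized quantities, whereas IMC delays are exponential. This is overcome by two ingredients that already appear in the game arena: the dummy-Markovian gadget whose rate is irrelevant because $\pi'$ can elect $\nowtran$ immediately once enabled, and phase-type mixtures over several such Markovian branches, which are dense enough in the space of continuous delay distributions to match the value taken by $\pi$. A standard measurability/Fubini argument on cylinder sets of histories, together with the $\ENV'$--$\committran$ duality noted above, then verifies that the two induced probability measures agree on the reachability event $\reach^{\leq T}\goal$, closing the equality.
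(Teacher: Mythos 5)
Your overall architecture (two inequalities, each proved by a simulation in one direction, with the approximation caveat for the infimum) is the same as the paper's, but two of your key steps have genuine gaps. First, in the direction where you turn a pair $(\env,\pi')$ into a game strategy $\pi$ (which, incidentally, you have labelled backwards: matching every $(\env,\pi')$ by some $\pi$ with the same value bounds $\inf_\pi$ from \emph{above} by $\inf_{\env,\pi'}$, i.e.\ it proves $v_{\game_i}\leq v_{\mathit{product}}$, not $\geq$), your claim that $\pi$ can ``silently track the hidden state $e$ of $\env$'' and reproduce the joint distribution exactly does not go through. The game history records neither which Markovian successor $\env$ branched to nor which outcome of $\pi'$'s randomization occurred, and a behavioural strategy of $\playerenv$ has no private memory of its own coin flips, so it cannot stay correlated with the environment's actual state; this would require a strategy with stochastic update. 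The paper circumvents this by defining a map $\wc$ from game histories to product paths that always selects, among the consistent environment evolutions, the one \emph{minimizing} future reachability; since a minimum never exceeds an affine combination, this yields the inequality $\probm^{\sigma,\pi_\env}_{\game_i}\leq\probm^{\pi'}_{\closedspeci{i}}$, which is all that is needed. Your argument needs this (or an equivalent) device; exact measure equality is simply false in general.

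Second, in the converse direction, the central difficulty is that $\playerenv$'s delay distribution depends on the entire history, whereas a fixed environment IMC has a fixed Markovian structure whose races the scheduler $\pi'$ cannot influence. Your proposed fix --- phase-type mixtures ``dense enough'' to match $\pi$'s delay --- founders on exactly this point: the branching of a hyper-Erlang is resolved probabilistically at rates fixed once and for all, so it cannot adapt to the history, and your dummy gadget with $\nowtran$ ``elected immediately once enabled'' realizes only a single exponential delay. The paper instead proceeds in two lemmas: it first reduces to $\playerenv$-strategies choosing Dirac delays (by an averaging argument), and then simulates a deterministic delay $t$ by a \emph{single-rate} universal environment in $\ENV'$ in which the scheduler repeatedly re-enters the waiting state until the accumulated exponential samples exceed $t$ and only then fires $\nowtran$/$\committran$; the simulation error is controlled on a set of runs of measure tending to $1$ as $\lambda\to\infty$, using the clock resolution $\delta$ of $\sigma$. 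Without the reduction to Dirac delays and the scheduler-controlled repetition trick (or some substitute), your construction does not produce a legitimate environment--scheduler pair realizing an arbitrary history-dependent delay law.
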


This result allows for approximating $v_{\spec}(\imc)$ through computing $v_{\game_i}$'s. However, from the algorithmic point of view, we would prefer approximating $v_{\spec}(\imc)$ by solving a single game $\game$ whose value $v_\game$ we could approximate directly. This is indeed possible. But first, we need to clarify, why the approximation sequence $\spec_i$ was crucial even in the case where all distributions of $\spec$ are already exponential.

\begin{wrapfigure}{r}{0.25\textwidth}
\vspace*{-2.4em}
\begin{tikzpicture}[->,>=stealth',
% state style
state/.style={shape=circle,draw,inner sep=1mm,outer sep=0.8mm, minimum size=.7cm,},
branch/.style={shape=circle,draw,inner sep=0.5mm,thick,outer sep=0.8mm},
multiline/.style={text width=13mm,text centered},
% target state
target/.style={double distance=0.2mm},
% probability style
transition/.style={rounded corners},
rate/.style={auto,color=black},
external/.style={rate,color=red},
tau/.style={rate,color=blue},
prob/.style={rate,color=black},
]
% %%%%%%%%%%%%%%%%%%%%%%%%%%%%%%%%%%%%%%%%%%%%%%%%%%%%%
% % STATES
% \node (s1) [state] {};
% \node (t1) at (1.7,0) [state] {};
% \node (u1) at (2.8,0) [state] {};
% 
% \node (s2) at (0,-1)[state] {};
% \node (t2) at (1.7,-1) [state] {};
% \node (u2) at (2.8,-1) [state] {};
% %%%%%%%%%%%%%%%%%%%%%%%%%%%%%%%%%%%%%%%%%%%%%%%%%%%%%
% % TRANSITIONS
% \path ($(s1.west)+(-0.4,0)$) 		edge [transition]		(s1);
% \path ($(s2.west)+(-0.4,0)$) 		edge [transition]		(s2);
% \path (s1) edge [] node[above] {$1$} (t1);
% \path (s2) edge [] node[above] {$\geq\mathrm{Exp}(1)$} (t2);
% \path (t1) edge [] node[above] {$a$} (u1);
% \path (t2) edge [] node[above] {$a$} (u2);

%%%%%%%%%%%%%%%%%%%%%%%%%%%%%%%%%%%%%%%%%%%%%%%%%%%%%
% STATES
\node (s1) [state] {$q$};
\node (t1) at (2,0) [state] {$r$};

%%%%%%%%%%%%%%%%%%%%%%%%%%%%%%%%%%%%%%%%%%%%%%%%%%%%%
% TRANSITIONS
\path ($(s1.west)+(-0.4,0)$) 		edge [transition]		(s1);
\path (s1) edge [] node[above] {$\geq\mathrm{Exp}(1)$} (t1);
\path (s1) edge [loop above,dashed] node[left,pos=0.2] {$a$} (s1);
\path (t1) edge [loop above] node[right,pos=0.8] {$b$} (t1);
\end{tikzpicture}
\vspace*{-3.2em}
\end{wrapfigure}
Consider the MCA on the right and a conforming environment $\env$, in which $a$ is available iff $b$ becomes available within 0.3 time units. If Player $\playerenv$ wants to simulate this behaviour, he needs to know how long the transition to $r$ is going to take so that he can plan his behaviour freely, only sticking to satisfying the specification. If we translate Exp(1) directly to a single Markovian transition (with no error incurred), $\playerenv$ knows nothing about this time as exponential distributions are memoryless. On the other hand, with finer hyper-Erlang, he knows how long the current branch of hyper-Erlang is roughly going to take. In the limit, he knows the precise waiting time right after coming to $q$.
%
% However, in $\closedspeci{i}$, the Markovian transitions of $\spec$ and  $\env$ proceed \emph{independently}. Therefore, with 50\% chance $\env$ will offer $a$ before $\spec$ moves to the location with the outgoing $a$. At that moment, the composition with $\spec_i$ disallows to perform $a$ yet. Hence the distribution on when $a$ can occur will be greater than Exp(1). As a result, $\env$ is not able to be offer $a$ as fast as it does alone.
%

To summarize, $\playerenv$ is too weak in $\game_i$, because it lacks the information about the precise time progress of the specification. The environment needs to know how much time is left before changing the location of $\spec$. Therefore, the game $\game$ is constructed from $\game_1$ by multiplying the state space with $\Rsetpo$ where we store the exact time to be waited. After the product changes the state so that the specification component switches to a state with $\bowtie d$ constraint, this last component is overwritten with a number generated according to $d$. This way, the environment knows precisely how much time is left in the current specification location. This corresponds to the infinitely precise hyper-Erlang, where we at the beginning randomly enter a particular branch, which is left %from which $sink$ is reached 
in time with Dirac distribution. For more details, see~\appref{Appendix~\ref{app:ce-precise}}.

Denoting the \emph{value of} $\game$ by
$%\begin{align*}
\displaystyle 
v_{\game} := \sup_{\sigma \in \Sigma} \inf_{\pi \in \Pi} \probm^{\sigma, \pi}_{\game}\big[\reach^{\leq T}\goal\big],
$ %\end{align*}
we obtain:
%where $\probm^{\sigma, \pi}_{\game}\big[\reach^{\leq T}\goal'\big]$ is the probability of all runs of $\game$ induced by $\sigma$ and $\pi$ and reaching a state with the first component in $\goal$ before time $T$.\todo{leave out this sentence?}

% \newcommand{\theoremcegamesoktwo}{
%  For every IMC $\imc$ and MCA $\spec$, we have $v_\spec(\imc)=v_{\game}$.
% }

% \myspaceb

\begin{theorem}\label{thm:ceg-direct}
For every IMC $\imc$ and MCA $\spec$, we have $\displaystyle v_{\game}=\lim_{i\to\infty}v_{\game_i}$. %\todo{Krc: quantifying over $\imc$ is weird when $\game$ seems to be fixed.}
\end{theorem}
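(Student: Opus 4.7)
The proof rests on the observation that the two discrepancies between $\game_i$ and $\game$ both vanish as $i\to\infty$: (i) the hyper-Erlang distribution $d_i$ converges weakly to $d$, and (ii) Player $\playerenv$'s information in $\game_i$ (which hyper-Erlang branch was selected, together with its current phase) becomes asymptotically equivalent to $\playerenv$'s information in $\game$ (the exact waiting time revealed upon entry to a timed specification location). Concretely, the chosen scheme (branches of Erlang lengths $1,\ldots,i$ with rate $\sqrt i$) yields $d_i \Rightarrow d$ and makes the conditional variance of the branch waiting time given the selected branch of order $O(1/i)$ on the branches that carry non-negligible posterior mass. Hence one can construct a coupling of runs of $\game$ and $\game_i$ under which successive waiting times match up to an additive error tending to $0$ in probability, uniformly within horizon $T$.

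For the inequality $v_\game \leq \liminf_i v_{\game_i}$, I would fix any $\playercon$-strategy $\sigma$ in $\game$; since the $\Rsetpo$-component is unobservable to $\playercon$, $\sigma$ restricts to a strategy $\sigma_i$ in each $\game_i$. For any $\playerenv$-strategy $\pi_i$ in $\game_i$, I would lift $\pi_i$ to $\pi$ in $\game$ that disregards the exact sampled time and instead simulates internally the hyper-Erlang branch and its phase transitions, playing $\pi_i$ on the simulated $\game_i$-history. The coupling then yields $|\probm^{\sigma,\pi}_\game[\reach^{\leq T}\goal] - \probm^{\sigma_i,\pi_i}_{\game_i}[\reach^{\leq T}\goal]| = o(1)$, so $\inf_\pi \probm^{\sigma,\pi}_\game \leq \inf_{\pi_i} \probm^{\sigma_i,\pi_i}_{\game_i} + o(1)$. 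Taking $\sup$ over $\sigma$ gives the bound in the limit.

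For $v_\game \geq \limsup_i v_{\game_i}$, I would take an $\varepsilon$-optimal $\playercon$-strategy $\sigma^\star_i$ in $\game_i$ and lift it to a strategy $\sigma$ in $\game$ (well-defined as $\playercon$'s observation is identical in both). Given any $\playerenv$-strategy $\pi$ in $\game$, project it to $\pi_i$ in $\game_i$ that, at each reaction point, samples a hypothetical exact waiting time from the posterior of $d_i$ given the observed branch-phase and plays $\pi$ on that hypothetical history. The coupling again yields $|\probm^{\sigma,\pi}_\game - \probm^{\sigma^\star_i,\pi_i}_{\game_i}| = o(1)$, whence $\inf_\pi \probm^{\sigma,\pi}_\game \geq v_{\game_i} - \varepsilon - o(1)$. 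Letting $\varepsilon\to 0$ and $i\to\infty$ closes the argument.

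The main obstacle, I expect, is engineering these two strategy translations to be measurable and consistent with the continuous-time filtrations of the two games. Player $\playerenv$ in $\game_i$ observes branch transitions incrementally and may interleave external actions (as well as $\committran$ transitions from $\commit$) between them, while $\game$ reveals the total waiting time upfront but still permits such interleavings. Guaranteeing that at every history-point the lifted/projected strategy preserves the correct conditional distribution of remaining waiting time, that the error accumulates only linearly in the (bounded) number of specification location changes within horizon $T$, and that the resulting strategies are Borel measurable, is the technically delicate core. Once this is established, continuity of the time-bounded reachability objective under weak convergence of run distributions completes the proof.
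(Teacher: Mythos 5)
Your outline follows the same route as the paper: two strategy simulations (an environment strategy of $\game_i$ simulated inside $\game$, and an environment strategy of $\game$ simulated inside $\game_i$), both resting on the fact that the hyper-Erlang branch approximately encodes the sampled waiting time and vice versa. However, the two issues you defer as ``the technically delicate core'' are exactly where the proof lives, and as written one of your constructions would fail. In the direction $v_\game \leq \liminf_i v_{\game_i}$ you say the lifted $\pi$ ``disregards the exact sampled time and instead simulates internally the hyper-Erlang branch and its phase transitions.'' This cannot work verbatim: in $\game$ the specification actually changes location at the sampled time $r$, so an internally simulated hyper-Erlang that is not conditioned on $r$ will reach its virtual sink at a different time and the virtual $\game_i$-history desynchronizes from the real $\game$-run. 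The paper instead chooses the branch as a measurable function of $r$ (via an inverse-CDF composition $Branch^{-1}(Indep(F_{\sqrt i}(\cdot)))$ applied to an observable waiting duration) and then samples each phase transition \emph{conditioned on reaching the sink at the actual remaining time}. The same device resolves the second half of the problem you flag: strategies here are functions of the observable history and cannot carry private random state across steps (the paper explicitly notes that stochastic-update strategies are not available), so ``samples a hypothetical waiting time from the posterior'' in the $\geq$ direction must likewise be realized by extracting randomness from an observable quantity --- the paper uses the duration of the initial rate-$2^{2^i}$ transition and sets $Time_i(j,t)=E_j^{-1}(Indep(F_{2^{2^i}}(t)))$, which keeps the guess implicitly in the history.

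The second gap is that your ``coupling yields $o(1)$'' step silently assumes that $\playercon$'s behaviour is unaffected by the coupling. The simulations perturb event times by small positive amounts (the initial fast transitions of the hyper-Erlang take nonzero time; a synchronization may be executed slightly late), and $\sigma$ observes arrival times. Without the standing clock-resolution assumption of Section~\ref{sec:assume}, an arbitrarily small timing perturbation could flip $\sigma$'s decision and the coupled runs would diverge. The paper's proof makes this explicit: it defines the good event $X_\ell$ by requiring that no Markovian transition falls within $\delta/\ell$ of a $\delta$-gridline, that the accumulated initial-transition delays stay below $\delta/\ell^2$, and that each Erlang branch finishes within $\delta/\ell$ of its guess, then shows $\probm^{\sigma,\pi_i}_{\game_i}[X_\ell]\to 1$ and that on $X_\ell$ both histories stay in the same $\delta$-slot, so $\sigma$ plays identically. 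You also need the bound on the number of specification location changes before $T$ (you mention it in passing), which is what makes the per-change errors summable. So: right architecture, but the measurable randomness-extraction trick and the $\delta$-slot argument are missing, and both are essential rather than routine.
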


\myspace

%%% Local Variables:
%%% mode: latex
%%% TeX-master: "main"
%%% End:

% $Id: discretization.tex 339 2012-07-16 09:31:05Z xkretins $

\myspace

\section{Approximation using discrete-time PO games}\label{sec:disc}

\myspaceb

In this section, we briefly discuss the approximation of $v_\game$ by a discrete time turn-based partial-observation stochastic game $\gamed$. 
% 
% It is a turn-based stochastic game where each vertex belongs either to the player 0 (corresponding to $\playercon$ with partial observation), or to player 1 (corresponding to $\playerenv$ with full observation), or to nature (playing randomly). 
The construction is rather standard; hence, we do not treat the technical difficulties in great detail (see Appendix~\ref{app:disc}). % in the main body of the paper.
We divide the time bound $T$ into $N$ intervals of length $\step = T/N$
%. We assume 
such that the clock resolution $\delta$ (see Section~\ref{sec:assume}) satisfies $\delta = n \step$ for some $n\in\Nset$.

% We assume the clock resolution $\delta$ (from Section~\ref{sec:assume}) divides the time bound $T$ into $T/\delta = N$ intervals; we further divide each of these intervals into $M \in\Nset$ subintervals of length $\step = \delta/M$.
% 
\begin{enumerate}
 \item We enhance the state space with a counter $i \in \{0,\ldots,N\}$ that tracks that $i\cdot\step$ time has already elapsed. Similarly, the $\Rsetpo$-component of the state space is discretized to $\step$-multiples.
In timed states, time is assumed to pass exactly by $\step$. In immediate states, actions are assumed to take zero time. 

% Players $\playercon$ and $\playerenv$ take their steps in zero time. Delaying by $\playerenv$ or waiting for Markovian transitions corresponds to random steps in $\gamed$. Each random step corresponds to $\delta$ time in $\game$; in this time at most one Markovian transition is allowed to occur.  
\item We let at most one Markovian transition occur in one step in a timed state.
\item We unfold the game into a tree until on each branch a timed state with $i=N$ is reached.
Thereafter, $\gamed$ stops.
We obtain a graph of size bounded by $b^{\leq N \cdot |G|}$ where $b$ is the maximal branching and $G$ is the state space of $\game$.
\end{enumerate}
Let $\Sigma_\gamed$ and $\Pi_\gamed$ denote the set of randomized history-dependent strategies of $\playercon$ and $\playerenv$, respectively, where player $\playercon$ observes in the history only the first components of the states, i.e. the states of $\imc$, and the elapsed time $\lfloor i / n\rfloor$ up to the precision $\delta$. Then
$v_\gamed := \sup_{\sigma\in\Sigma_\gamed} \inf_{\pi\in\Pi_\gamed} \probm^{\sigma,\pi}_\gamed (\reach G)$
denotes the value of the game $\gamed$ 
where $\probm^{\sigma,\pi}_\gamed (\reach G)$ is the probability of the runs of $\gamed$ induced by $\sigma$ and $\pi$ and reaching a state with first component in $G$. Let $b$ be a constant bounding (a) the sum of outgoing rates for any state of $\imc$, and (b) densities and their first derivative for any distribution in $\spec$.

\begin{theorem}\label{thm:disc}
\theoremerrorbound
\end{theorem}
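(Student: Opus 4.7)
The plan is to introduce two intermediate games $\game_1$ and $\game_2$ interpolating between $\game$ and $\gamed$ and bound each discretization step separately. First, $\game_1$ is obtained from $\game$ by allowing at most one Markovian transition per interval of length $\step$ and forcing $\playerenv$ to only choose delays that are integer multiples of $\step$. The Markovian-transition restriction incurs an error of order $(\maxconst\step)^2$ per interval from the probability that a second transition fires, and summing over the $N=T/\step$ intervals yields a contribution of order $\maxconst^2 \step T$, which after the standard compounding argument against the reachability objective gives the prefactor $(\maxconst T)^2$. The delay-rounding error is absorbed by the smoothness of the exponential sojourn-time density on the $\imc$-side, which is insensitive to sub-$\step$ perturbations up to total variation $O(\maxconst\step)$ per interval.

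Second, I would pass from $\game_1$ to $\game_2$ by replacing the continuous sojourn times prescribed by the $\bowtie d$ transitions of the specification with their $\step$-discretizations; the bound $b$ on the densities and their first derivatives gives an $O(b\step)$ total-variation error per sojourn, and the additional factor $\ln\frac{1}{\step}$ comes from a tail-truncation argument: samples generated by densities below threshold $\Theta(\step\ln\frac{1}{\step})$ must be treated specially, because a general density may place too much mass arbitrarily close to $0$ where naive grid rounding fails. Finally, $\game_2$ is equal up to cosmetic reformulation to the finite tree-unfolding $\gamed$, so combining the three bounds yields the claimed additive error.

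The strategy transfer is almost immediate. Given $\sigma^\ast$ optimal in $\gamed$, I would lift it to a strategy $\bar\sigma^\ast\in\Sigma$ that observes only the sequence of $\imc$-states together with the elapsed time rounded down to the clock resolution $\delta$ (which by the assumption in Section~\ref{sec:assume} is an integer multiple of $\step$). Since these are exactly the observations available to $\playercon$ in $\gamed$, $\bar\sigma^\ast$ is well-defined in $\Sigma$, and by the coupling induced by the intermediate games it achieves a value in $\game$ within the claimed additive error of $v_\gamed$. The matching upper bound on $v_\game$ uses a symmetric construction: any $\playerenv$-strategy in $\game$ is rounded onto the $\step$-grid to produce an essentially equivalent strategy in $\gamed$.

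For the complexity claim, $\gamed$ is a finite acyclic turn-based partial-observation stochastic game with a reachability objective, so $v_\gamed$ and $\sigma^\ast$ can be computed by backward induction over the unfolding dag: at chance nodes we take the expectation, at $\playerenv$-nodes we minimize, and at each observation class of $\playercon$-nodes we jointly maximize over the tuple of actions consistent with imperfect information. This takes time polynomial in $|\gamed|$, and $|\gamed|\in 2^{\mathcal{O}(|\game|)}$ is inherited from the depth-$N$ branching-bounded unfolding. I expect the main obstacle to be accounting cleanly for the $\ln\frac{1}{\step}$ factor: the density-smoothness argument at arbitrarily small times does not yield it for free, and one must carefully couple early-firing events between $\game_1$ and $\game_2$ using a truncation threshold that itself scales logarithmically in $\step$, while simultaneously showing that the coupling survives the partial-observation constraint on $\playercon$.
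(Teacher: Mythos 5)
Your high-level architecture (a chain of intermediate games, separating the ``at most one Markovian transition per $\step$-interval'' error from the specification-sampling error, then an exact correspondence with the finite tree) matches the paper's decomposition into $\igame$, $\dgame$ and $\gamed$. However, two of your key steps have genuine gaps. First, the $\ln\frac{1}{\step}$ factor does not come from a tail-truncation of densities near $0$. In the paper it arises from bounding the \emph{number} $K$ of specification phase changes before time $T$: since $\maxconst$ bounds the densities, each sojourn stochastically dominates a uniform on $[0,1/\maxconst]$, and Hoeffding's inequality shows that more than $K=2\maxconst T(1+\ln\frac{1}{\step})$ phase changes occur with probability at most $\step$; the dominant error term is then $K$ times the per-sample probability $O(T\step\maxconst)$ of drawing a non-simulable point (this per-sample bound is where the first-derivative bound on the densities enters), giving $O(\step(\maxconst T)^2\ln\frac{1}{\step})$. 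Your mechanism ($(\maxconst T)^2$ from compounding the Markovian error, $\ln\frac{1}{\step}$ from mass near $0$) does not produce this bound: the Markovian-multiplicity error is only $O(\step\maxconst^2 T)$ and needs no compounding, and without controlling the number of phase changes your per-sojourn $O(\maxconst\step)$ errors have no a priori finite count to sum over.

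Second, your complexity argument fails: a one-sided partial-observation game cannot be solved by backward induction over the dag, even on a tree, because the correct (possibly randomized) action at an observation class of $\playercon$ depends on the distribution over the nodes in that class, which in turn depends on the strategies chosen upstream; ``jointly maximizing over the tuple of actions consistent with imperfect information'' at each class is neither locally well-defined nor polynomial. The paper instead verifies that $\playercon$ has \emph{perfect recall} in the unfolded tree and invokes the sequence-form linear program of Koller--Megiddo--von~Stengel, which is the entire reason the unfolding into a tree is performed (the paper notes that without it one only gets an EXPSPACE bound via the theory of reals). Relatedly, your ``round $\playerenv$'s delays onto the $\step$-grid'' step glosses over the fact that $\playerenv$ may need to act \emph{just before} a grid line or \emph{immediately}; the paper has to enlarge $\playerenv$'s action set with the limit actions $0^{\leftarrow}$ and $\almost{(\step-x)}$ and prove a separate grid-strategy sufficiency lemma before the game becomes finite. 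Without these ingredients the reduction to a finitely-branching tree game and the polynomial-time solution of $\gamed$ are not established.
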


The proof of the error bound extends the technique of the previous bounds of~\cite{Neu10} and \cite{DBLP:conf/fsttcs/BrazdilHKKR12}. Its technical difficulty stems from partial observation and from semi-Markov behaviour caused by the arbitrary distributions in the specification.
The game is unfolded into a tree in order to use the result of~\cite{DBLP:conf/stoc/KollerMS94}. Without the unfolding, the best known (naive) solution would be a reduction to the theory of reals, yielding an EXPSPACE algorithm.

\myspace

\section{Summary}

\myspace

We have introduced an assume-guarantee framework for IMC. We have considered the problem to 
% $\varepsilon$-
approximate the guarantee on time-bounded reachability properties in an unknown environment $\env$ 
% assuming $\env$ 
that
satisfies a given assumption. The assumptions are expressed in a new formalism, which introduces continuous time constraints. The algorithmic solution results from Theorems 1 to 4:
% show that the problem can be solved in exponential time. 
\begin{corollary}
For every IMC $\imc$ and MCA $\spec$ and $\eps > 0$, a value $v$ and a scheduler $\sigma$ can be computed in exponential time such that 
%$v \, \leq \, v_\spec(\imc) \, \leq \, v + \eps$ 
$|v_\spec(\imc)-v| \, \leq \, \eps$ 
and $\sigma$ is $\eps$-optimal in $v_\spec(\imc)$.
\end{corollary}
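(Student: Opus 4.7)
The plan is to compose the four preceding theorems. Combining Theorem~\ref{thm:product}, Theorem~\ref{thm:ceg-ith} and Theorem~\ref{thm:ceg-direct} yields
$$v_\spec(\imc) \;=\; \lim_{i\to\infty} v_{\mathit{product}}(\imc\times\spec_i) \;=\; \lim_{i\to\infty} v_{\game_i} \;=\; v_\game,$$
so it suffices to approximate $v_\game$ and to extract a near-optimal strategy of player $\playercon$ in the CE game $\game$.

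For the value, I would pick a discretization step $\step$ small enough that $10\step(\maxconst T)^2 \ln\frac{1}{\step} \leq \eps$; any $\step$ of order $\eps/(\maxconst T)^2$ up to a logarithmic factor works, chosen in addition so that the clock resolution $\delta$ of Section~\ref{sec:assume} is an integer multiple of $\step$. Then build the discrete partial-observation tree game $\gamed$ of Section~\ref{sec:disc} and, by Theorem~\ref{thm:disc}, compute its value $v_\gamed$ together with an optimal strategy $\sigma^\ast$ of player $\playercon$ in $\gamed$ in time polynomial in $|\gamed|$, i.e.\ in time $2^{\mathcal{O}(|\game|)}$. Setting $v := v_\gamed$ gives $|v_\spec(\imc) - v| = |v_\game - v_\gamed| \leq \eps$.

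For the scheduler, Theorem~\ref{thm:disc} guarantees that $\sigma^\ast$ induces an $\eps$-optimal strategy of $\playercon$ in $\game$, which via the one-to-one correspondence between $\Sigma$ and $\scheduler(\imc)$ from Section~\ref{sec:ceg} yields a scheduler $\sigma\in\scheduler(\imc)$. The main subtlety---and the only non-routine point of the argument---is that the equality $v_\spec(\imc)=v_\game$ needs to be read \emph{strategy-wise on the controller side}: for each fixed $\sigma\in\scheduler(\imc)$, the infimum of reachability probabilities over environments conforming to $\spec$ has to equal the infimum over $\playerenv$'s responses in $\game$. This is built into the constructions of $\spec_i$, $\commit$ and $\game$, which only add components invisible to $\playercon$ so that the controller is never forced to observe anything outside $\imc$; granting this (it is established inside the proofs of Theorems~\ref{thm:product}--\ref{thm:ceg-direct}), $\sigma$ is $\eps$-optimal for $v_\spec(\imc)$.
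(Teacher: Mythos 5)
Your proposal is correct and follows essentially the same route as the paper, which derives the corollary by composing Theorems~\ref{thm:product}--\ref{thm:disc} and choosing the discretization step $\step$ (compatible with the clock resolution $\delta$) so that the error bound $10\step(\maxconst T)^2\ln\frac{1}{\step}$ drops below $\eps$. Your added remark that the value equalities of Theorems~\ref{thm:product}--\ref{thm:ceg-direct} must be read strategy-wise on the controller side is a legitimate care point, and it is indeed discharged inside the appendix proofs, which fix $\sigma$ and establish the corresponding inequalities for each fixed controller strategy.
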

In future work, we want to focus on identifying structural subclasses of IMC allowing for polynomial analysis.

\myspace

\subsubsection*{Acknowledgement}
The work has received support from the Czech Science Foundation, project No.~P202/12/G061, from the German Science Foundation DFG as part of SFB/TR~14~AVACS, and from the EU FP7 Programme under grant agreement no.\@ 295261 (MEALS) and 318490 (SENSATION).
We also thank Tom\'a\v s Br\'{a}zdil and Vojt\v ech \v{R}eh\'ak for fruitful discussions and for their feedback.

\myspace

%%% Local Variables:
%%% mode: latex
%%% TeX-master: "main"
%%% End:
%\input{other}

% \putbib[str-short,li,concur]
% \end{bibunit}
 
\bibliographystyle{alpha}
\bibliography{str-short,li,concur}

%%%%%%%%%%%%%%%%%%%%%%%%%%%%% PLEASE bibtex bu1.aux AND bu2.aux SEPARATELY

\newpage
% \onecolumn
\appendix

% \begin{bibunit}[alpha]

% $Id: app-results.tex 343 2012-10-10 13:49:18Z xkretins $

\allowdisplaybreaks

\section*{Appendix}

\bigskip

%\section{Section~\ref{sec:product}}%: Proof of Theorem~\ref{thm:product}}
%\label{app:product}

\section{Additional Examples and Technical Definitions}

\subsection{Product: An Example}\label{app:ex}

\begin{example}
Let us illustrate the product of the MCA from Example 3 and the IMC below. The MCA specifies a client who might want to $\mathsf{ask}$ a question and then must be able to receive an $\mathsf{answer}$ for some random time. It  may be the case that he is e.g.~asking different services at once and thus he gets answer at a random time, after which he is not willing to get $\mathsf{answer}$ from the server any more.

The following IMC is a server accepting $\mathsf{ask}$ and after some computation lasting a random time, it provides an $\mathsf{answer}$. Observe that it is not ready to proceed if the other side is no more willing to synchronize on $\mathsf{answer}$.

\begin{center}
\begin{tikzpicture}[->,>=stealth',
% state style
state/.style={shape=circle,draw,inner sep=1mm,outer sep=0.8mm, minimum size=.8cm,},
branch/.style={shape=circle,draw,inner sep=0.5mm,thick,outer sep=0.8mm},
multiline/.style={text width=5mm,text centered},
% target state
target/.style={double distance=0.2mm},
% probability style
transition/.style={rounded corners},
rate/.style={auto,color=black},
external/.style={rate,color=red},
tau/.style={rate,color=blue},
prob/.style={rate,color=black},
]

%%%%%%%%%%%%%%%%%%%%%%%%%%%%%%%%%%%%%%%%%%%%%%%%%%%%%
% STATES

\node (s) [state] {x};
\node (t) at (2,0) [state,multiline] {y};
\node (u) at (4,0) [state,multiline] {z};
%%%%%%%%%%%%%%%%%%%%%%%%%%%%%%%%%%%%%%%%%%%%%%%%%%%%%
% TRANSITIONS
\path ($(s.west)+(-0.5,0)$) 		edge [transition]		(s);
\path (s) edge node[above] {$\mathsf{ask}$} (t);
\path (t) edge node[above] {$\kappa$} (u);
\path (u) edge [bend left] node[below] {$\mathsf{answer}$} (s);
\end{tikzpicture}
\end{center}

The product is then the following IMC. Observe that after each $\mathsf{ask}$ we reach the state $(y,0)$ with non-zero probability, from where a $\nowtran$ transition leads to a deadlock state $(y,r)$. Since this is an external action, some environments conforming to the specification can ensure that the system eventually ends up almost surely in a deadlock (revealing this error in the implementation of the server).

\begin{center}
\begin{tikzpicture}[->,>=stealth',
% state style
state/.style={shape=circle,draw,inner sep=1mm,outer sep=0.8mm, minimum size=1.2cm,},
branch/.style={shape=circle,draw,inner sep=0.5mm,thick,outer sep=0.8mm},
multiline/.style={text width=5mm,text centered},
% target state
target/.style={double distance=0.2mm},
% probability style
transition/.style={rounded corners},
rate/.style={auto,color=black},
external/.style={rate,color=red},
tau/.style={rate,color=blue},
prob/.style={rate,color=black},
]
%%%%%%%%%%%%%%%%%%%%%%%%%%%%%%%%%%%%%%%%%%%%%%%%%%%%%
% STATES
\begin{scope}[xscale=1.1]
\node (xt) at (2,-4) [state,initial] {x,r};
\node (yt) at (10,0) [state] {y,r};
\node (zt) at (10,-2) [state] {z,r};
\node (ysink) at (8,0) [state] {y,0};
\node (yaux) at (6,0) [state] {y,2};
\node (ys) at (4,0) [state] {y,q};
\node (xsink) at (8,-4) [state] {x,0};
\node (xaux) at (6,-4) [state] {x,2};
\node (xs) at (4,-4) [state] {x,q};
\node (zsink) at (8,-2) [state] {z,0};
\node (zaux) at (6,-2) [state] {z,2};
\node (zs) at (4,-2) [state] {z,q};
\end{scope}
%%%%%%%%%%%%%%%%%%%%%%%%%%%%%%%%%%%%%%%%%%%%%%%%%%%%%
% TRANSITIONS
\path ($(xt.west)+(-0.4,0)$) 		edge [transition]		(xt);
\path (xt) edge[] node[left]{$\mathsf{ask}$}   (ys);
\path (ys) edge [] node[right] {$\kappa$} (zs);
\path (yaux) edge [] node[right] {$\kappa$} (zaux);
\path (ysink) edge [] node[right] {$\kappa$} (zsink);
\path (zs) edge [] node[right] {$\tau$} (xs);
\path (zaux) edge [] node[right] {$\tau$} (xaux);
\path (zsink) edge [] node[right] {$\tau$} (xsink);
\path (ys) edge [] node[above] {$\lambda$} (yaux);
\path (yaux) edge [] node[above] {$\lambda$} (ysink);
\path (ysink) edge  node[below] {$\nowtran$} (yt);
\path (xs) edge [] node[above] {$\lambda$} (xaux);
\path (zs) edge [] node[above] {$\lambda$} (zaux);
\path (zaux) edge [] node[above] {$\lambda$} (zsink);
\path (zsink) edge  node[below] {$\nowtran$} (zt);
\path (xaux) edge [] node[above] {$\lambda$} (xsink);
\path (xsink) edge [bend left] node[above] {$\nowtran$} (xt);
\end{tikzpicture}
\end{center}
\end{example}

%\newpage

\subsection{Product: Hyper-Erlang Phase-Types}\label{app:hypererlang}

%\noindent {\bf Hyper-Erlang approximations}\todo{mozna i formalni popis h-E}
%
Recall that in $\spec_i$ each distribution $d$ is replaced with its hyper-Erlang phase-type approximation $d_i$ with $i$ branches of lengths 1 to $i$ and rates $\sqrt i$ (and rate $2^{2^i}$ in the initial state). The only degrees of freedom in the approximation for a fixed $i$ are thus the initial probabilities leading to the branches fo lengths $1$ to $i$. For concreteness, we pick a distribution that is lexicographically smallest (w.r.t.\ the order given by the lengths of branches) such that the resulting cumulative distribution function of time when sink is reached is still pointwise greater than or equal to that of $d$ for the case $\geq d$, and lexicographically largest so that the cdf is pointwise smaller or equal.

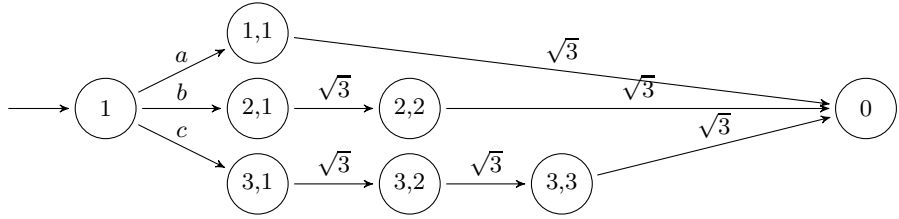
\begin{figure}
\centering
 \begin{tikzpicture}[x=2cm,->,>=stealth',
% state style
state/.style={shape=circle,draw,inner sep=1mm,outer sep=0.8mm, minimum size=.8cm,},
branch/.style={shape=circle,draw,inner sep=0.5mm,thick,outer sep=0.8mm},
multiline/.style={text width=5mm,text centered},
% target state
target/.style={double distance=0.2mm},
% probability style
transition/.style={rounded corners},
rate/.style={auto,color=black},
external/.style={rate,color=red},
tau/.style={rate,color=blue},
prob/.style={rate,color=black},
]
%%%%%%%%%%%%%%%%%%%%%%%%%%%%%%%%%%%%%%%%%%%%%%%%%%%%%
% STATES
\node(1)at(0,1)[state]{1}; 
\node(11)at(1,2)[state]{1,1}; 
\node(21)at(1,1)[state]{2,1}; 
\node(22)at(2,1)[state]{2,2}; 
\node(31)at(1,0)[state]{3,1}; 
\node(32)at(2,0)[state]{3,2}; 
\node(33)at(3,0)[state]{3,3}; 
\node(0)at(5,1)[state]{0}; 

%%%%%%%%%%%%%%%%%%%%%%%%%%%%%%%%%%%%%%%%%%%%%%%%%%%%%
% TRANSITIONS
\path ($(1.west)+(-0.4,0)$) 		edge [transition]		(1);
\path (1) edge[] node[above]{$a$}   (11);
\path (1) edge[] node[above]{$b$}   (21);
\path (1) edge[] node[above]{$c$}   (31);
\path (21) edge[] node[above]{$\sqrt 3$}   (22);
\path (31) edge[] node[above]{$\sqrt 3$}   (32);
\path (32) edge[] node[above]{$\sqrt 3$}   (33);
\path (11) edge[] node[above]{$\sqrt 3$}   (0);
\path (22) edge[] node[above]{$\sqrt 3$}   (0);
\path (33) edge[] node[above]{$\sqrt 3$}   (0);
\end{tikzpicture}

\caption{Hyper-Erlang phase-type for $i=3$ expressing approximately $\frac{a}{256}\cdot Er(1,\sqrt 3)+\frac{b}{256}\cdot Er(2,\sqrt 3)+\frac{c}{256}\cdot Er(3,\sqrt 3)$ with $a+b+c=256$}
\end{figure}

%\newpage

% \subsection{Conjunctions of Continuous-Time Constraints}\label{app:conjunctions}

%\newpage

\subsection{Transitions of Commit}\label{app:commit}

For each commitment $A\subseteq\actnotau$ there is an incoming internal transition from $\commitstate$, an outgoing $\committran$ transition to $\nowstate$, and an outgoing external $a$-transition to $\commitstate$ for each $a\in A$. Furthermore, from $\nowstate$ there is an internal and a $\nowtran$ transition to $\commitstate$:

\begin{itemize}
 \item $\commitstate\acttran{\tau}A$,
 \item $A\acttran{a}\commitstate$, for each $a\in A$,
 \item $A\acttranlong{\committran}\nowstate$,
 \item $\nowstate\acttran{\tau}\commitstate$ and $\nowstate\acttranlong{\nowtran}\commitstate$.
\end{itemize}

\newpage

\section{Proof of Theorem~\ref{thm:product}}\label{app:product}

\begin{reftheorem}{thm:product}
For every IMC $\imc$ and MCA $\spec$, $\displaystyle v_{\spec}(\imc)=\lim_{i\to\infty}v_{\mathit{product}}(\imc\times\spec_i)$, i.e.
\begin{align*}
\sup_{\sigma\in\scheduler(\imc)} \inf_{\substack{\env\in \ENV: \env\models\spec\\
      \pi\in\scheduler_\sigma(\closed)}}
  \probm^{\pi}_{\closed}\big[\reach^{\leq T}\goal_\env\big]
 =
\lim_{i\to\infty}
\sup_{\sigma\in\scheduler(\imc)} \inf_{\substack{\env\in \ENV'\\ \pi\in\scheduler_\sigma(\closedspeci{i})}}
  \probm^{\pi}_{\closedspeci{i}}\big[\reach^{\leq T}\goal_\env\big]
\end{align*}
\end{reftheorem}

\noindent {\bf Proof idea}

% \begin{proof}[Sketch of proof of Theorem~\ref{thm:product}] \todo{to appendix}
``$\leq$'': Given an environment $\env$ over $\act$ and a scheduler $\pi$, we construct $\env'$ satisfying $\spec$ that will ensure the same value with the same $\pi$. Intuitively, $\env'$ is a composition of $\env$ and $\spec$ such that when composed with $\imc$, we obtain $\closedspec$. Formally, this is exactly $\env':=\env\parallel_{\actnotau}\overline{\spec}$ where all ``barred'' transitions are renamed to unbarred afterwards.

%$\parallel_{\actnotau\cup\overline{\actnotau}}^{\textbf{PC7}}$ is the parallel composition with one additional axiom:
% \begin{description}
%  \item[\textbf{PC7}] $s_1\acttran{a}s'_1$ and $s_2\acttran{\bar{a}}s'_2$ implies
%     $(s_1,s_2)\acttran{\tau}(s'_1,s'_2)$,
% \end{description}
 
``$\geq$'': Given an environment $\env$ satisfying $\spec$, we construct a sequence of environments $\env_i$ 
% that when composed with $\imc\times\spec$ 
to be composed with $\product_i$ that
monitor the changes of $\spec$ and behave in such a way that together with $\imc\times\spec$ simulate the original $\env$. The hyper-Erlang form of the phase-type allows for arbitrary precise monitoring. Technically, the environment $\env_i$ takes with very high frequency transitions to a special new state, where it checks the progress of $\spec$ and simulates the corresponding behaviour of $\env$.
%\todo{Krc: I don't like the second part of the sketch, if there is time I will rewrite it}
\qed
% \end{proof}

\bigskip\bigskip

\noindent {\bf Proof}

%\begin{proof}
``$\leq$'':

We show the inequality holds for any $i$.
Given $i\in\Nset, \sigma\in\scheduler(\imc), \env\in\ENV',\pi\in\scheduler_\sigma(\closedspeci{i})$, we construct $\env'$ such that $$\imc|\env'=\closedspeci{i}$$
and hence $\pi$ is a scheduler over both systems and yields the same value on both systems. 

We construct $\env''=\spec_i\parallel_{\actnotau}\env$ and then $\env'$ by renaming $\bar a$ actions to $a$, for each $a\in\act$.
By case distinction, it is easy to see that the very same transitions are created in $\closed'$ and in $\closedspeci{i}$.

Furthermore, we need to prove that $\env'\models\spec$. We show that 
$$\{((s,e),s)\mid \text{$s$ state of $\spec_i$ corresponding to a state of $\spec$ and $e$ state of $\env'$}\}$$
 is a satisfaction relation. Observe that whenever there is a may transition in the specification, there is a corresponding transition in $\spec_i\parallel_{\actnotau}\env$, and if, moreover, there is a must transition, then there is also an $\bar a$, which is then renamed to $a$. Further, we need to define the variable $\mathit{Stop}$ when in state $(s,e)$ and with $s\timetranlong{\bowtie d}s'$. Since the state space of $\env'$ is a product of state spaces of $\spec_i$ and $\env$, we can define $\mathit{Stop}$ to return 
\begin{itemize}
 \item time when the first component becomes $s'$ on runs that do not leave states of the phase-type corresponding to $s$ meanwhile;
 \item an arbitrary time on the remaining runs, which leave the states of the phase-type before reaching a state with $s'$ in the first compoment, so that the cumulative distribution function of $\mathit{Stop}$ is $\bowtie d$.
\end{itemize}
Observe that such a definition is possible because the cumulative distribution function of $\mathit{Stop}$ conditioned by runs of the first item above satisfies $\bowtie d$.
For $\geq d$, the transition $\nowtran$ is only available after the phase-type has elapsed, which takes at least $d$. Hence before time (distributed by) $d$, $s'$ can only be reached using a transition of the form $s \may{a} s''$ for $s \neq s''$ complying the definition of the semantics. For $\geq d$, the phase-type elapses in time (distributed by) at most $d$ and the sink is identified with $s'$ hence is reached on time (if no non-looping may transition is taken meanwhile) which is again according to Definition~\ref{def:semantics}.

\bigskip

``$\geq$'':

For $\sigma\in\scheduler(\imc), \env\models\spec,\pi\in\scheduler_\sigma(\closed)$ and for each $\varepsilon>0$, we construct a sequence $(\env_i,\pi_i)_{i\in\Nset}$ such that
\begin{align*}
 \probm^{\pi}_{\closed}\big[\reach^{\leq T}\goal_\env\big]
&\geq
\lim_{n\to\infty}\probm_{\closedspeci{i}_i}^{\pi_i}\big[\reach^{\leq T}\goal_{\spec_i,\env}\big] - \varepsilon
% \text { and }\\
% \forall \env\in\ENV \quad \probm^{\pi}_{\closedspec}\big[\reach^{\leq T}\goal_\env\big]
% &=
% \lim_{n\to\infty}\probm_{\imc\times\spec_i\, (\env)}^{\pi}\big[\reach^{\leq T}\goal_\env\big]
\end{align*}%\todo{$\forall\pi\in\scheduler(?)$}

Recall that in $S_i$ each distribution $d$ is replaced by a hyper-Erlang distribution with $i$ branches (of all lengths up to $i$ with rates $\sqrt i$. A branch of length $\ell$ corresponds to a time $\ell/\sqrt i$ it takes to walk through it on average, and for great $i$'s almost precisely by the law of large numbers. The initial branching probabilities correspond to the probability of this time according to the distribution $d$. In the limit, the hyper-Erlang thus corresponds to the pdf of $d$. Indeed, since for $i$ we have branches taking from $1/\sqrt i$ to $\sqrt i$, we cover the whole interval $(0,\infty)$ in the limit.

% Let $\delta>0$ be fixed (its dependence on $\varepsilon$ will be described later). 
We define $\env_i$ by actions $\actnotau\cup\{\nowtran\}$, the state space is $2^{\actnotau} \cup \actnotau\cup\{\nowstate,\commitstate\}\cup\{0,1\}\times\{1,\ldots,i\}$ with $\commitstate$ being the initial state. For each $A\subseteq\actnotau$, there are transitions
\begin{itemize}
 \item $\commitstate\acttran{\tau}A$,
 \item $A\acttran{a}a$ and $a\acttran{\tau}\commitstate$, for each $a\in A$,
 \item $A\probtran{2^i}(0,0)$, and $A\probtran{2^i}(1,0)$,
 \item $(b,j)\probtran{2^i}(0,j+1)$, and $(b,j)\probtran{2^i}(1,j+1)$ for all $j<i$ and $b\in\{0,1\}$,
 \item $(b,i)\probtran{2^i}\nowstate$ for $b\in\{0,1\}$,
 \item $\nowstate\acttranlong{\nowtran}\commitstate$ and $\nowstate\acttran{\tau}\commitstate$.
\end{itemize}

In the state $\commitstate$, $\pi$ can perform any sequence of external transitions changing its commitment after each of them (see Section~\ref{sec:ceg}); or a Markovian transition occurs after which a sequence of $i$ random bits is generated; afterwards, $\pi$ returns back to $\commitstate$ (possibly synchronizing over $\nowtran$).

Intuitively, the scheduler $\pi_i\in\scheduler_\sigma(\closedspeci{i}_i)$ simulates behaviour of $\closed$ in such a way that it is never limited by $\spec_i$ (as $\env$ anyway satisfies $\spec$, there is no reason for further limitations). 
% Roughly speaking, for a timed transition $s \timetranlong{\leq d} s'$ in the specification, runs in $\closed$ that are quick to reach  
The transitions under $2^i$ create a sequence of random numbers we remember in the current path. These numbers help to identify, which set of runs is $\pi$ now simulating. 
At all times, the current path of $\closedspeci{i}_i$ induces a set of paths of $\closed$.

Each run of $\closedspeci{i}_i$ is divided into \emph{phases}. A phase starts when the current state $(c,s,e)$ is changed to $(c',s',e')$ with $s\neq s'$, i.e.~the specification enters another state. We show how a path $\mypath$ of length $k$ that starts at the beginning of a new phase induces a set of paths $X_k$ in $\closed$. 

With probability $\to 1$ for $i \to \infty$, the first next step is the Markovian transition of rate $2^{2^i}$ in $\spec_i$ leading into one of the Erlang branches and thus determining very precisely (for great $i$'s) how long the phase is going to take. Assume it is of length $\ell$, thus taking time close to $t=\ell/\sqrt i$. 
Further, the choice of the $i$-th branch corresponds to the interval $[x,y)$ (on the $y$-axes) in the cumulative distribution function of the hyper-Erlang phase-type distribution with $x = \sum_{j=0}^{i-1}a_j$ and $y = \sum_{j=0}^{i}a_j$ where $a_j$ is the probability of taking the $j$-th branch.

Since $\env\models\spec$, there is a random variable $\mathit{Stop}$ on the runs from $(c,e)$ in $\closed$ satisfying the condition of the definition of MCA semantics. The interval $[x,y)$ on the $y$-axes in the CDF of $\mathit{Stop}$ corresponds to a interval $[u,v)$ of times on the $x$-axes.
As the CDF of $\mathit{Stop}$ is pointwise $\bowtie$ the CDF of the hyper-Erlang, for large enough $i$ the expected value $t$ of time to wait in $\spec_i$ lies outside of the interval $[u,v)$. Hence, when simulating the set of runs $X_1 = \{\rho \mid \mathit{Stop}(\rho) \in [u,v) \}$, the scheduler $\pi_i$ is with probability $\to 1$ for $i \to \infty$ \emph{not limited} by $\spec_i$.

Part of the behavior of runs $X_1$ is determined by the randomness in $\imc$ and $\sigma$, rest of the behavior is determined by the randomness in $\env$ and $\pi$. After each step $k$ of the phase, the simulating strategy $\pi_i$ figures out the currently valid subset $X_k \subseteq X$ that conform with the path $\mypath$ in the phase so far. 
\begin{itemize}
 \item If in the $k$-th step a Markovian or internal transition is taken in $\imc$, $X_k$ is defined as those runs of $X_{k-1}$ with the very same move at the very same time (relative to the start of the phase). 
 \item If in the $k$-th step a Markovian transition within the hyper-Erlang of $\spec_i$ is taken, this move is ignored by $X_k = X_{k-1}$.
 \item If in the $k$-th step the Markovian transition to $\nowstate$ in $\env_i$ is taken, we call this moment a emph{control point}. The strategy $\pi_i$ has a fresh sequence of $i$ random bits. The strategy $\pi_i$ divides the runs of $X_{k-1}$ into $2^i$ sets of equal measure (conditioned by $X_{k-1}$) according to the sequence of synchronization performed in these runs since the last control point and according to the current state. This yields with probability $\to 1$ for $i \to \infty$ only constantly many types  of this discrete behavior, hence the number of sets with more than one type of behavior remains constant. Using the random sequence, one such set is assigned to $X_k$.  The strategy simulates the type of discrete behavior with most measure (conditioned by $X_k$): it performs the sequence of synchronization and moves into a commit according to the current state in $X_k$.
\end{itemize}

% Let us consider the set of runs $\rho$ from $(c,e)$ ending at time $\mathit{Stop}(\rho)$. From these we consider those where $\mathit{Stop}(\rho)\in[t-\delta,t+\delta]$. These correspond to the current Erlang distribution and we will 
% simulate 
% these. The string $p_1\cdots p_i$ of random numbers collected on the way from a commitment state to $\nowstate$ and recorded in the path decides whether a Markovian transition of $\env$ in the simulated path of $\closed$ should be taken (and if so, then where). Then we take the $\tau$ transition to $\mathit{commit}$ and to the commitment $A$, which is the set of actions currently available in $e$, where $(c,e)$ is the last vertex of the currently induced path of $\closed$. While waiting for any of the transition under $2^i$ to happen, a synchronization with $\imc$ possibly followed by internal transitions of $\imc$ may happen, which further updates the path. 
After a phase is finished, i.e.~when $\spec_i$ takes the $\ell$th exponential transition, the transition $\nowtran$ is taken and a new phase begins.
The overall induced path is just the concatenation of the paths induced by the previous phases and the current one. 

Now we discuss the possible reasons why the behavior in the simulating IMC might differ from the behavior in $X_k$ w.r.t. time-bounded reachability.
\begin{itemize}
 \item If at any point an Erlang branch of a specification finishes in time outside the assumed interval, $\pi_i$ further behaves arbitrarily. However, the measure of these runs tends $\to 0$ for $i\to\infty$ due to the weak law of large numbers. 
 \item Each synchronization occuring at time $w$ between two control points is simulated later -- at the closest control point at time $w' > w$. The behavior of the simulating system may be different from the simulated system if 
\begin{itemize}
\item a Markovian transition in $\imc$ occurs in the interval $[w,w']$. The number of synchronizations that can occur up to time $T$ is bounded by a constant multiple of the number of Markovian transitions that are taken in $\imc$, due to the acyclicity assumption. The sum of lengths of such intervals $[w,w']$ where a Markovian transition causes trouble thus tends to $0$ with probability $\to 1$ for $i\to\infty$. Hence, this is not a problem.
\item The scheduler takes different decisions because the synchronization has been delayed to $w'$. Notice that this occurs only if $w < n\delta < w'$ for some $n\in\Nset$ (due to our assumption on the set of strategies of $\playercon$). As the points $w$ where synchronization occurs are randomly generated by exponential transitions of $\imc$ or $\env$ and the length of the interval $[w,w']$ tends to $0$, the probability of this behavior also tends to $0$ as $i \to \infty$.
\end{itemize}
\end{itemize}
% 
% 
% Hence the limiting environment simulates $\closed$ up to an error dependent on $\delta$\todo{ten interval musi byt vetsi, a tedy se musi prekryvat. co to dela?}, say at most $\varepsilon_\delta$. Clearly, $\varepsilon_\delta\to0$ for $\delta\to0$.\todo{proboha proc?:-)} Therefore, for every $\varepsilon$ there is a smaller $\varepsilon_\delta$ for some $\delta$ and we choose this $\delta$.
%
% It remains to prove that the limiting $\spec_i$ has the same behaviour as $\spec$. Since the difference is only in the expression of the phase-type distribution as a hyper-Erlang approximation, in the limit they are the same.\todo{$\env$ ma vic informace s jemnejsim hyper-Erlangem nez s divokou malou phase type}
\qed

\newpage

%\section{Section~\ref{sec:ceg}%: Proof of Theorem~\ref{thm:ceg-ith} and Theorem~\ref{thm:ceg-direct}
%}\label{app:ceg}

\section{Proof of Theorem~\ref{thm:ceg-ith}}

\begin{reftheorem}{thm:ceg-ith}
For every IMC $\imc$, MCA $\spec$, $i\in\Nset$, we have $v_{\game_i}=v_{\mathit{product}}(\imc\times\spec_i)$, i.e.
\begin{align*}
\sup_{\sigma \in \Sigma} \inf_{\pi \in \Pi} \probm^{\sigma, \pi}_{\game_i}\big[\reach^{\leq T}\goal\big]
 =
\sup_{\sigma\in\scheduler(\imc)} \inf_{\substack{\env\in \ENV'\\ \pi\in\scheduler_\sigma(\closedspeci{i})}}
  \probm^{\pi}_{\closedspeci{i}}\big[\reach^{\leq T}\goal_\env\big]
\end{align*}
\end{reftheorem}
\bigskip

\noindent {\bf Proof idea}

%\begin{proof}[Proof Idea]\todo{to appendix}
``$\leq$'': We can simulate every $\env$ and scheduler $\pi$ by a strategy of $\playerenv$. The random waiting of $\env$ determined by occurrence of Markovian transitions can be simulated by $\playerenv$ by choosing the delays randomly according to the exponential distribution with the respective rate. Further, in each state of $\env$ only some external actions are available and $\playerenv$ simulates this by changing the commitment to exactly this set of actions.

``$\geq$'': Every strategy of $\playerenv$ can be (approximately) implemented using a suitable environment $\env$ together with a scheduler $\pi$. We need to simulate the discrete delays chosen by $\playerenv$ using random delays available in $\env$. A delay $t$ is simulated by many repetitions of a special Markovian transition with a very fast rate $\lambda$. After the total time adds up to at least $t$, $\pi$ stops repeating it and continues simulating the discrete transitions of $\playerenv$. We get the result by taking $\lambda \to \infty$.
\qed
%\end{proof}

\bigskip

\noindent {\bf Proof}

%\todo{By Lemma~\ref{lem:ce-pure} we can slightly abuse the notation and for the rest of the appendix denote by $\Sigma$ and $\Pi$ the set of \emph{deterministic} strategies.}

``$\leq$'':
%Firstly, we prove the inequality $(*)\geq(**)$. 

This amounts to showing that an arbitrary environment $\env$ can be ``simulated'' by \playerenvc in the CE game. Formally, it is sufficient to prove 
\begin{multline}
\forall\sigma\in\Sigma\ \exists\sigma'\in\scheduler(\imc)\ \forall \env\in\ENV'\ \forall\pi\in\scheduler_{\sigma'}(\closedspeci{i})\ \exists \pi_\env\in\Pi: \\
\probm^{\sigma,\pi_\env}_{\game_i}\big[\reach^{\leq T}\goal\big]
\leq
\probm^{\pi}_{\closedspeci{i}}\big[\reach^{\leq T}\goal\big]
 \tag{$\heartsuit$}
\end{multline}
Note that every strategy $\sigma$ of \playerconc is actually also a scheduler for $\imc$ (and vice versa).
%\todo{strategy can react on sequence of E-waitings, scheduler doesn't see the intermediate steps - define history of CE or lemma that it doe not help}
Thus we set $\sigma':={\sigma}$ and then for every environment $\env$ and its scheduler $\pi$, we give a strategy $\pi_\env$ of \playerenvc that makes ``equivalent'' decisions as $\pi$ in the ``equivalent'' path. We then prove that $\pi_\env$ guarantees the same value as $\pi$ of $\env$ does.

The idea of the simulation is the following. Whenever $\pi$ synchronizes on an external action $a$, $\pi_\env$ chooses $a$. Whenever $\env$ waits with a rate $\lambda$, $\pi_\env$ chooses to wait, too. Here we use randomizing strategies so that we can combine all waiting times $t\in\Rsetp$ with the exponential distribution with rate $\lambda$. In other words, $\pi_\env$ simulates the random waiting of $\env$ using randomizing. 

% One can view this kind of randomizing strategies as strategies where the co-domain contains not only $\Rsetp$ but also exponential distributions and we will use this notation. Similarly, the co-domain contains all distributions over $\{\checkmark\}\cup G_i$ instead of just Dirac distributions as defined before. Altogerther, $$\Pi':\histories(\game_i)\to\dist(\{\checkmark\}\cup G_i)\cup\dist(\Rsetpo)$$ 
% and Lemma~\ref{lem:ce-pure} then shows that such randomizing strategies can be derandomized into strategies in $\Pi$ as introduced in the definition of $\game_i$.\todo{stale plati?}

Let thus $\sigma,\env,\pi$ be arbitrary but fixed. In the following, we define $\pi_\env$ through a function $\wc:\histories(\game_i)\to\paths(\closedspeci{i})$ transforming the paths of the CE game into paths of $\closedspeci{i}$, which $\pi_\env$ uses to ask what $\pi$ would do.
%For a path $\mypath\in\paths$, denote $\myvalue(p):=\sup_{\sigma\in\scheduler(\imc)} \inf_{\substack{\env\in \ENV\\ \pi\in\scheduler(\closed)}}  \probm^{\pi[\sigma]}_{\closed,p}(\reach^{\leq T}\goal_\env) $. 
Since $\env$ can have probabilistic branching and $\pi$ can be randomizing, we need to pick one of possibly more paths of $\closedspeci{i}$ corresponding to the history of the simulating play in $\game_i$. We will pick one where the future chances are the best for the environment, i.e.~worst for the time bounded reachability, hence $\wc$ for the ``worst case''. 

The functions $\pi_\env$ and $\wc$ are defined inductively and only on the reachable histories; one can define them arbitrarily elsewhere. We start with $\wc(c_0,q_0,\commitstate):=(c_0,q_0,e_0)$. For history $\history$ ending with some $t(c,q,\bar e)$ (for the initial one-state path $t=0$) with $\wc(\history)$ ending in $(c,q,e_1)$, we first define what $\pi$ does after a (possibly empty) sequence of internal steps in $\env$. Let 
$(c_1,q,e_1),\ldots,(c_n,q,e_n)$ be such maximal sequence  with
$\pi(\wc(\history) t(c,q,e_2)t\cdots t(c,q,e_i))(c,q,e_{i+1})>0$ and $e_{i}\acttran{\tau}e_{i+1}$ for $i<n$ that minimizes
$$\probm^{\pi}_{\closedspeci{i}}\Big[\reach^{\leq T}\goal_\env\ \big|\ \wc(\history)t(c,q,e_2)t\cdots t(c,q,e_n)\Big]$$ 
If $n=1$ then $(c,q,e_2)\cdots (c,q,e_n)$ is empty.
This way, although $\pi$ is randomizing, we choose a single choice in a unique way which, moreover, is the best one for the environment.

Depending on the type of the last state $(c,q,\bar e)$ of $\history$ and the scheduler's decision $\mathit{dist}:=\pi\big(\wc(\history)t (c,q,e_2)t\cdots t(c,q,e_n)\big)$, we define $\pi_\env(\history)$ as follows:

\begin{itemize}
 \item If $(c,q,\bar e)$ is an immediate state and $\bar e=\commitstate$, then
$$\pi_\env(\history):=A\mapsto1 \text{ for $A$ the set of actions available in $e_n$},$$
$\history':=\history\,t\,(c,q,A)$ and we set $\wc(\history'):=\wc(\history)t(c,e_2)t\cdots t(c,e_n)$.
 \item If $(c,q,\bar e)$ is an immediate state and $\bar e\subseteq\actnotau$, then\\
since $\pi$ respects $\sigma$, there is $p\in[0,1]$ such that $\mathit{dist}=p\cdot \sigma(\wc(\history)_\imc)+(1-p)\cdot \mathit{dist'}$, so we set
$$\pi_\env(\history):=(\checkmark\mapsto p)+(1-p)\cdot\mathit{dist'}$$
Now the next state ($\tau$-successor) is chosen randomly. The corresponding $\tau$ transition is either
\begin{enumerate}
 \item a $\tau$ transition of $\imc$, or
 \item a result of composing $a$ of $\imc$ and $\bar a$ of $\spec_i$, or
 \item a result of composing $a$ of $\imc$ and $a$ of $\spec_i$ and $a$ of $\commit$, 
% \item a result of composing $\nowtran$ of $\spec_i$ and $\nowtran$ of $\commit$.
\end{enumerate}
we thus obtain the next state $(c',q',\bar e')$.
and a new history
$\history':=\history\,t\,(c',q',\bar e')$ and we set $\wc(\history'):=\wc(\history)t(c,q,e_2)t\cdots t(c,q,e_n)(c',q',e')$ where $e'=e_n$ in the first two cases and $e_n\acttran{a}e'$ in the third case. 
\begin{enumerate}
 \item[4.] a result of composing $\nowtran$ of $\spec_i$ and $\nowtran$ of $\commit$: we cannot simulate this in $\bar e\subseteq \actnotau$, but since $\env\in\ENV'$, this case can only happen right after a Markovian transition of $\env$ or $\spec_i$. Therefore, we first discuss the corresponding timed states and we deal with this case below.
\end{enumerate}
 \item If $(c,q,\bar e)$ is a timed state and $\bar e\subseteq\actnotau$, then only Markovian transition(s) are enabled and $\mathit{dist}$ is ignored.
\begin{itemize}
 \item If there are no Markovian transitions available in $e_n$, \hfill /* $\env$ is blocked */\\
 we set $$\pi_\env(\history):=(T+1)\mapsto 1$$
the new history is then either longer than $T$ if no Markovian transition from $c$ or $q$ occurs before $T$, or else a Markovian transition occurs after $m$ still before $T$ and we set $\history'=\history\, (t+m)\, (c',q',e_n)$ given by the respective Markovian successor and further $\wc(\history'):=\wc(\history)t(c,q,e_2)t\cdots t(c,q,e_n)(t+m)(c',q',e_n)$.
 \item Else we set \hfill /* $\env$ waits */
$$\pi_\env(\history):=\exponential(\rates(e_n))$$ 
from which the respective delay $d$ is sampled. Then either a Markovian transition of $\imc$ or $\spec$ happens before $d$, in which case $\history'$ and $\wc(\history')$ are defined as in the previous case;
or else pick arbitrary $e'$ with $e_n\mtran e'$ minimizing
$$\probm^{\pi}_{\closedspeci{i}}\Big[\reach^{\leq T}\goal\ \big|\ \wc(\history')\Big]$$ 
where $\history':=\history\, (t+d)(c,q,\nowstate)$ and $\wc(\history'):=\wc(\history)t(c,q,e_2)t \cdots t(c,q,e_n)(t+d)(c,q,e')$.
\end{itemize}
We distinguish three cases of what happens after a timed transition.
\begin{itemize}
 \item If a Markovian transition of $\imc$ wins, we proceed in the standard way.
 \item If the delay of $\playerenv$ wins then $e'=\nowstate$ and the state is thus immediate. Let $p$ be the probability that $\pi(\wc(\history'))$ chooses a transition stemming from $\nowtran$ ($q\acttranlong{\nowtran}q'',\, e'\acttranlong{\nowtran}e''$). We let $\pi_\env$ choose $\nowtran$ also with $p$, and $\tau$ to $\commitstate$ with $1-p$. 

In the former case, the new history is $\history'':=\history' (t+d)(c,q'',\commitstate)$ and $\wc(\history'')=\wc(\history')(t+d)(c,q'',e'')$.

In the latter case, the new history is $\history'':=\history' (t+d)(c,q,\commitstate)$ and $\wc(\history'')=\wc(\history')$.
 \item If a Markovian transition of $\spec$ wins we get to $(c',q',e_n)$ in both cases with $e_n\subseteq\actnotau$. This state is either without available $\tau$ from $\nowtran$, in which case we proceed in the standard way, or with available $\tau$ from $\nowtran$. The latter happens due to reaching sink in the case of $\geq d$ constraint. Indeed, this is the only case, where $\nowtran$ turns from unavailable to available, and note that the preceding state was timed and $\env$ did not change its state, hence $\nowtran$ indeed was not available. 

Let now $p$ be again the probability that $\pi(\wc(\history'))$ chooses a transition stemming from $\nowtran$ ($q\acttranlong{\nowtran}q'',\, e_n\acttranlong{\nowtran}e''$). We let $\pi_\env$ wait with delay $\delta$ with $p$ (and behave in the standard way with the remaining probability, which is possible as $q$ and $q'$ have the same actions available, see the previous paragraph). If we win, we perform the $\nowtran$ and the new history is $\history'':=\history' (t+d)(c,q',\bar e)(t+d+\delta)(c,q',\nowstate)(t+d+\delta)(c,q'',\commitstate)$ and $\wc(\history'')=\wc(\history)(t+d+\delta)(c,q',e_n)(t+d+\delta)(c,q'',e'')$. Thus, we pretend that the Markovian transition of $\spec_i$ took by $\delta$ longer and the $\tau$ from $\nowtran$ was executed immediately. If $\delta$ does not win, we define the behaviour of the environment arbitrarily. Apparently, for $\delta$ much smaller than inverse of any rate and approaching $0$, the probability that $\delta$ wins is high and the difference in the time distribution of $\
spec$ approaches $0$. Therefore, it is sufficient to pick $\delta:=1/2^{2^{2^i}}$ as the fastest rate is $2^{2^i}$ (the initial rate of the hyper-Erlang).
\end{itemize}
\end{itemize}

\begin{lemma}
For every $\sigma\in\scheduler(\imc),\env\in\ENV',\pi\in\scheduler_\sigma(\closedspeci{i})$, we have 
$$\probm^{\pi}_{\closedspeci{i}}\big[\reach^{\leq T}\goal \big]  \geq
\probm^{\sigma,\pi_\env}_{\game_i}\big[\reach^{\leq T}\goal \big]$$
\end{lemma}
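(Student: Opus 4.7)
The plan is to exhibit a pathwise coupling between the probability space of $\game_i$ under $(\sigma,\pi_\env)$ and the probability space of $\closedspeci{i}$ under $\pi$, mediated by the mapping $\wc$ defined in the preceding construction. Because $\pi_\env$ and $\wc$ were defined in lockstep, each case in the definition of $\pi_\env(\history)$ has a matching update of $\wc(\history)$, and what is left is to verify, step by step, that (i) the conditional laws on both sides match on the sample points and (ii) all remaining slack goes in the right direction thanks to the worst-case choice built into $\wc$.

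First I would prove by induction on the length $k$ of a game history $\history$ that the cylinder $\mathrm{Cyl}(\history)$ under $\probm^{\sigma,\pi_\env}_{\game_i}$ has measure at most that of the cylinder $\mathrm{Cyl}(\wc(\history))$ under $\probm^{\pi}_{\closedspeci{i}}$, and that the first component and timestamp of the last state of $\wc(\history)$ agree with those of $\history$. The step case follows the bullets in the construction: for immediate states with $\bar e \subseteq \actnotau$, since $\pi$ respects $\sigma$ we have $\mathit{dist} = p\cdot\sigma(\wc(\history)_\imc) + (1-p)\cdot\mathit{dist}'$, and $\pi_\env$ accepts with precisely the same $p$, so the conditional probability of each $\tau$-successor coincides on both sides; for commit states, $A$ is chosen to be exactly the set of externals enabled at $e_n$, hence subsequent synchronizations carry the same mass; for timed states with an available Markovian transition, $\pi_\env$'s randomised delay $\exponential(\rates(e_n))$ reproduces the exponential clock governing $\env$'s next Markovian move in the product, and its race against Markovian transitions of $\imc$ (shared by both systems) therefore has the same joint law.

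The main obstacle, and where the inequality rather than equality is generated, lies in two places. First, whenever $\pi$ is genuinely randomising or $\env$ has several $\tau$-successors consistent with the observed game history, $\wc$ selects the continuation minimising the residual conditional reachability $\probm^{\pi}_{\closedspeci{i}}[\reach^{\leq T}\goal \mid \wc(\history')]$; integrating this worst-case selection yields a value no larger than the unconditional product reachability, which is precisely the direction of the inequality we need. Second, the cosmetic delay $\delta = 1/2^{2^{2^i}}$ inserted when simulating $\nowtran$ after a hyper-Erlang sink perturbs timestamps by at most $\delta$ per phase; because $\delta$ is smaller than the inverse of every rate and below the clock resolution fixed in Section~\ref{sec:assume}, the probability of a run changing its qualitative time-bounded reachability status due to this perturbation is absorbed into the worst-case slack, and the argument can alternatively be closed by letting $\delta \downarrow 0$ after the inductive step.

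Finally, summing the inductive bound over all length-$k$ cylinders whose first component reaches $\goal$ before time $T$ and passing to the limit $k\to\infty$ — using the cycle-contains-a-Markovian-transition assumption, which rules out Zeno behaviour on both sides, together with dominated convergence on the standard cylinder $\sigma$-algebra of~\cite{Neu10} — yields
$$\probm^{\sigma,\pi_\env}_{\game_i}\big[\reach^{\leq T}\goal\big] \;\leq\; \probm^{\pi}_{\closedspeci{i}}\big[\reach^{\leq T}\goal\big],$$
as claimed. The anticipated technical sticking point is ensuring that $\wc$ is measurable and that the worst-case minimiser is attained (or approximated to arbitrary precision), which I would handle by a standard selection argument within the Borel structure on $\setofruns$.
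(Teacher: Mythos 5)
Your inductive invariant is the wrong one, and the induction as stated fails at exactly the points that matter. You claim that for every game history $\history$ the cylinder of $\history$ under $\probm^{\sigma,\pi_\env}_{\game_i}$ has measure at most that of the cylinder of $\wc(\history)$ under $\probm^{\pi}_{\closedspeci{i}}$. Consider a point where $\env$ has a Markovian (or $\pi$ a randomised) choice among several continuations that are indistinguishable in the game history: the product splits its mass among them, while $\pi_\env$, by construction, routes \emph{all} of the game's mass deterministically to the single worst-case continuation selected by $\wc$. Hence $\probm^{\sigma,\pi_\env}_{\game_i}[\mathrm{Cyl}(\history)]$ can strictly exceed $\probm^{\pi}_{\closedspeci{i}}[\mathrm{Cyl}(\wc(\history))]$ --- the inequality you want to propagate points the wrong way. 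The worst-case selection built into $\wc$ does not shrink cylinder masses; it concentrates mass on low-\emph{value} continuations, which is a statement about the residual reachability function, not about the measures of individual cylinders. Your second paragraph gestures at the correct mechanism (``minimum versus affine combination''), but that observation cannot be used to repair a cylinder-level domination; it has to replace it.

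The paper's proof makes this precise by changing what is compared. It first pushes $\probm^{\pi}_{\closedspeci{i}}$ forward onto game histories via a projection $\mathrm{RealStep}$ (so both measures live on the same space), then defines, for each step count $j$, the transient distributions $\vec{p_j}$ (product-induced) and $\vec{q_j}$ (game under $\pi_\env$) together with the residual value $\vec{r_j}(\history)=\probm^{\env,\pi}_{\game_i}[\reach^{\leq T}\goal\mid\history]$, and shows by induction that $\int\vec{r_j}\,\de{\vec{p_j}}\geq\int\vec{r_j}\,\de{\vec{q_j}}$: every case preserves equality except environment branching and randomisation of $\pi$, where the minimum of the branch values is at most their affine combination. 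The final passage to the limit $j\to\infty$ via acyclicity is the one part of your argument that survives unchanged. To salvage your write-up you would need to restate the induction hypothesis as an inequality between expectations of the residual value under the two step-$j$ distributions (or equivalently between conditional reachability values), not between cylinder measures.
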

\begin{proof}[Idea]
If there are no probabilistic choices in $\env$ and $\pi$ is deterministic then the values are the same. Indeed, the only difference of the simulating probabilistic space to the original one is that whenever there is a probabilistic choice, the environment is always ``lucky''. Since the minimum of elements is never greater than their affine combination, the result follows.
\end{proof}

\begin{proof}
Formally, we proceed as follows. 

Firstly, we define a measure $\probm^{\env,\pi}_{\game}$ on infinite histories of $\game$ directly induced by $\env$ and $\pi$. As opposed to $\pi_\env$, the probabilistic choices of the environment are reflected here. Let $\mathrm{RealStep}:\paths(\closedspeci{i})\to\histories(\game_i)$ project all internal transitions of the environment out, i.e.\ it maps a run $(c_0,q_0,e_0)t_0(c_1,q_1,e_1)t_1\cdots$ to a run $c_0\,t_0\cdots$ where each $c_i\,t_i$ is omitted whenever $c_i=c_{i-1}$ and $e_{i-1}\acttran{\tau}e_i$. Then we define $\probm^{\env,\pi}_{\game_i}:=\probm^{\pi}_{\closedspeci{i}}\circ\mathrm{RealStep}^{-1}$. Clearly, as $\tau$ transitions take no time we have\footnote{Note that $\env$ and $\pi[\sigma]$ do \emph{not} induce any strategy that would copy the IMC behavior completely. For this, one would need the notion of a strategy with a \emph{stochastic update}, i.e.\ a strategy that can change its ``state'' randomly and thus model where in $\env$ the original path currently is.}
$$\probm^{\pi}_{\closedspeci{i}}[\reach^{\leq T}\goal] = \probm^{\env,\pi}_{\game_i}[\reach^{\leq T}\goal]$$

Secondly, for $j\in\Nseto$, consider the set $\histories_j\subseteq\histories(\game_i)$ of histories of length $j$, i.e.\ after the $j$th step is taken. Let $\vec{p_j}\in\dist(\histories_j)$ denote the transient probability measure according to $\probm^{\env,\pi}_{\game_i}$ after $j$ steps. Further let $\vec{r_j}:\histories_j\to[0,1]$ be given by $\vec{r}_j(\history)=\probm^{\env,\pi}_{\game_i}[\reach^{\leq T}\goal\mid \history]$.
Clearly, as states of $\goal$ are absorbing we have 
$$\probm^{\env,\pi}_{\game_i}[\reach^{\leq T}\goal]=\int\vec{r_j}\de{\vec{p_j}}$$

Thirdly, let $\vec{q_j}\in\dist(\histories_j)$ be the transient probability measure according to $\probm^{\sigma,\pi_\env}_{\game_i}$ after the $i$th step is taken. A simple induction with case distinction from the definition of $\pi_\env$ reveals that 
$$\int\vec{r_j}\de{\vec{p_j}}\geq \int\vec{r_j}\de{\vec{q_j}}$$
Indeed, all but two cases preserve equality. The interesting cases are the Markovian transition in $\env$ and the randomized choice of $\pi$. As the minimum of elements is never greater than their affine combination, we obtain the desired inequality.

Finally, it remains to prove that 
$$\lim_{j\to\infty} \int\vec{r_j}\de{\vec{q_j}}=\probm^{\sigma,\pi_\env}_{\game_i}[\reach^{\leq T}\goal]$$
i.e.\ that the gains of the gradual replacements of the strategy converge to the gain of the limiting strategy. This follows from $\vec{r_j}(\history)$ being zero or one for each path $\history$ longer than $T$ only depending on the state at time $T$, and from the fact that the set of runs that never exceed $T$ is of zero measure due to the acyclicity assumption.
\qed
\end{proof}

The previous lemma proves $(\heartsuit)$ by which the proof of $v_{\game_i}\leq v_{\mathit{product}}(\product_i)$ is concluded.

\bigskip
\bigskip

%Secondly, we prove the inequality $(**)\geq(*)$. 
``$\geq$:

We can divide the proof in two steps: 
% (a) we show that in the CE game \emph{grid} strategies are sufficient for both players; 
\begin{enumerate}
 \item we show that \emph{exponential} strategies are sufficient for \playerenvc; 
 \item any exponential strategy of \playerenvc can be simulated by a specific environment and its scheduler.
\end{enumerate}

 %Let us first define the necessary notions.
%
% We say that a strategy is a \emph{grid} strategy on a grid of size $\delta > 0$ if 
% \begin{itemize}
%  \item it decides only according to the current state
% %, the previous state, 
% and the integer $k$ such that the total time of the history belongs to the interval $[k\delta,(k+1)\delta)$, i.e. for any two histories $\history = s_0 t_0 \ldots t_{n-1} s_n$ and $\history' = s'_0 t'_0 \ldots t'_{m-1} s'_m$ with $s_n = s'_m$
% %, $s_{n-1} = s'_{n-1}$, 
% and $\total{\history}, \total{\history'} \in [k\delta,(k+1)\delta)$ for some $k \in \Nseto$ we have $\sigma(\history) = \sigma(\history')$; and
%  \item it is either a strategy of \playerconc or it chooses waiting times only on the $\delta$-grid, i.e. for any history $\history = s_0 t_0 \ldots t_n s_n$ the strategy $\sigma$ either chooses an action or a time step $t_{n+1}$ such that $\total{\history} + t_{n+1} = k \delta$ for some $k \in \Nset$.
% \end{itemize}
%
% Furthermore, for 
For $\lambda \in \Rset$ we say that a strategy $\pi$ of \playerenvc is \emph{exponential} with rate $\lambda$ if 
%\begin{itemize}
% \item 
it chooses to wait solely with the exponential distribution with rate $\lambda$.
 %\item for any history $\history = s_0 t_0 \ldots t_n s_n$ with $n > 0$, $t_n \neq t_{n-1} 0$, and $s_n \neq s_{n-1}$ it chooses to wait.
%  \item for other histories it behaves as a grid strategy for some $\delta > 0$, i.e. we have $\sigma(\history) = \sigma(\history')$ for any two histories $\history = s_0 t_0 \ldots t_{n-1} s_n$ and $\history' = s'_0 t'_0 \ldots t'_{m-1} s'_m$ with 
% \begin{itemize}
%  \item either $n = m = 0$ or $t_{n-1} = t'_{m-1} = 0$ or $s_n = s_{n-1} = s'_m = s'_{m-1}$, and \\ 
% \phantom{fdfd} \hfill /* the conditions above negated */ 
%  \item $s_n = s'_m$ and $\total{\history}, \total{\history'} \in [k\delta,(k+1)\delta)$ for some $k \in \Nseto$. \hfill /* grid strategy */
% \end{itemize}
%\end{itemize}

%Intuitively, a $\lambda$-exponential strategy cannot take an action right after a Markovian transition (resulting in $s_n \neq s_{n-1}$ and $t_{n-1} > 0$). 
% 
% The set of all grid strategies is denoted by $\Sigma_\grid$ and $\Pi_\grid$, 
The set of all $\lambda$-exponential strategies is denoted by $\Pi_\lambda$.

Exponential strategies for \playerenvc are sufficient:
\begin{lemma}\label{lem:exponential-strategies-suffice}
% against grid strategies, i.e. for any grid strategy $\sigma$ we have
For any strategy $\sigma$ we have 
$$\inf_{\pi \in \Pi} \probm^{\sigma, \pi}_{\game_i}[\reach^{\leq T}\goal]
\; = \; 
\inf_{\substack{\lambda \in \Rsetp\\ \pi_\lambda \in \Pi_\lambda}} \probm^{\sigma, \pi_\lambda}_{\game_i}[\reach^{\leq T}\goal] $$
\end{lemma}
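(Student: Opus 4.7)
The direction $\inf_{\pi \in \Pi} \leq \inf_{\lambda,\, \pi_\lambda \in \Pi_\lambda}$ is immediate since $\Pi_\lambda \subseteq \Pi$. For the converse, given $\sigma\in\Sigma$, $\pi\in\Pi$, and $\varepsilon>0$, the plan is to exhibit $\lambda \in \Rsetp$ and $\pi_\lambda\in\Pi_\lambda$ with
\begin{align*}
\probm^{\sigma,\pi_\lambda}_{\game_i}\big[\reach^{\leq T}\goal\big] \;\leq\; \probm^{\sigma,\pi}_{\game_i}\big[\reach^{\leq T}\goal\big] + \varepsilon.
\end{align*}
The guiding idea is that a deterministic delay $d>0$ chosen by $\pi$ in a timed state $(c,q,A)$ with total outgoing Markovian rate $R$ is indistinguishable, as $\lambda\to\infty$, from a geometric sequence of $\exp(\lambda)$-waits accumulated until the elapsed time first exceeds $d$.

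The mechanism that lets $\pi_\lambda$ ``keep waiting'' without altering observables is the $\commit$ gadget: whenever an $\exp(\lambda)$-wait wins a round and the run reaches $(c,q,\nowstate)$, $\pi_\lambda$ takes the $\playerenv$-controlled $\tau$-loop $\nowstate\to\commitstate\to A$ back to the original timed state, which changes neither $\imc$ nor $\spec_i$ nor the commitment $A$. Once the cumulative time spent in the current timed state reaches $d$, $\pi_\lambda$ instead takes the $\nowtran$-successor of $\nowstate$, faithfully mimicking $\pi$'s behaviour at the expiration of its chosen delay. A Markovian transition firing during a wait is handled exactly as $\pi$ would handle it, and history-dependent or randomised choices of $\pi$ (including sampling $d$ from a continuous distribution, which we simulate by a per-round stopping decision calibrated to the hazard of that distribution) are copied verbatim. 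Since $\sigma$ observes only the $\imc$-projection of histories, it is blind to the extra $\commit$ loops and takes the same decisions in both scenarios, so $\pi_\lambda$ indeed lies in the admissible set $\Pi_\lambda$.

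For the error analysis, a competing-exponentials computation shows that per simulated delay the probability of ``delay wins'' is $e^{-Rd}\cdot\lambda/(R+\lambda)\to e^{-Rd}$, the overshoot past $d$ is of order $O(1/\lambda)$, and the conditional distribution of Markovian firings tends to $\exp(R)$ conditioned on firing before $d$; the per-delay total variation is thus $O(1/\lambda)$. The acyclicity/non-Zeno assumption caps the expected number of real events in $[0,T]$ by a constant depending only on the maximal rate and on $T$, so the accumulated error on the joint distribution of arena trajectories up to time $T$ is $O(1/\lambda)$, and choosing $\lambda$ sufficiently large drives it below $\varepsilon$. The main obstacle — and the part of the argument that requires care — is propagating the local total-variation bound uniformly through the deep, history-dependent game tree while preserving $\sigma$'s fine timing observations; what makes this work is precisely the observation-invariance of the $\commit$ loop together with the finite expected count of real events, which together ensure that the per-round errors are summable.
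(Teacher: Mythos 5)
Your simulation mechanism is essentially the paper's: replace each intended delay $d$ by a geometric accumulation of $\exponential(\lambda)$-waits, looping through the $\commit$ gadget ($\nowstate \to \commitstate \to A$) until the accumulated time first exceeds $d$, and only then doing what $\pi$ would do; the easy inclusion $\Pi_\lambda \subseteq \Pi$ and the overshoot computation $e^{-Rd}\lambda/(R+\lambda)$ are fine. (You handle randomised delays by a hazard-calibrated stopping rule where the paper first proves a short claim that Dirac delays suffice for the infimum; both routes work.) The problem is in your error propagation.

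You claim a per-delay total-variation bound of $O(1/\lambda)$ on the joint distribution of arena trajectories. That is false as stated: under $\pi$ the time at which ``delay wins'' is a point mass at $d$, while under $\pi_\lambda$ it is $d$ plus a continuously distributed overshoot, so the total-variation distance between these timing laws is $1$, not $O(1/\lambda)$. The argument cannot be distributional; it must be a pathwise coupling showing that the \emph{reachability outcome} is unchanged even though the timings differ, and that in turn requires that the $O(1/\lambda)$ time shifts do not alter $\sigma$'s decisions. Since $\sigma$ observes the arrival times in the $\imc$-projection of the history, this is exactly where the clock-resolution assumption of Section~\ref{sec:assume} must be invoked: $\sigma$ is constant on each $\delta$-slot, so a shift of order $\delta/\sqrt{\lambda}$ is harmless \emph{unless} it pushes an event across a grid point $k\delta$. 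The paper therefore conditions on a set $X_\lambda$ of runs on which no non-self-loop transition occurs within $\delta/\sqrt{\lambda}$ of any grid point, shows $\probm^{\sigma,\pi_\lambda}_{\game_i}[X_\lambda] \to 1$, and only on $X_\lambda$ concludes that the players' decisions, and hence the outcome, coincide. Your appeal to ``observation-invariance of the $\commit$ loop'' addresses a different issue (that $\sigma$ cannot see the extra loop steps, which is true by the destuttering in the observation map) but does not address the time shifts of the real events, which $\sigma$ does see. Without the $\delta$-resolution assumption and the grid-avoidance event, your step ``$\sigma$ takes the same decisions in both scenarios'' is unjustified.
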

\begin{proof}[Idea]
Intuitively, if $\pi$ chooses to wait for time $t$ and then makes action $a$, the simulating strategy $\pi_\lambda$ repeatedly waits for random time with exponential distribution until the sum of the random waiting times exceeds $t$ and then makes action $a$; the larger the rate $\lambda$, the more precise is this simulation. 
\end{proof}

\begin{proof}
First, we restrict the strategies of $\Pi$ so that on $\Rsetp$ they only pick Dirac distributions, denoted $\Pi'$:
\begin{claim}
$\displaystyle \inf_{\pi \in \Pi'} \probm^{\sigma, \pi}_{\game_i}[\reach^{\leq T}\goal]
\; = \; 
\inf_{\pi \in \Pi} \probm^{\sigma, \pi}_{\game_i}[\reach^{\leq T}\goal] $
\end{claim}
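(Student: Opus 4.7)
Since $\Pi' \subseteq \Pi$, the inequality $\inf_{\pi \in \Pi'} \geq \inf_{\pi \in \Pi}$ is immediate. For the reverse direction, my plan is to fix an arbitrary $\pi \in \Pi$ and $\varepsilon > 0$ and to exhibit $\pi' \in \Pi'$ with $\probm^{\sigma,\pi'}_{\game_i}[\reach^{\leq T}\goal] \leq \probm^{\sigma,\pi}_{\game_i}[\reach^{\leq T}\goal] + \varepsilon$, which is enough for the equality.

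The crucial observation is that the reachability probability is \emph{affine} in the delay distribution picked at any history $h$ whose last state is timed. Writing $\pi^{h,d}$ for the modification of $\pi$ that replaces $\pi(h) = \mu \in \dist(\Rsetp)$ by the Dirac $\delta_d$, disintegrating $\probm^{\sigma,\pi}_{\game_i}$ along the delay drawn at $h$ yields
\begin{equation*}
\probm^{\sigma,\pi}_{\game_i}[\reach^{\leq T}\goal \mid h] \;=\; \int_{\Rsetp} \probm^{\sigma,\pi^{h,d}}_{\game_i}[\reach^{\leq T}\goal \mid h] \, d\mu(d).
\end{equation*}
Consequently, for each such $h$ there exists $d^*(h) \in \Rsetp$ whose Dirac choice is no worse than the randomized choice $\mu$; a measurable selection will be obtained via Kuratowski--Ryll-Nardzewski applied to the multifunction sending $h$ to the set of such $d$. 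Setting $\pi'(h) := \delta_{d^*(h)}$ at all such $h$, and $\pi'(h) := \pi(h)$ elsewhere, defines a Dirac-valued strategy. A backwards induction along the game tree, which under the standing acyclicity-modulo-Markov assumption has an almost surely finite horizon up to time $T$, then propagates the local inequality to the global estimate $\probm^{\sigma,\pi'}_{\game_i} \leq \probm^{\sigma,\pi}_{\game_i}$.

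The main obstacle is ensuring $\pi' \in \Pi'$, i.e.\ that $\pi'$ remains non-Zeno: an unchecked measurable selection could choose delays $d^*(h)$ whose sum along a positive-measure set of runs is finite. I plan to address this by restricting the selection domain at every history $h$ of depth $n$ to $\{d \in \Rsetp : d \geq \eta_n\}$ for a fixed sequence $\eta_n > 0$ with $\sum_n \eta_n = \infty$. Because the conditional reachability value is continuous (in fact locally Lipschitz, using the uniform bound on the total outgoing rate in $\imc$) in the Dirac delay $d$, enlarging the acceptance threshold from $\probm^{\sigma,\pi}[\cdot \mid h]$ to $\probm^{\sigma,\pi}[\cdot \mid h] + \varepsilon_n$ for a summable sequence $\varepsilon_n$ with $\sum_n \varepsilon_n \leq \varepsilon$ keeps the feasible set non-empty. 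The resulting $\pi'$ is non-Zeno and achieves value at most $\varepsilon$ above that of $\pi$; letting $\varepsilon \to 0$ concludes the proof.
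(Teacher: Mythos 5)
Your two directions are set up correctly, and the core of the nontrivial one --- the reachability probability at a timed history is the $\pi(\history)$-average of the values of the Dirac deviations, so some Dirac point is no worse for the minimizer --- is exactly the paper's argument (the paper phrases it via the conditional lower value: $v(\history)=\int v(r)\,d\pi(\history)$, hence by additivity of measure some $r$ does at least as well as the average). Your insistence on a measurable selection is a legitimate refinement the paper silently skips; note though that ``backwards induction'' over a tree that is only \emph{almost surely} finite before $T$ is not literally available --- you need a finite truncation plus a limit, as the paper does in the $(\heartsuit)$ argument.

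The genuine problem is your non-Zeno patch, which is internally inconsistent. You need $\sum_n \eta_n = \infty$ for the chosen delays to diverge, and simultaneously a per-level slack $\varepsilon_n$ with $\sum_n \varepsilon_n \le \varepsilon$. But feasibility of the restricted selection $\{d \ge \eta_n\}$ is obtained from your Lipschitz bound precisely when $\varepsilon_n \ge L\eta_n$, which forces $\sum_n \varepsilon_n \ge L\sum_n \eta_n = \infty$; the two requirements exclude each other. In addition, the claimed (Lipschitz) continuity of $d \mapsto \probm^{\sigma,\pi^{h,d}}_{\game_i}[\reach^{\leq T}\goal \mid \history]$ is not justified for a fixed, merely measurable $\pi$: changing $d$ shifts the elapsed time in every subsequent history, and $\pi$'s later decisions may depend on that time discontinuously. (To be fair, the paper ignores the non-Zeno constraint altogether, so your instinct to address it is sound --- only the execution fails.) A repair that does close: since $\pi$ is non-Zeno and every cycle contains a Markovian transition, $\probm^{\sigma,\pi}_{\game_i}[\text{more than } n \text{ steps before } T] \to 0$; fix $n$ making this small, carry out the Dirac selection by a genuine finite backwards induction over the first $n$ levels only, and let $\pi'$ revert to a fixed non-Zeno Dirac behaviour (say, constant delay $1$) beyond level $n$, absorbing the tail substitution into the probability of surviving $n$ steps before $T$.
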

\begin{proof}
We need to simulate $\pi\in\Pi$ by a strategy $\pi'\in\Pi'$. For a history $\history$ and $r\in\Rsetp$, let $v(r)$ be the conditional lower value of the game if $r$ is chosen in $\history$. The value in $\history$ is thus $v(\history):=\int v(r) d\pi(\history)$. By additivity of measure,  there is $r$ for which $v(r)\geq v (\history)$. Therefore, picking $r$ instead of $\pi(\history)$ does not decrease the value.
\qed 
\end{proof}

 We fix arbitrary strategies $\sigma \in \Sigma$ and $\pi \in \Pi'$.% of the same grid size. 
We need to find a sequence of strategies $\pi_\lambda$ for any $\lambda$ such that
$$ \probm^{\sigma, \pi}_{\game_i}[\reach^{\leq T}\goal]
\; \geq \; 
\lim_{\lambda \to \infty} \probm^{\sigma, \pi_\lambda}_{\game_i}[\reach^{\leq T}\goal].$$

For any $\lambda > 0$, we define $\pi_\lambda(\history)$ for $\history = s_0 t_0 \cdots t_{n-1} s_n$ using $\pi$ as follows. Intuitively, if $\pi$ chooses to wait for time $t$ and then makes action $a$, the simulating strategy $\pi_\lambda$ repeatedly waits for random time with exponential distribution until the sum of the random waiting times exceeds $t$ and then makes action $a$; the larger the rate $\lambda$, the more precise is this simulation. Notice that the history of the play with strategy $\pi_\lambda$ contains a lot of waiting steps that are not in the history of the play with strategy $\pi$. Therefore, we need a mapping $\destutter$ that removes these superfluous waiting steps and replaces them with the single waiting $\pi$ would perform.
We define it inductively by $\destutter(s_0) = s_0$ and for $$\history=\history's' t_0 s t_1 s t_2\cdots s t_{n} s''$$ where $s'\neq s\neq s''$ (corresponding to waiting steps of $\pi_\lambda$ where the state is not changed, assuming no Markovian self-loops in $\product$) we set $$\destutter(\history):=\destutter(\history' s' t'_0 s) \, t'_1\,s\cdots s\,t'_k s''$$ where 
\begin{itemize}
 \item $t_0'=t_0$,
 \item $\pi(\destutter(\history' s' t'_0 s)\cdots t'_\ell s)=t'_{\ell+1}-t'_\ell$ for all $0\leq \ell< k-1$,
 \item $t'_k=t'_{k-1}$ if the last transition was immediate, and $t'_k=t_n$ if the last transition was Markovian of $\product_i$
\end{itemize}

Furthermore, let $a'$ be the first action taken by $\pi$ at total time $t'$ for history $\destutter(\history)$ if no Markovian transition occurs (notice that strategy $\pi$ may decide to wait subsequently for several times before it chooses an action; $a'$ is the first action taken by $\pi$ if none of the waiting is interrupted by a Markovian transition). 
We finally set $\pi_\lambda(\history)$ to choose 

\begin{align*}
 \pi_\lambda(\history t s) =  \begin{cases}
                          \exponential(\lambda) & \text{if $t < t'$;} 
% \text{if either $t < t'$ or both $t_{n-1} >0$ and $s_n \neq s_{n-1}$;} 
\\
			  a' & \text{if $t\geq t'$.}
%\text{if $\total{\history} \geq t'$ and either $t_{n-1} = 0$ or $s_n = s_{n-1}$.}
                          \end{cases}
\end{align*}
%where $\exponential(\lambda)$ denotes 
the exponential distribution with rate $\lambda$ in timed states and $\pi(\destutter(\history))$ in immediate states. Notice that the strategy $\pi_\lambda$ is by definition $\lambda$-exponential.

We now define a set of runs $X_\lambda$ in the game with $\pi_\lambda$ where the imprecision in the simulation does not cause any difference with respect to the time bounded reachability.
Let $\delta > 0$ be the clock resolution of $\sigma$. A run in the CE game with strategies $\sigma, \pi_\lambda$ belongs to $X_\lambda$ if for all $k \in \{0,1,\ldots,T/\delta\}$ we have that
\begin{itemize}
 \item no non-self-loop transition occurs at the total time neither in the interval $[k\delta,k\delta + \delta/\sqrt{\lambda}]$ nor in the interval $[(k+1)\delta - \delta/\sqrt{\lambda},(k+1)\delta]$.
 \item the first transition after total time $k\delta$ is a self-loop transition and occurs in the interval $[k\delta,k\delta + \delta/\sqrt{\lambda}]$;
\end{itemize}
The proof of the lemma is concluded by the following claim.
\begin{claim}
For $\lambda \to \infty$ we have 
\begin{align}
\probm^{\sigma, \pi_\lambda}_{\game_i}[X_\lambda]
& \; \to \;
1  \label{eq:nice-runs-to-one} \\                                                                                           
\probm^{\sigma, \pi_\lambda}_{\game_i}[\reach^{\leq T}\goal \mid X_\lambda]
& \; \to \;
\probm^{\sigma, \pi}_{\game_i}[\reach^{\leq T}\goal] \label{eq:nice-runs-are-nice}                                                                                 
\end{align}
\end{claim}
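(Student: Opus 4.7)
I would prove the two convergences separately, handling (\ref{eq:nice-runs-to-one}) by a direct union-bound computation and (\ref{eq:nice-runs-are-nice}) by a coupling argument exploiting the clock-resolution $\delta$ of $\sigma$.

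\textbf{For (\ref{eq:nice-runs-to-one}).} Fix $k \in \{0,\ldots,T/\delta\}$ and condition on the state reached at total time $k\delta$ being a timed state (the immediate case is simpler because immediate transitions take zero time and $\pi_\lambda$ triggers an exponential self-loop right after). In that state the total rate $r$ of Markovian transitions of $\product_i$ is bounded by a constant $\maxconst$, while $\pi_\lambda$ fires a self-loop at rate $\lambda$. The probability that the first transition after $k\delta$ is the self-loop and occurs within $[k\delta, k\delta + \delta/\sqrt\lambda]$ is $\frac{\lambda}{\lambda+r}(1-e^{-(\lambda+r)\delta/\sqrt\lambda})$, which tends to $1$. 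Similarly, the probability that any non-self-loop transition occurs in a boundary interval of length $\delta/\sqrt\lambda$ is at most $\maxconst\cdot\delta/\sqrt\lambda\to 0$. Because the total number of Markovian transitions before $T$ is a.s.\ bounded (acyclicity assumption on $\imc$ and finite branching in the hyper-Erlang of $\spec_i$), a union bound over the $O(T/\delta)$ clock ticks and over the bounded number of possible transitions within each interval shows $\probm^{\sigma,\pi_\lambda}_{\game_i}[X_\lambda]\to 1$.

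\textbf{For (\ref{eq:nice-runs-are-nice}).} The idea is to build, for each sufficiently large $\lambda$, a measure-preserving coupling between runs of $(\sigma,\pi)$ and runs of $(\sigma,\pi_\lambda)$ conditioned on $X_\lambda$, under which the indicator of $\reach^{\leq T}\goal$ agrees on coupled pairs. Given a run $\rho_\lambda$ in $X_\lambda$, the map $\destutter$ defined above removes every maximal self-loop burst triggered by $\pi_\lambda$ and re-labels its total duration by the waiting time $\pi$ would have chosen in the corresponding shorter history; inversely, given a run $\rho$ under $\pi$, one builds $\rho_\lambda$ by inserting, at each waiting step of $\pi$, a self-loop burst whose length is distributed as the number of Exp$(\lambda)$ samples needed to cross that waiting time, which is the same thing as resampling conditionally on $X_\lambda$. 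Because $\sigma$ has clock resolution $\delta$ and on $X_\lambda$ every coupled pair of histories agrees on the rounded times $\lfloor t_i/\delta\rfloor$ (the stutter bursts all live in $[k\delta,k\delta+\delta/\sqrt\lambda]$, far from the next grid point), $\sigma$ reacts identically in both; consequently the coupling is measure-preserving for the underlying stochastic dynamics. Finally, on $X_\lambda$ the total time of every event differs between $\rho$ and $\rho_\lambda$ by at most $(T/\delta)\cdot\delta/\sqrt\lambda=T/\sqrt\lambda\to 0$, so a state is visited before $T$ in one run iff it is in the other up to a vanishing exceptional set (the case where an event of $\rho$ happens within $T/\sqrt\lambda$ of $T$ itself, whose probability is $O(\maxconst/\sqrt\lambda)$). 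Letting $\lambda\to\infty$ yields the desired equality.

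\textbf{Closing the lemma.} Combining the two claims via $\probm^{\sigma,\pi_\lambda}_{\game_i}[\reach^{\leq T}\goal]=\probm^{\sigma,\pi_\lambda}_{\game_i}[\reach^{\leq T}\goal\mid X_\lambda]\,\probm^{\sigma,\pi_\lambda}_{\game_i}[X_\lambda]+O(\probm^{\sigma,\pi_\lambda}_{\game_i}[\neg X_\lambda])$ and taking $\lambda\to\infty$ gives $\lim_{\lambda\to\infty}\probm^{\sigma,\pi_\lambda}_{\game_i}[\reach^{\leq T}\goal]=\probm^{\sigma,\pi}_{\game_i}[\reach^{\leq T}\goal]$, which is precisely the inequality needed.

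\textbf{Expected main obstacle.} The routine part is (\ref{eq:nice-runs-to-one}); the delicate part is (\ref{eq:nice-runs-are-nice}), specifically making the coupling rigorous when several self-loop bursts and genuine transitions interleave in the same clock interval, and when $\pi$ performs several successive waits before choosing an action (the case that motivated the more complicated definition of $\destutter$ above). Here one must verify that the mass $\probm^{\sigma,\pi_\lambda}_{\game_i}[\,\cdot\mid X_\lambda]$ induced on destuttered runs is exactly the $(\sigma,\pi)$-measure, which amounts to observing that conditioning a sum of Exp$(\lambda)$ samples on being the first to exceed a deterministic threshold $t$ becomes, after rescaling and in the limit $\lambda\to\infty$, degenerate at $t$; combined with $\sigma$'s clock-resolution this is what decouples the exponential noise introduced by $\pi_\lambda$ from the decisions of $\sigma$.
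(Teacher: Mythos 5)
Your proposal follows essentially the same route as the paper: part (\ref{eq:nice-runs-to-one}) by bounding the probability of the forbidden boundary intervals and of the first post-grid event being a fast self-loop (your exponential computation is in fact the corrected form of the paper's, which has a sign slip in the exponent), and part (\ref{eq:nice-runs-are-nice}) by the $\destutter$-based correspondence of runs on $X_\lambda$, using that delays are at most $\delta/\sqrt{\lambda}$ per grid interval and that $\sigma$'s clock resolution $\delta$ makes it insensitive to these shifts. Your coupling phrasing merely makes explicit the measure-preservation that the paper asserts informally, so the argument is the same.
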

\begin{proof}
As regards (\ref{eq:nice-runs-to-one}), we deal with the conditions on runs in $X_\lambda$ one by one. First, notice that the Lebesgue measure of all the forbidden intervals tend to $0$ as $\lambda$ goes to infinity; hence, the probability of a Markovian transition occurring in any such interval tends to $0$. Second, we can underestimate the probability of $X_\lambda$ by considering only the waiting transitions of $\pi_\lambda$ as self-loops. The probability that the waiting transition occurs in each such interval can be bounded by 
$$
\left(1 - e^{\lambda \cdot \delta/\sqrt{\lambda}}\right)^{T/\delta}
\; = \;
\left(1 - e^{\sqrt{\lambda} \delta}\right)^{T/\delta}
\; \to \;
1
$$
since $T/\delta$ is constant and $e^{\sqrt{\lambda}\delta} \to 0$ as $\lambda \to \infty$.

As regards (\ref{eq:nice-runs-are-nice}), notice that the delay caused by the exponential simulation does not qualitatively change the behaviour. Namely, under the condition of $X_\lambda$,
\begin{itemize}
 \item any transition made by $\pi$ is simulated by $\pi_\lambda$ at most $\delta/\sqrt{\lambda}$ later;
%\todo{nemuze to tady prelezt?}
 \playerconc cannot interfere meanwhile because the states are either timed or immediate, never both;
%if there is an external transition enabled, there cannot be any internal transitions enabled by the Assumption; \todo{rozsirit bez Ass A}
 \item also no Markovian transition occurs meanwhile;
 \item the decision of the players after the delayed transition are the same as in the original play, since the first player plays the same in each whole interval $[k\delta,(k+1\delta)$ and the second player is asked what he would do if the $\lambda$-transition was precisely on time.
\end{itemize}
The change is only quantitative because we limit the Markovian transitions, but this change tends to zero as the probability of the set we condition by goes to one.
\qed
\end{proof}
\qed
\end{proof}

An exponential strategy in $\game$ can be simulated by an IMC environment of $\imc$:
\begin{lemma}\label{lem:exponential-to-imc}
For any scheduler $\sigma$ we have %and the corresponding strategy $\sigma'$
$$
\inf_{\substack{\lambda \in \Rsetp\\ \pi' \in \Pi_\lambda}} \probm^{\sigma, \pi'}_{\game_i}[\reach^{\leq T}\goal]
\quad \geq  \quad
\inf_{\substack{\env \in \ENV'\\\pi \in \scheduler_\sigma(\closedspeci{i})}}
\probm^{\pi}_{\closedspeci{i}}[\reach^{\leq T}\goal]
$$
\end{lemma}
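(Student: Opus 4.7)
The plan is, given any $\lambda$-exponential strategy $\pi_\lambda$, to exhibit an environment $\env_\lambda \in \ENV'$ together with a scheduler $\pi \in \scheduler_\sigma(\closedspeci{i})$ whose induced runs realise exactly the same time-bounded reachability probability as $(\sigma,\pi_\lambda)$ in $\game_i$. The lemma then follows by taking the infimum on the right over all such $\pi_\lambda$.

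First I would construct $\env_\lambda$ as a copy of the IMC $\commit$ from Section~\ref{sec:ceg}, modifying only the ``delay'' mechanism: each external transition $A \acttranlong{\committran} \nowstate$ is replaced by a Markovian transition $A \probtran{\lambda} \nowstate$ at rate $\lambda$. All other transitions of $\commit$ are kept verbatim. Because $\nowtran$ remains available only in $\nowstate$, and $\nowstate$ is now entered solely by a Markovian transition, we have $\env_\lambda \in \ENV'$.

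Second, I would define the scheduler $\pi$ through the natural bijection between histories of $\closedspeci{i}$ and histories of $\game_i$ that identifies each $A \probtran{\lambda} \nowstate$ step of $\env_\lambda$ with the corresponding $\committran$ step of $\commit$ in the arena. On $\tau$-successors arising from internal transitions of $\imc$, $\pi$ uses the distribution prescribed by $\sigma$, scaled appropriately; on $\tau$-successors arising from the commitment choices out of $\commitstate$ or from $\nowstate$, $\pi$ copies the decision made by $\pi_\lambda$ at the corresponding game history. This yields $\pi \in \scheduler_\sigma(\closedspeci{i})$ by construction.

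Third, I would verify that the induced measures on runs agree under the projection. In each immediate state, $\pi$ is defined to reproduce the joint distribution of $(\sigma,\pi_\lambda)$'s choices exactly; in each timed state, the sojourn time and the next Markovian transition in $\closedspeci{i}$ are governed by a race among the Markovian transitions of $\imc\times\spec_i$ together with the rate-$\lambda$ transition of $\env_\lambda$. The corresponding timed state of $\game_i$ under $\pi_\lambda$ contains the very same Markovian transitions of $\imc\times\spec_i$ racing against an independent $\exponential(\lambda)$ delay chosen by $\pi_\lambda$; since the two races have identical joint distributions on (waiting time, winner), the induced measures on runs coincide.

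The main obstacle is the case analysis for immediate states that mix several kinds of $\tau$-transitions (internal $\imc$-transitions, commitment transitions out of $\commitstate$, and $\tau$- or $\nowtran$-transitions out of $\nowstate$), as well as timed states in which the winning Markovian transition immediately enables a fresh $\nowtran$-synchronisation with $\spec_i$. These situations mirror those handled inductively in the ``$\leq$'' direction of Theorem~\ref{thm:ceg-ith}, and the same bookkeeping adapts here to show simultaneously that $\pi \in \scheduler_\sigma(\closedspeci{i})$ and that the cylinder-wise equality of the two probability spaces holds.
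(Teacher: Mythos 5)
Your proposal is correct and follows essentially the same route as the paper: the paper's proof also builds a single ``universal'' environment obtained from $\commit$ by replacing the $\committran$ transition out of each commitment $A$ with a rate-$\lambda$ Markovian transition to $\nowstate$ (which places it in $\ENV'$ for exactly the reason you give), and then lets the scheduler $\pi$ copy the decisions of the $\lambda$-exponential strategy via the direct correspondence between paths of $\closedspeci{i}$ and histories of $\game_i$, so that the two probability measures coincide.
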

\begin{proof}[Idea]
 Since only one rate is used, we can build a ``universal'' environment (w.r.t. this rate), that can freely select on which actions to synchronize and waiting with exactly this rate.
\end{proof}

\begin{proof}

We fix an arbitrary scheduler $\sigma$ and use the same strategy $\sigma$ as before %:= \sigma \circ p'$ on the left hand side as in Lemma~\ref{lem:cegame-without-self-loops}.
(observe that a scheduler has the same type as a strategy of $\playercon$).
Furthermore, we fix an arbitrary $\lambda \in \Rsetp$ and a $\lambda$-exponential strategy $\pi'$.
We choose $\env$ to be the environment of $ENV'$ depicted below for $\actnotau = \{\mathsf{a}\}$. It is very similar to $\commit$ from Section~\ref{sec:ceg}. The action alphabet of $\env$ is $\actnotau\cup\{\nowtran\}$, the state space is $2^{\actnotau} \cup \{\commitstate,\nowstate\}$and the transitions are for every $A\subseteq\actnotau$
\begin{itemize}
 \item $\commitstate\acttran{\tau}A$,
 \item $A\acttran{a}\commitstate$, for each $a\in A$,
 \item $A\acttran{\lambda}\nowstate$,
 \item $\nowstate\acttran{\tau}\commitstate$ and $\nowstate\acttranlong{\nowtran}\commitstate$.
\end{itemize}
Note that the only rate is $\lambda$. This is in some sense universal environement in $\ENV'$ for $\product$. Its power is only limited by $\lambda$ (for $\lambda\to\infty$ it can simulate any other environment).

\begin{center}
% \begin{tikzpicture}[/tikz/initial text=,font=\footnotesize,x=2cm,outer sep=1]
% \node[state,initial] (0) at (0,0) {$sync$};
% \node[state] (1) at (1,0) {$wait$};
% \node[state] (2) at (2,0) {$\nowstate$};
% %\node[state] (2)  at (2,-1) {$j_B$};
% \path[->] (0) edge[in=75,out=105,loop] node[above] {$\actnotau$} ()
%           (0) edge[] node[above] {$\tau$} (1)
% 	(1) edge[] node[above] {$\lambda$} (2)
% 	(2) edge[bend right=50] node[above] {$\nowtran$} (0)
% 	(2) edge[bend left=50] node[below] {$\tau$} (0)
% ;
% \end{tikzpicture}
\begin{tikzpicture}[/tikz/initial text=,font=\footnotesize,->,sloped,rounded corners,
state/.style={shape=circle,draw,inner sep=0mm,outer sep=0.8mm, minimum size=.7cm,}
]

\node[state,inner sep=4.25] (C) at (-0.3,-0.5) {$com.$};
\node[state,inner sep=4.25] (N) at (3.7,-0.5) {$now?$};
\node[state] (a) at (1.5,0) {$\{\mathsf{a}\}$};
\node[state] (e) at (1.5,-1) {$\emptyset$};

\path ($(C.west)+(-0.3,0)$) 		edge		(C);

\path (a) edge [bend right=30] node[above] {$\mathsf{a}$} (C);
\path (C) edge node[above] {$\tau$} (a);
\path (C) edge node[below] {$\tau$} (e);
\path (a) edge node[above] {$\lambda$} (N);
\path (e) edge node[below] {$\lambda$} (N);

% \draw (N) |- (1.5,.5) node[above,pos=0.3] {$\tau$} -| (C);
% \draw (N) |- (1.5,-1.5) node[right,pos=0.1] {$\nowtran$} -| (C);
\draw (N) |- (1.5,.5)  -| (C);
\draw (N) |- (1.5,-1.5)   -| (C);
\node at (3.9,0.3) {$\tau$};
\node at (4.2,-1.3) {$\nowtran$};
\end{tikzpicture}
\end{center}

We set $\pi$ to be scheduler that chooses the same transitions as the strategy $\pi'$. And when $\pi'$ decides to wait exponentially with $\lambda$ in a timed state, we are necessarily in some $A\subseteq\actnotau$ and thus automatically wait with $\lambda$ exponential waiting. This definition is correct as the paths of $\closedspeci{i}$ directly correspond to histories of $\game_i$.

% Formally, for a path $\mypath = (c_0,e_0) \, t_0 \, (c_1,e_1) \, t_1 \, \cdots t_{n-1} \, (c_n,e_n)$ where each $c_i$ is the state of the IMC component and each $e_i \in \{e_d, e_w\}$ is the state of $E$, we set
% \begin{itemize}
%  \item $\pi(\mypath) = (c_n,e_w)$ if $\pi'(\id(\mypath))$ chooses exponential waiting,
%  \item $\pi(\mypath) = (c_{n+1},e_d)$ if $\pi'(\id(\mypath))$ chooses $c_{n+1} \in \suco(c_n)$
% \end{itemize}
% where $\id: \paths(C(E_\lambda)) \to \histories(\game')$ is the first projection of the path (leaving out the states of the environment).% that alternates the original and duplicated states in any sequence of Markovian transitions of $\imc$.%, i.e. $\id(\mypath) = c_0 \, t_0 \, c_1 \, t_1 \cdots t_{n-1} \, c_n$. 
%By $\id'$ we denote the natural extension of $\id$ to infinite runs. 

Since for any measurable set of runs $X$ in  $\game_i$ we have
$\probm^{\sigma',\pi'}_{\game_i}[X] = \probm^{\sigma,\pi}_{\closedspeci{i}_\lambda}[\id^{-1}(X)]$ we also hav
$$
\probm^{\sigma', \pi'}_{\game'}[\reach^{\leq T}\goal]
\; = \; 
\probm^{\pi}_{\closedspeci{i}_\lambda}[\reach^{\leq T}\goal]
$$
% The key observation is that any path ending with a state of the form $(c,e_w)$ where the scheduler $\pi$ cannot do anything is mapped by $\id$ on a history where the $\lambda$-exponential strategy must wait. Furthermore, in all other situations the decisions of the schedulers and strategies coincide w.r.t. $\id$. Again, it is easy to see that 
\qed
\end{proof}

Finally, the proof of $v_{\game_i}\geq v_{\mathit{product}}(\product_i)$ follows easily from Lemmata~
% \ref{lem:grid-strategies-suffice}, 
\ref{lem:exponential-strategies-suffice}, and \ref{lem:exponential-to-imc} since we have
%\begin{align*}
\[
 \sup_{\sigma \in \Sigma} \inf_{\pi\in\Pi} \probm^{\sigma, \pi}_{\game_i}[\reach^{\leq T}\goal]
\; = \;
 \sup_{\sigma \in \Sigma} \inf_{\substack{\lambda\in\Rsetp\\\pi_\lambda \in \Pi_\lambda}} \probm^{\sigma, \pi_\lambda}_\game[\reach^{\leq T}\goal]
\; \geq \;
 \sup_{\sigma \in \Sigma} \inf_{\substack{\env \in \ENV'\\\pi \in \scheduler_\sigma(\closedspeci{i})}}
\probm^{\pi}_{\closedspeci{i}}[\reach^{\leq T}\goal] 
\]
%\end{align*}
\qed
\newpage

\subsection{Proof of Theorem~\ref{thm:ceg-direct}}

\begin{reftheorem}{thm:ceg-direct}
For every IMC $\imc$, MCA $\spec$, we have $\displaystyle v_{\game}=\lim_{i\to\infty}v_{\game_i}$. 
\end{reftheorem}

\noindent {\bf Proof idea}

%\begin{proof}[Proof Idea]
The states of the hyper-Erlang phase-type give approximate information about the remaining time in the current state of the specification. Moreover, this time is known shortly after entering the phase-type: in $\game_i$ after taking the first (fast) transition, in $\game$ after an arbitrarily short time chosen by $\playerenv$. Furthermore, the greater the hyper-Erlang, the more precise time estimation we have. In the limit, we thus know (from after the first transition till the sink) what the sampled remaining time exactly is. We can thus provide simulations back and forth.
\qed
%\end{proof}

\bigskip

\noindent {\bf Proof}

%\begin{proof}

``$\leq$'':

We need to simulate $\pi_i$ of $\game_i$ for a given fixed $i$. Here it is sufficient to:
\begin{itemize}
 \item upon entering a specifcation state wait with time distributed according to the first rate $2^{2^i}$ of the respective hyper-Erlang, and decide which branch $j$ we take in the simulated $\game_i$ (see below) when we get the sampled time $t$ in $\game$;
 \item simulate $k$th Markovian transition of the hyper-Erlang branch $j$: here we simply randomly choose time for this Markovian transition and check whether it happens before or after the proposed waiting time and perform the earlier (and possibly finish the waitingh later). The time for the Markovian transition is chosen according to the hyper-Erlang rate $\sqrt i$ under the condition that we are in the current branch $j$ at the $k$th node and we should get to sink in time $t-t'$ where $t'$ is the time spent in the current state $q$ of $\spec$.
 \item All other choices are the same as $\pi_i$ does in the respective (straightforwardly defined) history of $\game_i$.
\end{itemize}
It remains to show how to pick which branch to simulate, i.e. choose $j$. Firstly, there is a distribution on which length to choose under the condition that we should reach sink in precisely $t$, denote its cdf by $Branch$. We consider an arbitrary fixed mapping $Indep:[0,1]\to[0,1]$ where the argument is independent of the result. Denoting $F_{\sqrt i}$ the cdf of exponential distribution with rate $\sqrt i$, we $Branch^{-1}(Indep(F_{\sqrt i}(t)))$. This way, we use the random waiting (that can be seen in the history) as a random generator for the choice of the branch and thus we keep this choice implicitely in the history of the game.

\bigskip

``$\geq$'':

We simulate the behaviour of $\pi$ in $\game$ who has precise information about the time progress of the specification by a $\pi_i$ in $\game_i$ who only knows his position in the respective hyper-Erlang, so that for $i\to\infty$ the error approaches $0$. The main idea is that when we reach in the $\spec_i$ component a state of the form $(j,1)$ we guess how long we have before $\spec$ changes its state and then behave according to what $\pi$ would do with this time.

When in $(q,j,1)$ the cdf to reach sink is say $E_j$ (with the mean $j/\sqrt{i}$). Further, let $F_{2^{2^i}}$ be the cdf of Exp($2^{2^i}$). For time $t$ (which it took to take the $2^{2^i}$ transition) and $(j,1)$ (the reached target of this transition), we define $$Time_i(j,t):=E_j^{-1} (Indep(F_{2^{2^i}}(t)))$$
i.e.~we use the random quantile of the transition duration to get the random quantile for the time left in the current specification location. 
%The dependence of these two values causes no problems as the actual valu of $t$ will be irrelevant for the behaviour of the whole system as disucssed below

We now define a mapping $GetTimes:\histories(\game_i)\to\histories(\game)$. For a history
$$
(c_0,(q_{prev},x,y),e_0)t_1 (c_1,(q,1,0), e_1)t_2(c_2,(q,j,1),e_2)\cdots t_{n}(c_{n},(q,j,z),e_{n})t (c,(q_{next},1,0),e)  \history
$$
with $q_{prev}\neq q\neq q_{next}$ we have $c_2=c_1$, denote $t'=Time_i(j,t_2-t_1)$, and define the value of $GetTimes$ as follows:
\begin{itemize}
 \item remove all transitions corresponding to the moves of the specification while its state is still $q$,
 \item replace every $(q,j,k)$ (as well as $(q,1,0)$) by $(q,t')$,
 \item decrease all times from $t_2$ onwards (now also in $\history$!) by $t_2-t_1$,
 \item if the sink is reached (say at $t_k$) then replace $t_k$ by $t'$, 
%and decrease all times from $t_{k+1}$ onwards (also in $\history$!) by $t_k-t'$,
 \item we process $(c_{n},(q,j,z),e_{n})t (c,(q_{next},1,0),e) t \history$ the same way. If the end of $\history$ ends in the middle of a hyper-Erlang branch, the fourth point does not apply.
\end{itemize}
This way, we pretend the transition from the initial state of the hyper-Erlang took no time and we guessed the correct time to the sink. Observe that for $i\to\infty$, both errors approach zero. 

Let now $\sigma$ be any scheduler (thus a strategy in both  $\game_i$ and $\game$) and $\pi$ a strategy of $\playerenv$ in $\game$. We now define $\pi_i$. For a history $\history$ ending at time $t$, we have a history $\bar\history:=GetTimes(\history)$ ending at time $\bar t$. In immediate states, we set $\pi_i(\history):=\pi(\bar\history)$. In timed states, (assuming the $\pi$ is deterministic, see Claim in the proof of the previous theorem) we set
\begin{itemize}
 \item $\pi_i(\history):=\pi(\bar\history)$ if $\bar t < t$,
 \item $\pi_i(\history):=\pi(\bar\history)-(\bar t-t)$ if $\bar t > t$ and the result is positive,
 \item $\pi_i(\history):=\pi(\bar\history)/2^{i+k}$ otherwise, where $k$ is the length of the current history. (Intuitively, when a hyper-Erlang branch finishes later than it should have according to the guess, we slow down our waiting so that $\pi$ catches up.)
\end{itemize}

We now define a sequence of sets $(X_\ell)_{\ell\in\Nset}$ such that for every $\ell$
$$\lim_{i\to\infty}\probm^{\sigma, \pi_i}_{\game_i}\big[X_\ell\big]=1 $$ and
$$\lim_{\ell\to\infty}\lim_{i\to\infty}\probm^{\sigma, \pi_i}_{\game_i}\big[\reach^{\leq T}\goal\mid X_\ell\big]\leq 
\lim_{\ell\to\infty} \probm^{\sigma, \pi}_{\game}\big[\reach^{\leq T}\goal\big]$$

Recall the clock resolution $\delta$. The set $X_\ell$ is defined as the set of runs where:
\begin{itemize}
 \item no Markovian transitions occur at times in $[k\delta-\delta/\ell,k\delta+\delta/\ell]$ for any $k\in\{1,\ldots,T/\delta\}$
 \item the sum of durations of all transitions from the initial states of hyper-Erlangs before time $T$ does not exceed $\delta/\ell^2$, and 
 \item for each pair of $t',t_k$ from above it holds $|t'-t_k|<\delta/\ell$.
% \item the difference between the time of the last transition before $T$ in $\game_i$ and the corresponding transition in the simulated $\game$ with times of $GetTime$ is less than $\delta/\ell$
\end{itemize}

The first equation clearly holds by the weak law of large numbers and the fact that hyper-Erlangs approximate any continuous distributions.

The second equation then follows because:
\begin{itemize}
 \item At all moments the last time of $\history$ is in the same $\delta$-slot $[k\delta,(k+1\delta))$ as the last time of $GetTimes(\history)$ since we always keep these two aligned, except when $\history$ is ahead by $x$ and $\pi$ chooses to wait for less than $x$. But then we slow down our progress (see the third line of the definition of $\pi_i$). Further, under these conditions in total we wait for less than $\sum_{k=1}^\infty{\delta/2^{i+k}}=\delta/2^i$, which is smaller than $\delta/\ell$ for sufficiently large $i$. Moreover, for sufficiently large $i$, it is  even smaller than $\delta/\ell-\delta/\ell^2$. Hence by waiting for $\pi$ to catch up, we cannot be pushed out ofthe same slot as $GetTime(\history)$ is in, not even because of the inital transitions in the hyper-Erlangs.
 \item Therefore, $\sigma$ in $\game_i$ plays as in the simulated $\game$ since it makes the same decision throughout each whole $\delta$-slot and for each history $\history$ ending in time $t$, $GetTime(\history)$ ends in time $t'$ which is in the same slot.
 \item Thus for each $\ell$, we finish in the same slot as $\pi$, hence at the same time and state.
\end{itemize}
The second equation then concludes also the proof of this direction of the theorem.
\qed
%\end{proof}

%%% Local Variables:
%%% mode: latex
%%% TeX-master: "main"
%%% End:
% \input{app-disc}

\newpage
\allowdisplaybreaks

\section{Definition of the CE game $\game$}\label{app:ce-precise}

The CE game $\game$ is defined on the game arena $G$ obtained from the game arena $G_1$. First, we need to alter $G_1$ a bit.
To simplify the argumentation, we assume that in $\game_1$, each immediate state has an internal transition that $\playercon$ can choose. If there is none, we add an internal transition to any goal state that each strategy $\sigma$ has to choose with probability $1$. This does not change the value $v_\game$ as the strategy $\pi$ can always reject such choice. The internal transitions that $\playercon$ can choose will be denoted by $v \tauc v'$, internal transitions that $\playerenv$ can choose will be denoted by $v \taue v'$. For each pair of states $v \probtranlong{1/2} v'$ where the Markovian transition corresponds to the flow of time in the specification component with constraint $\bowtie d$, we remove this Markovian transition, and write $v \probtran{d} v'$ instead. For each state $v$ for which there is no $v'$ and $d$ such that $v \probtran{d} v'$, we write $v \probtran{d} v$ for some distribution $d$ from the specification.

% First, let $F = \{v \in G_1 \mid \exists v',d: v \probtran{d} v' \}$ denote the states with flow of time in the specification component. Then 
The game arena is $G = (G_1\times (\Rsetp)^2)$. The first real number in a state is the time to wait in the current state of the specification as sketched in the main body of the text. The second real number is artificial, included for later proofs. We set $\histories(\game) = (G \times \Rsetpo)^* \times G$, similarly to the definition of $\game_i$. We define a sigma-field $\pathsfield$ over $\histories(\game)$ to be the naturally induced product sigma-field $\pathsfield$ where for each discrete component we use the sigma-field induced by the discrete topology and for each real component we use the Borel sigma-field.

A strategy of player $\playercon$ is a measurable
% \footnote{The strategy $\sigma$ is measurable with respect to $\pathsfield$ and $2^G$, the strategy $\pi$ is measurable w.r.t. $\pathsfield$ and $2^G \cup \mathbb{B}$ where $\mathbb{B}$ is the Borel sigma-field on $\Rsetp$.}
function $\sigma: \paths(\game) \to \dist(G)$ and a strategy of player $\playerenv$ is a measurable function $\pi: \paths(\game) \to \dist(\{\checkmark\} \cup G) \cup \dist(\Rsetp)$, where $\mathbb{B}$ denotes the Borel sigma-field over $\Rsetp$. For any history $\history$, we require that $\sigma(\history)$ and $\pi(\history)$ support only finitely many states -- those that can be reached by internal transitions where the real component remains intact.

For a given pair of strategies $\sigma,\pi$ we define the semantics of the CE game as a discrete-time Markov chain over the measurable space $(\paths(\game),\pathsfield)$. The transition kernel $\kernel$ of this chain, where $P(\history,A)$ denotes the probability to move in one step from the history $\history$ to any history in the set $A$, is defined as follows. Let us fix a history $\history = (v_0,r_0,u_0) \, t_1 \, (v_1,r_1,u_1) \, t_2 \, (v_2,r_2,u_2) \, \cdots \, t_{n} \, (v_n,r_n,u_n)$ where each $v_i \in G_1$ and $r_i,u_i \in \Rsetpo$. Further, we fix a measurable set $A$ of histories.

\begin{itemize}
 \item If $v_n$ is an immediate state, let $d = \sigma(\history)$ and $e = \pi(\history)$. We have
\begin{align*}
\kernel(\history,A) &= 
\sum_{v\in \supp(e)} e(v) \cdot [\history \, t_n \, (v,r_n,u_n) \in A]  \\
& \quad + 
e(\checkmark) \cdot \sum_{v_i \tauc v_{n+1}} d(v_{n+1}) \cdot [\history \, t_n \, (v_{n+1},r_n,u_n) \in A]
\intertext{where $[condition]$ is the indicator function of the condition $condition$. Observe that the first line above corresponds to player $\playerenv$ rejecting the choice of $\playercon$ and choosing his own state $v$, the second line corresponds to accepting the choice.
% If $v = \checkmark$ (player $\playerenv$ accepts the choice of $\playercon$), we have
}
% \kernel(\history,A) &= 
% \sum_{v_i \tauc v_{n+1}} d(v_{n+1}) \cdot [\history \, t_n \, (v_{n+1},r_n) \in A]
\intertext{
\item If $v_n$ is a first timed state visited after a new specification state is entered, the waiting time for the specification is generated as follows.}
\kernel(\history,A) &= \int_0^\infty \int_0^\infty ud(r,u) \cdot [\history \, (t_n) \, (v_n,r,u) \in A] \de{r} \de{u} \\
\intertext{
where $ud$ is the density of the uniform distribution over the area $\{(r,x) \mid r \in \Rsetpo, 0 < x < f(r) \}$ below the curve of $f$ where $f$ is the density of the distribution $d$.
\item For other timed states $v_n$, let $F = \pi(\history)$. We set}
\kernel(\history,A) &= \int_{t_e \in \Rsetpo} F(\de{t_e}) \cdot \left(E(t_e)\cdot[t_e < r_n] + S(r_n) \cdot [r_n \leq t_e] + \sum_{v_i \probtran{\lambda} v} M_{v}(\min\{t_e,r_n\})\right)
\intertext{where the terms $E(t_e)$, $S(r_n)$, and $M_v(t)$ describe the impact of the $\committran$ transition at time $t_e$, the flow transition in the specification at time $r_n$, and the Markovian transition to $v$ up to time $t$, respectively.}
E(t) &= e^{-\mu t} \cdot [\history \, (t_n+t) \, (v',r_n-t,u_n) \in A] \\
S(t) &= e^{-\mu t} \cdot [\history \, (t_n+t) \, (v'',0,u_n) \in A] \\
M_v(t) &= \frac{\lambda}{\mu} \cdot \int_0^t \mu \cdot e^{-\mu x} \cdot [\history \, (t_n+x) \, (v,r_n-x,u_n) \in A] \de{x}
\intertext{where $v_n \probtran{\lambda} v$, $\mu = \sum_{v_n \probtran{\lambda} v_{n+1}} \lambda$, $v_n \acttranlong{\committran} v'$, and $v_n \probtran{d} v''$.}
\end{align*}
\end{itemize}

\newpage

\section{Proof of Theorem~\ref{thm:disc}}\label{app:disc}

\begin{reftheorem}{thm:disc}
\theoremerrorbound
\end{reftheorem}
\begin{proof}
The proof is performed in several steps:
\begin{enumerate}
 \item An approximate game $\igame$ is defined where at most one Markovian transition occurs in each interval $[\ell\step,(\ell+1)\step)$ for $\ell\in\Nseto$. Furthermore new waiting times for timed transitions in the specification are randomly generated only at times $\ell\step$ for $\ell\in\Nseto$. This game approximates the game $\game$ by the bounds above. No other approximation error is involved in the further steps.
 \item A discrete step game $\dgame$ is defined, which is very similar to $\igame$, where every $\kappa$ time units an artificial self-loop is introduced and the set of actions of player $\playerenv$ is slightly extended; it is shown to have the same value as $\igame$.
 \item Thanks to the extended set of actions, a class of \emph{grid} strategies, which have finite representation, are shown to suffice in $\dgame$.
 \item Thanks to the artificial self-loops in $\dgame$ and the grid strategies, a discrete stochastic game played on a tree $\gamed$ is obtained directly from $\dgame$. These games have equal value.
 \item The discrete game $\gamed$ is shown to be solved in time polynomial in its (exponential) size.
\end{enumerate}

 Formally, these steps are proved in Lemmata~\ref{lem:game-igame}, \ref{lem:igame-dgame}, \ref{lem:dgame-gamed}, and \ref{lem:gamed-solution}. \QED
\end{proof}

\subsection{The approximate game $\igame$}

For a fixed $\step > 0$, we define game $\igame$ over the same state space of histories of $\game$ with the same set of strategies. The transition kernel $\kernel'$ of $\igame$ agrees with $\kernel$ of $\game$ on immediate states, as regards times states there are a few differences. Grid of intervals of length $\step$ plays a crucial rule in the semantics.
\begin{itemize}
 \item In the grid slot where the specification state changes, no Markovian transition occurs. Furthermore, when the specification changes state at time $l\step + x$, a new random number is generated as before, but the remaining time till the end of the current interval $\kappa - x$ is added to the newly generated number. Notice that this simulates the situation where the new random number is actually generated at the end of the interval.
 \item Within one interval of the grid, at most one Markovian transition occurs.
\end{itemize}

Formally, let us again fix a history $\history = (v_0,r_0,u_0) \, t_1 \, \cdots \, t_{n} \, (v_n,r_n,u_n)$ where each $v_i \in G_1$, $r_i,u_i \in \Rsetpo$, and $v_n$ is a timed state. We also fix a measurable set $A$ of histories. Further, let $a$ be the minimal number such that $t_n + a = \ell \cdot \step$ for some $\ell \in \Nset$, i.e. the remaining time till a grid line.  Let $b$ be the maximal number such that $b < r_n$ $b - a = \ell\cdot \step$ for some $\ell \in \Nset$, i.e. the remaining time till the last grid line before the specification changes its state. Finally, we set $c = a$ if there was a Markovian transition in the last $\step -a$ time; and we set $c = 0$, otherwise. For the first timed state visited after a new specification state is entered, we have
\begin{align*}
\kernel'(\history,A) &= \int_0^\infty \int_0^\infty ud(r',u') \cdot [\history \, (t_n) \, (v_n,(r'+b+\step-r_n),u') \in A] \de{r'} \de{u'} \\
\intertext{whereas for other timed states, we have}
 \kernel'(\history,A) &= \int_{t_e \in \Rsetpo} F(\de{t_e}) \cdot \left(E(t_e)\cdot[t_e < r_n] + S(r_n,b) \cdot [r_n \leq t_e] + \sum_{v_i \probtran{\lambda} v} M_{v}(\min\{t_e,b\})\right) \\
E(t) &= e^{-\mu t} \cdot [\history \, (t_n+t) \, (v',r_n-t,u_n) \in A] \\
S(t,b) &= e^{-\mu b} \cdot [\history \, (t_n+t) \, (v'',0,u_n)) \in A] \\
M_v(t) &= \frac{\lambda}{\mu} \cdot \int_c^t \mu \cdot e^{-\mu x} \cdot [\history \, (t_n+x) \, (v,r_n-x,u_n) \in A] \de{x}
\end{align*}

\begin{lemma}\label{lem:game-igame} Denoting by $v_\igame$ the value of the game $\igame$, we have
 $$v_\igame - 10 \step (\maxconst T)^2 \ln\frac{1}{\step} \;\; \leq \;\; v_\game \;\; \leq \;\; v_\igame + 10 \step (\maxconst T)^2 \ln\frac{1}{\step}$$
\end{lemma}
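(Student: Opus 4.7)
The plan is to prove both inequalities simultaneously by a coupling argument in the spirit of~\cite{Neu10} and~\cite{DBLP:conf/fsttcs/BrazdilHKKR12}, adapted to the present partial-observation, semi-Markov setting. Recall from Section~\ref{sec:assume} that $\playercon$ reacts only to information available at the grid points $\ell\step$, since $\delta$ is assumed to be a multiple of $\step$. First, I would fix an arbitrary admissible pair of strategies $(\sigma,\pi)$ and build a single probability space driving simultaneously a run $\rho$ of $\game$ and a run $\rho'$ of $\igame$ from the same underlying randomness: the same uniform samples $(r,u)$ used to generate specification timers, the same i.i.d.\ exponential clocks of the Markovian transitions of $\imc$, and the same random bits of $\sigma$ and $\pi$. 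Let $B$ be the bad event given by the disjunction of: \emph{(a)} some slot $[\ell\step,(\ell+1)\step)$ carrying two or more Markovian transitions in $\rho$; \emph{(b)} some Markovian transition or some specification-timer expiry of $\rho$ falling within distance $\step$ of a grid line; \emph{(c)} the extra $\step-x$ added to a timer in $\igame$ causing the rounded timer to cross a grid line before time $T$ that it would not have crossed in $\game$. Off $B$, the grid resolution of $\sigma$ guarantees that the sequences of states of $\rho$ and $\rho'$ and the slots they inhabit coincide, so the two coupled runs agree on the event $\reach^{\leq T}\goal$.

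Next, I would bound $P(B)$ using the two roles of $\maxconst$ (bounding both the sum of Markovian rates at each state of $\imc$ and the densities of the specification distributions together with their first derivatives). For $(a)$, each slot hosts two Markovian transitions with probability at most $(\maxconst\step)^2/2$, and summing over the $T/\step$ slots gives $\tfrac12 \maxconst^2 T\step$. For $(b)$, let $N_T$ be the number of Markovian and specification-timer events of $\rho$ up to time $T$; conditionally on $N_T$, the probability of $(b)$ is bounded by $2N_T\maxconst\step$, since each individual event lands in any prescribed $2\step$-neighbourhood of a grid line with probability at most $2\maxconst\step$. A Chernoff-style bound on the bounded-rate counting process $(N_t)_{t\le T}$ yields $N_T\le O(\maxconst T\ln\tfrac1\step)$ except with probability $O(\step)$, giving $P(b)=O(\maxconst^2 T\step\ln\tfrac1\step)$. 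Event $(c)$ has exactly the same shape and contributes in the same order. Collecting the three estimates and absorbing the constants yields $P(B)\le 10\step(\maxconst T)^2\ln\tfrac1\step$.

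Finally, since the inequality $\bigl|\probm^{\sigma,\pi}_\game[\reach^{\leq T}\goal]-\probm^{\sigma,\pi}_{\igame}[\reach^{\leq T}\goal]\bigr|\le P(B)$ holds uniformly in $(\sigma,\pi)$, and the admissible strategy sets are identical for $\game$ and $\igame$ (grid resolution for $\playercon$, non-Zeno randomised for $\playerenv$), taking $\sup_\sigma\inf_\pi$ on each side gives the announced two-sided bound. The hard part will be Step~1: because of partial observation, a single local discrepancy (a Markovian transition shifted across a grid line, or a specification timer rounded up in $\igame$) can silently propagate into \emph{all} subsequent decisions of $\playercon$, who only sees events aggregated to full grid cells. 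I must therefore argue that any such cascading divergence between $\rho$ and $\rho'$ is already triggered by an event of type $(b)$ or $(c)$, so that $B$ is really closed under these ramifications; a secondary technical point is to check that the grid-resolution constraint on $\playercon$ transports cleanly from $\game$ to $\igame$, which is immediate once the observations on $B^c$ are shown to coincide.
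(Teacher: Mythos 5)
Your overall architecture matches the paper's: a uniform per-strategy-pair bound followed by $\sup_\sigma\inf_\pi$, an error decomposition into (i) two Markovian transitions in one $\step$-slot, (ii) too many specification transitions before $T$, and (iii) a per-transition sampling defect, with the same quantitative ingredients ($(\maxconst\step)^2/2$ per slot, a concentration bound giving $O(\maxconst T\ln\frac1\step)$ specification transitions, and $O(T\step\maxconst)$ per transition). The gap is in the core step, namely \emph{why} the two games agree outside the bad event. You couple $\game$ and $\igame$ by reusing the same raw timer samples $r$, so the specification changes state at absolute time $\ell\step+x+r$ in $\game$ but at $(\ell+1)\step+r$ in $\igame$: the absolute times \emph{always} differ by $\step-x>0$. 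Since $\playerenv$ observes histories with full continuous precision, the same strategy $\pi$ is fed different histories in the two coupled runs from that moment on and may diverge arbitrarily; this divergence is not confined to runs where something crosses a grid line, so it is not covered by your events (b) and (c), and your claim that cascading divergence is "already triggered" by them fails. The paper avoids this by a different coupling: it re-identifies a run of $\igame$ whose stored timer is $r+z$ with a run of $\game$ whose \emph{raw sample} was $r+z$, so the absolute timings (and hence the histories seen by $\playerenv$) coincide exactly. The price is that the identification is only measure-preserving up to the set of non-"simulable" sampled points $(r,u)$ under the density curve, whose measure is controlled using the bound $\maxconst$ on the \emph{first derivative} of the densities --- an ingredient your argument never invokes, which is a sign the timer-shift cost has not actually been paid. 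This also explains why the paper's two inequalities are asymmetric (conditioning on $Z$ on the $\game$ side, intersecting with $X$ resp.\ $Y$ on each side) rather than a symmetric $|P_1-P_2|\le P(B)$: the one-Markovian-transition-per-slot restriction is an \emph{event} in $\game$ but part of the \emph{dynamics} of $\igame$, so there is no common probability space on which a single bad event $B$ lives until the identification above has been built.

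A secondary but concrete defect: your event (b), "some Markovian transition or specification-timer expiry falls within distance $\step$ of a grid line," is vacuous --- every point of $[0,T]$ is within distance $\step/2$ of a grid line, so $P(B)=1$ as written. Your intended meaning (landing in a specific $\step$-window adjacent to the relevant grid boundary, so that the shift pushes the event across) is recoverable, but even the repaired version only protects the coarse, $\delta$-resolution observations of $\playercon$; it does nothing for the fully observing $\playerenv$, which is why the simulability device is needed in the first place.
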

and a strategy $\sigma$ guaranteeing reachability probability $v$ in $\igame$, guarantees in $\game$ reachability probability in the interval $[v-10 \step (\maxconst T)^2 \ln\frac{1}{\step},v+10 \step (\maxconst T)^2 \ln\frac{1}{\step}]$.
\begin{proof}

For a fixed continuous density function $f$, we first define the set of \emph{simulable} points in $\game$ and $\igame$ denoted $R_f,R'_f \subseteq \{(r,x) \mid r \in \Rsetpo, 0 < x < f(r)\}$, respectively. A point $(r,x)$ is \emph{simulable in $\game$} if $r \geq \step$ and $x \leq f(r')$ for any $r' \in [r-\step,r]$.
A point $(r',x')$ is \emph{simulable in $\igame$} if $x' \leq f(r)$ for $r \in [r',r'+\step]$.

We define a set of runs $X$ and $Y$ such that whenever up to time $T$ a timed transition in the specification is taken, the newly randomly generated pair $(r,x)$ is simulable in $\game$ and simulable in $\igame$, respectively (by generating in $\igame$ we mean the number that is randomly picked, not the shifted number that is actually stored in the state space).
Next, we define a set of runs $Z$ such that at most one Markovian transition occurs in each interval $[\ell\step,(\ell+1)\step]$ for $0 \leq \ell < T/\step$.

Now we show that for any strategy $\sigma$ and $\pi$ it holds
 \begin{align}
 \prob_\igame^{\sigma,\pi}[\reach^{\leq T} G \cap X] 
 &\leq \prob_\game^{\sigma,\pi}[\reach^{\leq T} G \mid Z] \label{eq:cond-prob1} \\
 \prob_\igame^{\sigma,\pi}[\reach^{\leq T} G] 
 &\geq \prob_\game^{\sigma,\pi}[\reach^{\leq T} G \cap Y \mid Z] \label{eq:cond-prob2}
 \end{align}
Conditioning by $Z$ only equalizes the behavior of the Markovian transitions as in $\igame$ it is set by definition. The inequalities are obtained by the following idea:
Whenever in $\igame$ a number $z$ ($=b+\step-r_n$) is added to the randomly generated waiting time $r$, the behavior is the same is when in $\game$ the number $r+z$ is randomly generated. For each run in $X$ holds that the same run is also a run in $\igame$ such that these (identical) runs are equivalent w.r.t. the time-bounded reachability. Furthermore, this mapping preserves measure due to the simulability of the choices in the specification. To each simulable choice $(r,x)$ in $\game$, and any shift $z \in [-\step,0]$, there is enough marginal density to generate $(r-z,\cdot)$ in $\igame$. The same arguments hold vice versa for $\igame$.

 From (\ref{eq:cond-prob1}), we get
\begin{align}
\prob_\igame^{\sigma,\pi}[\reach^{\leq T} G \cap X] 
 &\leq \prob_\game^{\sigma,\pi}[\reach^{\leq T} G \cap Z] / \prob_\game^{\sigma,\pi}[Z] \notag \\ 
 \prob_\igame^{\sigma,\pi}[\reach^{\leq T} G] - \prob_\igame^{\sigma,\pi}[\reach^{\leq T} G \cap \lnot X] 
 &\leq \prob_\game^{\sigma,\pi}[\reach^{\leq T} G \cap Z] / \prob_\game^{\sigma,\pi}[Z] \notag \\
\prob_\igame^{\sigma,\pi}[\reach^{\leq T} G] 
 &\leq \prob_\game^{\sigma,\pi}[\reach^{\leq T} G] / \prob_\game^{\sigma,\pi}[Z] +\prob_\igame^{\sigma,\pi}[\lnot X] \notag \\
\prob_\igame^{\sigma,\pi}[\reach^{\leq T} G] 
 &\leq \prob_\game^{\sigma,\pi}[\reach^{\leq T} G] + 2(1- \prob_\game^{\sigma,\pi}[Z]) + (1- \prob_\igame^{\sigma,\pi}[X]) \label{eq:error-left}
\intertext{where the last manipulation holds for $\prob_\game^{\sigma,\pi}[Z] > 1/2$. Similarly from (\ref{eq:cond-prob2}), we get}
\prob_\game^{\sigma,\pi}[\reach^{\leq T} G \cap Y \cap Z] / \prob_\game^{\sigma,\pi}[Z] 
 &\leq \prob_\igame^{\sigma,\pi}[\reach^{\leq T} G]  \notag \\
\prob_\game^{\sigma,\pi}[\reach^{\leq T} G] 
 &\leq \prob_\igame^{\sigma,\pi}[\reach^{\leq T} G] \cdot \prob_\game^{\sigma,\pi}[Z] + 
 \prob_\game^{\sigma,\pi}[\reach^{\leq T} G \cap (\lnot Y \cup \lnot Z)] \notag \\
\prob_\game^{\sigma,\pi}[\reach^{\leq T} G] 
 &\leq \prob_\igame^{\sigma,\pi}[\reach^{\leq T} G] + \prob_\game^{\sigma,\pi}[\lnot Y \cup \lnot Z] \notag \\
\prob_\game^{\sigma,\pi}[\reach^{\leq T} G] 
 &\leq \prob_\igame^{\sigma,\pi}[\reach^{\leq T} G] + 
(1-\prob_\game^{\sigma,\pi}[Y]) + (1-\prob_\game^{\sigma,\pi}[Z]) \label{eq:error-right}
 \end{align}

Finally, we need to bound $(1-\prob_\game^{\sigma,\pi}[X])$, $(1-\prob_\game^{\sigma,\pi}[Y])$, and $(1-\prob_\game^{\sigma,\pi}[Z])$. As regards $\prob_\game^{\sigma,\pi}[Z]$, notice that there are $T/\step$ intervals of length $\kappa$. Due to the memoryless property of the exponential distribution, we can bound the probability by summing $T/\step$ times the probability $p$ that in one interval there are two or more Markovian transitions.
\begin{align}
 (1-\prob_\game^{\sigma,\pi}[Z]) \notag
&\leq \frac{T}{\step} \cdot p
\intertext{As the worst case we assume rate $\maxconst$ which bounds the maximal rate of $\imc$. The probability $p$ can be bounded by}
&\leq \frac{T}{\step} \cdot \frac{(\maxconst\step)^2}{2} 
\leq \frac{1}{2}\kappa\maxconst^2 T \label{eq:error-markov}
\end{align}
which follows from the properties of the Poisson distribution with parameter $\maxconst\step$ using the very same arguments as in~\cite[Lemma 6.2]{Neuhauser:PhD}. As regards $\prob_\game^{\sigma,\pi}[X]$ and $\prob_\game^{\sigma,\pi}[Y]$ we first bound how many times a timed transition in the specification can be taken. We assume the fastest possible transitions, i.e. uniformly distributed on $[0,1/\maxconst]$ (recall that $\maxconst$ bounds the maximal density of transitions in $\spec$). Let us express the probability $q$ that no more than $K = 2\maxconst T (1+\ln\frac{1}{\step})$ transitions occur within time $T$, i.e. that the sum of the random times of the first $K$ transitions exceed $T$. As the expected value of this sum is $T (1+\ln\frac{1}{\step})$, we can use the Hoeffding's inequality to bound the probability that the sum is not lower than its expected value by more than $T \ln\frac{1}{\step})$
$$
q 
\; \leq \;
\exp \left(-\frac{2 \left(T \ln\frac{1}{\step}\right)^2}{2\maxconst T (1+\ln\frac{1}{\step}) \cdot (1/\maxconst)^2}\right)
\; = \;
\exp\left(-\frac{T\maxconst \ln\frac{1}{\step}}{2}\right) 
\; = \; 
\kappa^{T\maxconst/2}
\; \leq \; 
\kappa
$$
due to the assumption that $b > 2/T$. Notice that the first manipulation holds for $\step \leq 1/3$ which we can easily assume. Further, observe that the probabilities $r,r'$ that in one transition a point is sampled that is not simulable in $\game$ and $\igame$, respectively, is bounded by $r \leq \step \cdot \maxconst + T \step \maxconst \leq 2 T \step \maxconst$ and $r' \leq T \step \maxconst$ as $\maxconst$ bounds the maximal density as well as the maximal derivation of the densities in $\spec$ (we further assume that $T \geq 1$. Hence,
\begin{align}
(1-\prob_\game^{\sigma,\pi}[X]) \notag
& \leq 
\kappa + K (2 T \step \maxconst) 
\; = \;
\kappa + \left(2\maxconst T \left(1+\ln\frac{1}{\step}\right)\right) (2 T \step \maxconst) \\
& \leq \;
\kappa + 8 \step (\maxconst T)^2 \ln\frac{1}{\step} 
\; \leq \; 
9 \step (\maxconst T)^2 \ln\frac{1}{\step}, \label{eq:error-spec}
\end{align}
and the same bound holds for $(1-\prob_\game^{\sigma,\pi}[Y])$, as well. All in all, from (\ref{eq:error-left}), (\ref{eq:error-right}), (\ref{eq:error-markov}), and (\ref{eq:error-spec}) we obtain the lemma.
\QED
\end{proof}

\subsection{The discrete-step game $\dgame$}

The goal is to obtain a game very close to the discretized game $\gamed$. Conceptually, $\dgame$ differs only a little from the game $\igame$. There are two differences:
\begin{itemize}
 \item every $\step$ time, there is a self-loop transition which materializes the grid introduced in $\igame$, we call these self-loops \emph{artificial ticks};
 \item in a timed state at time $\ell\step + x$ for $\ell\in \Nseto$ and $x\in[0,\step)$, the player $\playerenv$ has two additional actions: $0$, and $\almost{(\step-x)}$ which means playing almost $\step-x$, i.e. almost the time that remains until the grid line. The reason for these two actions is that optimizing the behavior in $\igame$ may force the player to take an as small number as possible, or a number as close to the grid line from left as possible. The set of strategies in $\dgame$, denoted by $\dPi$, thus simplifies the notion of optimality to be transfered to the discrete game $\gamed$.
\end{itemize}

\noindent
Histories in $\dgame$ are $\histories(\dgame) = (G \times (\Rsetpo \cup \{\almost{(\ell\step)} \mid \ell\in\Nseto \}))^* \times G$, where $\almost{x}$ denotes almost time $x$. Algebraically, $\almost{x} = x$, the only difference is that player $\playercon$ takes at time $\almost{(\ell\step)}$ decision as in the interval $[(\ell-1)\step,\ell\step)$. To this end, $\lfloor \history \rfloor$ has all $\almost{(\ell\step)}$ replaced by $(\ell-1)\step$.

Let us fix a history $\history = (v_0,r_0,u_0) \, t_1 \, \cdots \, t_{n} \, (v_n,r_n,u_n)$ where each $v_i \in G_1$, $r_i \in \Rsetpo \cup \{\almost{(\ell\step)} \mid \ell\in\Nseto \}$, and $u_i \in \Rsetpo$. We fix a measurable set $A$ of histories. Further, let $a = 0$ if $t_n = \, \almost{(\ell\step)}$ for some $\ell \in \Nset$. Otherwise, let $a$ be the minimal number such that $t_n + a = \ell \step$ for some $\ell \in \Nset$, i.e. the remaining time till a grid line and $b$  be as in the definition of $\igame$.

\begin{itemize}
 \item If $v_n$ is an immediate state, let $d = \sigma(\lfloor \history \rfloor)$ and $e = \pi(\history)$. We have
\begin{align*}
\kernel''(\history,A) &= 
\sum_{v\in \supp(e)} e(v) \cdot [\history \, t_n \, (v,r_n,u_n) \in A]  \\
& \quad + 
e(\checkmark) \cdot \sum_{v_i \tauc v_{n+1}} d(v_{n+1}) \cdot [\history \, t_n \, (v_{n+1},r_n,u_n) \in A]
% \kernel(\history,A) &= 
% \sum_{v_i \tauc v_{n+1}} d(v_{n+1}) \cdot [\history \, t_n \, (v_{n+1},r_n) \in A]
\intertext{
\item If $v_n$ is a timed state, let $F = \pi(\history)$. We distinguish four situations. 
For the first timed state visited after a new specification state is entered, we have}
\kernel''(\history,A) &= \int_0^\infty \int_0^\infty ud(r',u') \cdot [\history \, (t_n) \, (v_n,(r'+b+\step-r_n),u') \in A] \de{r'} \de{u'} \\
\intertext{where $ud$ is the density of the uniform distribution over the area $\{(r,x) \mid 0 < x < f(r) \}$ below the curve of $f$ where $f$ is the density of the distribution $d$. If a Markovian transition occured in $\history$ since the last artificial tick, we have}
\kernel''(\history,A) &= \int_{t_e \in \mathbb{R}'} F(\de{t_e}) \cdot \left(E'(t_e)\cdot[t_e < a \,\lor\, t_e = \, \almost{a}] + T'(a)\cdot [t_e \geq a] \right),
\intertext{i.e. no Markovian transition can occur; similarly if $r_n < a$, we have}
\kernel''(\history,A) &= \int_{t_e \in \mathbb{R}'} F(\de{t_e}) \cdot \left(E'(t_e)\cdot[t_e < r_n] + S'(r_n)\cdot [t_e \geq r_n] \right), \\
\intertext{i.e. no Markovian transition and no artificial tick can occur as first; otherwise}
\kernel''(\history,A) &= \int_{t_e \in \mathbb{R}'} F(\de{t_e}) \cdot \Bigg(E(t_e)\cdot[t_e < a \,\lor\, t_e = \, \almost{a}] + T(a) \cdot [t_e \geq a] \\
& \qquad\qquad\qquad\qquad\qquad + \sum_{v_i \probtran{\lambda} v} M_{v}(\min\{t_e,a\})\Bigg)
\intertext{where $\mathbb{R}' = \Rsetpo \cup \{\almost{a}\}$, the term $T(a)$ describes the impact of the the artificial tick, the terms $E', S'$, and $T'$ describe the situation where there is no Markovian transition to compete with}
T'(a) &= [\history \, (t_n+a) \, (v_n,r_n-a,u_n) \in A], \\
E'(t) &= [\history \, (t_n+t) \, (v',r_n-t,u_n) \in A], \\
S'(t) &= [\history \, (t_n+t) \, (v'',0,u_n) \in A], \\
\intertext{and $T(a) = e^{-\mu a} \cdot T'(a)$ where $\mu = \sum_{v_n \probtran{\lambda} v_{n+1}} \lambda$, $v_n \acttran{\committran} v'$, and $v_n \probtran{d} v''$.}
\end{align*}
\end{itemize}

\begin{lemma}\label{lem:igame-dgame}
 The game $\igame$ has the same value as the game $\dgame$, i.e.
 $$
 \sup_{\sigma \in \Sigma} \inf_{\pi\in\Pi} \probm^{\sigma, \pi}_{\igame}\big[\reach^{\leq T}\goal\big] 
 = 
 \sup_{\sigma \in \Sigma} \inf_{\pi\in\dPi} \probm^{\sigma, \pi}_{\dgame}\big[\reach^{\leq T}\goal\big] 
 $$
and a strategy $\sigma$ guarantees both in $\igame$ and $\dgame$ the same value.
\end{lemma}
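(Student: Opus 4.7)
The plan is to show that each of the two differences between $\igame$ and $\dgame$---namely (i) the deterministic \emph{artificial ticks} (self-loops at every grid time $\ell\step$ that fire when no other event pre-empts them) and (ii) the two extra $\playerenv$ actions $0$ and $\almost{(\step-x)}$---leaves the value $\sup_{\sigma}\inf_{\pi}\probm[\reach^{\leq T}\goal]$ unchanged, while preserving the $\playercon$ strategy.

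For (i), I will let $\dPi_0\subseteq\dPi$ denote the $\playerenv$ strategies whose timed-state decisions are supported on $\Rsetp$ (so they do not use the extra actions), and exhibit a measure-preserving bijection between $\Pi$ and $\dPi_0$ via the projection $\mathrm{proj}\colon\histories(\dgame)\to\histories(\igame)$ that strips every tick entry. Each $\pi\in\Pi$ lifts to $\tilde\pi(\history):=\pi(\mathrm{proj}(\history))$, and conversely projecting a $\dPi_0$ strategy through $\mathrm{proj}$ yields a $\Pi$ strategy. Inspecting $\kernel'$ and $\kernel''$, the tick term $T(a)=e^{-\mu a}\cdot T'(a)$ acts as a deterministic identity on the $G$-component and merely decrements $r_n$ by $a$---exactly what $\kernel'$ does implicitly between grid lines. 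Because $\sigma$ factors through $\lfloor\cdot\rfloor$, ticks at $\ell\step$ do not alter its decisions; hence the induced measures on $\reach^{\leq T}\goal$ agree, giving $\inf_{\pi\in\Pi}\probm_\igame^{\sigma,\pi}=\inf_{\pi\in\dPi_0}\probm_\dgame^{\sigma,\pi}$ for every $\sigma$.

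For (ii), I will show that the extra actions are weak limits of $\dPi_0$ actions. Given $\pi\in\dPi$, define $\pi_\eps\in\dPi_0$ by replacing every sampled $0$ with $\eps$ and every $\almost{a}$ with $a-\eps$. The terms $E'(\cdot),\,S'(\cdot),\,M_v(\cdot)$ of $\kernel''$ are continuous in $t_e$ on $(0,a)$, and $E'(\almost{a})=\lim_{\eps\downarrow 0}E'(a-\eps)$ by the very definition of $\lfloor\cdot\rfloor$. A dominated-convergence argument along the almost-surely finite sequence of decisions before time $T$ (finiteness coming from non-Zeno together with Markov acyclicity) yields $\probm_\dgame^{\sigma,\pi_\eps}[\reach^{\leq T}\goal]\to\probm_\dgame^{\sigma,\pi}[\reach^{\leq T}\goal]$ as $\eps\downarrow 0$, so $\inf_{\pi\in\dPi_0}=\inf_{\pi\in\dPi}$. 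Combining (i) and (ii) yields equality of values, and since both constructions preserve $\sigma$ pointwise, the strategy-preservation claim follows.

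The hardest part will be (ii): verifying that $\kernel''(\history,\cdot)$, viewed as a measure-valued function of the $\playerenv$ action, is continuous at the boundary point $t_e=a$ in a manner compatible with the $\playercon$ observation map $\lfloor\cdot\rfloor$. The delicate step is showing that $\sigma$'s commitments along the approximating sequence $a-\eps$ converge to its commitment at $\almost{a}$---this is precisely the property for which the convention $\lfloor\almost{(\ell\step)}\rfloor=(\ell-1)\step$ was engineered, and it is what makes the limit pass through cleanly.
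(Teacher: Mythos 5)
Your overall decomposition is the same as the paper's: first argue that the artificial ticks are value-neutral, then realize the extra actions $0$ and $\almost{(\step-x)}$ as limits of ordinary delays $\eps$ and $a-\eps$. Your part (ii) is essentially the paper's ``$\leq$'' direction (the paper uses step-dependent imprecisions $\delta/(i\cdot 2^j)$ so that the total displacement on \emph{any} run is at most $\delta/i$, which makes the limiting argument go through without having to invoke dominated convergence over an unbounded number of decisions), and you correctly identify that the convention $\lfloor\almost{(\ell\step)}\rfloor=(\ell-1)\step$ together with the $\delta$-resolution of $\sigma$ is what makes $\sigma$'s decisions stable along the approximating sequence.

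The genuine gap is in part (i). The lift $\tilde\pi(\history):=\pi(\mathrm{proj}(\history))$ is not measure-preserving. In $\dgame$ the artificial tick interrupts the wait: if $\pi$ chooses a delay $t_e\geq a$, the kernel fires the tick at the grid line and then \emph{re-queries} the strategy at the post-tick history. Since $\mathrm{proj}$ strips the tick entry, your $\tilde\pi$ answers that re-query by sampling afresh from the same distribution $\pi(\mathrm{proj}(\history))$, i.e.\ it restarts the waiting clock at every grid line. For a non-memoryless delay distribution this changes the law of the absolute time at which $\committran$ occurs, so the induced measures on $\reach^{\leq T}\goal$ do \emph{not} agree. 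The correct construction (which the paper spells out) is to answer the post-tick query with the distribution of $\pi(\history')$ \emph{conditioned on the delay exceeding the time $t-t'$ already spent waiting}, shifted by $t-t'$; symmetrically, turning a $\dPi_0$ strategy into a $\Pi$ strategy requires \emph{concatenating} the successive per-interval conditional plans into a single distribution on $\Rsetp$, not merely composing with $\mathrm{proj}^{-1}$. With these two constructions in place of the claimed bijection, your argument for (i) goes through and the rest of the proof stands.
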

\begin{proof}
We fix a strategy $\sigma \in \Sigma$. As regards $\geq$, the only change are the artificial ticks. The strategy $\pi$ of $\playerenv$ simulates in history $\history$ in $\dgame$ what the strategy $\pi'$ of $\playerenv$ does in history $\history'$ in $\igame$, where $\history'$ is obtained from $\history$ by removing the artificial ticks. In immediate states, simply $\pi(\history) = \pi(\history')$. In timed states the distribution on time $\pi(\history)$ is obtained from $\pi(\history')$ by conditioning by the amount of time $(t-t')$ that has been already spent waiting where $t$ and $t'$ is the total time of $\history$ and $\history'$, respectively. This way, we get an obvious correspondence of runs that preserves measure, i.e. we obtain completely the same probability to reach the target in $\dgame$ as in $\igame$.

Ar regards $\leq$, for a strategy $\pi$ in $\dgame$ a sequence of strategies $(\pi_i)_{i\in\Nset}$ is defined that wait $\delta/(i\cdot 2^j)$ instead of $\almostzero$ and $x-\delta/(i\cdot 2^j)$ instead of $\almost{x}$ in the $j$-th step. After such an imprecise waiting, the strategy further simulates what $\pi$ would do if it waits precisely $\almostzero$ or $\almost{x}$. I.e. for each history $\history$ in $\igame$, the simulating strategy uses the decisions of $\pi(\history')$ where $\history'$ is obtained from $\history$ by inserting the artificial ticks and replacing the imprecise waiting by the precise waiting as chosen by $\pi$. Furthermore, observe that the waiting of $\pi$ gets interrupted by the artificial ticks, i.e. its plans beyond the closes tick are irrelevant. Each $\pi_i$ has to plan the waiting in advance, i.e. connects the waiting distributions of $\pi$ in the current moment, after one artificial tick, after two artificial ticks, etc., as follows. Let $(v,r)$ be the last state of $\history'
$ and let $t$ and $t'$ denote the total time of $\history$ and $\history'$ and $x,x' \in\Rsetp$ be the minimal numbers such that $t+x = \delta k$ and $t'+x' = \delta k'$  for some $k.k'\in\Nset$. The distribution $\pi_i(\history)$ is defined 
\begin{itemize}
 \item on $[(t'-t),(t'-t)+x']$ using $\pi(\history')$ on $[0,x']$ (if $(t'-t)$ is negative, the distribution set to negative numbers is concentrated on $\delta/(i\cdot 2^j)$ instead),
 \item on $[(t'-t)+x',(t'-t)+x'+\delta]$ using $\pi(\history'\,(t+x') \, (v,r-t-x'))$ on $[0,\delta]$ conditioned by the waiting step not being taken in the interval $[0,x']$,
 \item on $[(t'-t)+x'+\delta,(t'-t)+x'+2\delta]$ using $\pi(\history'\,(t+x') \, (v,r-t+x')\,(t+x'+\delta) \, (v,r-t-x'-\delta))$ on $[0,\delta]$ conditioned by the waiting step not being taken in the interval $[0,x'+\delta]$, etc.
\end{itemize}

The behavior of $\igame$ and $\dgame$ differs only if a Markovian transition occurs within the imprecision in waiting. Since the total sum of the imprecision in waiting on \emph{any} run is at most $\delta/i$, the measure of runs that differs tends to $0$ as $i \to \infty$.
\end{proof}

\subsection{Grid strategies in $\dgame$}

We say that two histories $\history$ and $\history'$ \emph{follow the same pattern}, denoted $\history \pattern \history'$ if $\history = (v_0,r_0,u_0) t_1 \ldots t_{n} (v_n,r_n,u_n)$, $\history' = (v'_0,r'_0,u'_0) t'_1 \ldots t'_{n} (v'_n,r'_n,u'_n)$ with $v_i = v'_i$ and $t_i$ and $t'_i$ as well as $r_i$ and $r'_i$ equal when rounded down to a multiple of $\step$ for all $1 \leq i \leq n$. 
% Note that for every $n \in \Nset$, the relation $\pattern$ induces a finite partition of histories of length $n$.

We say that a strategy $\pi$ of $\env$ is a \emph{grid} strategy, denoted $\pi \in \bar{\Pi}_\delta$, if
\begin{enumerate}
 \item the strategy is deterministic;
 \item for any history $\history$ of time $t$ ending in a timed state, $t + \pi(\history) = k \cdot \delta$ for $k \in \Nset$;
 \item for any histories $\history$ of time $t$ and $\history'$ of time $t'$ that follow the same pattern, we have $\pi(\history) = \pi(\history')$ if $\history$ ends in an immediate state, and
$t + \pi(\history) = t' + \pi(\history')$ if $\history$ ends in a timed state ($\pi$ plans the $\committran$ at the same absolute time).
\end{enumerate}

\begin{lemma}\label{lem:grid-strategies}
 In the game $\dgame$, grid strategies suffice for player $\playerenv$, i.e. 
 $$
 \sup_{\sigma \in \Sigma} \inf_{\pi\in\bar{\Pi}} \probm^{\sigma, \pi}_{\dgame}\big[\reach^{\leq T}\goal\big] 
 = 
 \sup_{\sigma \in \Sigma} \inf_{\pi\in\bar{\Pi}_\delta} \probm^{\sigma, \pi}_{\dgame}\big[\reach^{\leq T}\goal\big] 
 $$
\end{lemma}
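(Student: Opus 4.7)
The direction $\sup_{\sigma}\inf_{\pi\in\bar{\Pi}_\delta}\ge\sup_{\sigma}\inf_{\pi\in\bar{\Pi}}$ is immediate since $\bar{\Pi}_\delta\subseteq\bar{\Pi}$. For the converse, I fix $\sigma\in\Sigma$ and any $\pi\in\bar{\Pi}$ and aim to construct a $\pi'\in\bar{\Pi}_\delta$ with $\probm^{\sigma,\pi'}_{\dgame}[\reach^{\le T}\goal]\le\probm^{\sigma,\pi}_{\dgame}[\reach^{\le T}\goal]$; taking the infimum over $\pi$ then yields the claim. The construction is performed in three value-non-increasing stages, composed by a single backwards induction along the finite-depth unfolding of $\dgame$ (bounded because of the horizon $T$, the at-most-one-Markovian-transition-per-$\step$-slot rule, and the artificial ticks).

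\emph{Stage 1 (derandomisation).} With $\sigma$ fixed, player $\playerenv$ faces a single-agent minimisation of a reachability probability on a bounded-depth branching process; at each choice point the $\inf$ over randomised successor distributions is attained by a Dirac measure on an $\argmin$ successor, giving a deterministic strategy of value at most that of $\pi$. \emph{Stage 2 (pattern collapse).} By hypothesis $\sigma$ is constant on each $\pattern$-equivalence class, so the law of the continuation under $(\sigma,\pi)$ from $\history$ depends on $\pi$ only through $\pi$'s restriction to the class $[\history]$. I therefore replace the deterministic strategy obtained in Stage~1 by the pattern-constant strategy choosing, on every class, an element of $\{\pi(\history')\mid \history'\in[\history]\}$ that minimises the conditional expected reachability. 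The resulting strategy satisfies clauses (1) and (3) of the grid-strategy definition.

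\emph{Stage 3 (snap to grid).} It remains to enforce clause (2). Let $\history$ be a timed history of absolute time $t$ with $\pi(\history)=d>0$, and let $(j\delta,(j+1)\delta]$ be the $\delta$-slot containing $t+d$. I argue that replacing $d$ by the delay leading either to $j\delta$ (when $j\delta>t$) or to the $\dgame$-action $\almost{((j+1)\delta-t)}$, while keeping $\pi$ fixed elsewhere, cannot increase the value. Inside the open slot $(j\delta,(j+1)\delta)$: (i) $\sigma$'s decisions are frozen because its observation of time is $\lfloor\cdot/\delta\rfloor$ and the pattern class is slot-constant; (ii) the Markovian hazard is a fixed $\mu$ because the game state does not change while $\playerenv$ is waiting; (iii) the specification clock $r$ is decremented by a common shift. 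Consequently, as a function of $d$ ranging over the slot, the expected reachability is an affine combination of $1-e^{-\mu\tau}$, where $\tau$ is the remaining slot-portion, multiplied by a slot-determined continuation for the Markovian branch, and $e^{-\mu\tau}$ times a slot-determined continuation for the $\committran$ branch (both continuations depending on $d$ only through the slot). Monotonicity of $\tau\mapsto 1-e^{-\mu\tau}$ implies monotonicity of the value, so its minimum over the slot is attained at an endpoint, each of which is expressible as a grid delay: the intermediate $\step$-ticks are self-loops that preserve the game state and the pattern class, and can be traversed by taking the $\almost{(\step-x)}$ action at each one. Applying this replacement leaf-to-root, and re-invoking Stages~1--2 after every snap to restore pattern-constancy, produces $\pi'\in\bar{\Pi}_\delta$.

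The main obstacle I anticipate is the slot-invariance statement underlying Stage~3: that the continuation distribution after a $\committran$ fired at any time $d\in(j\delta,(j+1)\delta)$ is the same, up to a common time shift that $\sigma$ cannot see. Pinning this down rigorously requires combining the partial-observability restriction on $\Sigma$, the memorylessness of the Markovian dynamics during the wait, and the observation that all components of the state other than $r$ are unchanged throughout the wait while $r$ is shifted uniformly. Handling boundary cases carefully -- when $t+d$ lands exactly on a grid line, when a timed specification transition fires inside the slot, or when the current state has no Markovian transitions so that $\mu=0$ -- is where the bulk of the technical care sits; once slot-invariance is in place, the rest reduces to the monotone affine argument and a routine backwards induction over the finite tree.
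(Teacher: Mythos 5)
Your proposal follows essentially the same route as the paper's proof: a backwards induction whose two load-bearing facts are (i) that continuation values depend on a history only through its pattern class (your ``slot-invariance'' is exactly the second point of the claim the paper proves alongside its construction of the optimal grid strategy $\pi^\ast$), and (ii) that within a slot the reachability value is affine in $e^{-\mu\tau}$ as a function of the switching time, hence monotone, hence minimised at an endpoint realizable by the actions $\almostzero$, $\almost{x}$, or the full delay. The one point where the paper is more careful is the well-foundedness of the induction: the unfolding of $\dgame$ is not literally finite-depth, so the paper first truncates to $n$-step-bounded reachability at cost $\eps$ (using the acyclicity assumption and the fact that runs with infinitely many transitions before $T$ have measure zero) and then lets $\eps\to 0$ --- a standard fix that your ``finite-depth unfolding'' assertion silently relies on.
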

\begin{proof}
First, observe that for each $\eps > 0$ there is $n \in \Nset$ such that 
 $$
 \sup_{\sigma \in \Sigma} \inf_{\pi\in\bar{\Pi}} \probm^{\sigma, \pi}_{\dgame}\big[\reach^{\leq T}_{\leq n}\goal\big] 
 \leq 
 \sup_{\sigma \in \Sigma} \inf_{\pi\in\bar{\Pi}} \probm^{\sigma, \pi}_{\dgame}\big[\reach^{\leq T}\goal\big] + \eps,
$$
where $\reach^{\leq T}_{\leq n}\goal$ is the set of runs that reach the target within $n$ discrete steps within time $T$. Indeed, the probability of $n$ Markovian transitions to occur within time $T$ tends to $0$ as $n \to \infty$, the number of internal transitions in $\imc$ that can be performed between two Markovian transitions is limited by the acyclicity assumption, and  the set of runs where $\pi$ makes infinitely many transitions within time $T$ is $0$.

Thanks to this fact, it is enough to show that for any $n$ the grid strategies suffice for the $n$-step time bounded reachability. Let us fix $\sigma$ and $n \in \Nset$. We show that there is a grid strategy $\pi^\ast$ which is optimal w.r.t. $n$-steps, i.e.
$$
\probm^{\sigma, \pi^\ast}_{\dgame}\big[\reach^{\leq T}_{\leq n}\goal\big] 
 =
 \inf_{\pi\in\bar{\Pi}} \probm^{\sigma, \pi}_{\dgame}\big[\reach^{\leq T}_{\leq n}\goal\big].
$$

We construct $\pi^\ast$ by induction on the number of steps already performed in a history $\history$, denoted $|\history|$. 
Along the construction, we also prove the following claim
\begin{claim} Let $0 \leq i \leq n$ and $\history$ be an $i$-step history.
\begin{enumerate}
 \item If $\history$ has total time $k\delta$ for some $k \in \Nset$,
$$\probm^{\sigma, \pi^\ast}_{\dgame,\history}\big[\reach^{\leq T}_{\leq n}\goal\big] 
 =
 \inf_{\pi\in\bar{\Pi}} \probm^{\sigma, \pi}_{\dgame,\history}\big[\reach^{\leq T}_{\leq n}\goal\big].$$
\item If a Markovian transition already occurred in $\history$ since the last artificial tick, $$\probm^{\sigma, \pi^\ast}_{\dgame,\history}\big[\reach^{\leq T}_{\leq n}\goal\big] = \inf_{\pi\in\bar{\Pi}} \inf_{\history' \pattern \history} \probm^{\sigma, \pi}_{\dgame,\history'}\big[\reach^{\leq T}_{\leq n}\goal\big].$$
\end{enumerate}
\end{claim}

\noindent
Here, $\probm^{\sigma, \pi}_{\dgame,\history}$ is the probability measure of the chain where $\history$ is the initial state. Note that the second point of the claim applies to two situations: in the current ``grid slot'', either a Markovian transition already occurred, or the specification state is about to change.

As the induction base, we take arbitrary grid strategy $\pi^\ast$. For $i = n$, the claim obviously holds. For the induction step, we assume that $\pi^\ast$ satisfies the claim for all $i > j$. We alter $\pi^\ast$ to be satisfy the claim for $i = j$ as well. First, let $\history$ be a $i$-step history of total time $k\delta$ for some $k \in \Nset$. Let $\history$ end in a state $(v,r)$.
\begin{itemize}
 \item If $v$ is immediate, we set $\pi(\history)$ to the action that minimizes
$$\min\{ 
    \sum_{v \tauc v'} \sigma(\history)(v') \cdot \probm^{\sigma, \pi^\ast}_{\dgame,\history\,t\,(v',r)}\big[\reach^{\leq T}_{\leq n}\goal\big],
    \min_{v \taue v'} \probm^{\sigma, \pi^\ast}_{\dgame,\history\,t\,(v',r)}\big[\reach^{\leq T}_{\leq n}\goal\big]
\},$$
where the first choice corresponds to the action \checkmark.
 \item If $v$ is timed and a Markovian transition can occur before the next artificial tick, we set $\pi(\history)$ to $\almostzero$, $\almost{\delta}$, or $\delta$ depending on which minimizes
$$
\min \{ 
 \probm^{\sigma, \pi^\ast}_{\dgame,\history\,t\,(v'',r)}\big[\reach^{\leq T}_{\leq n}\goal\big],
 \mathcal{A} + \probm^{\sigma, \pi^\ast}_{\dgame,\history\,\overline{t}\,(v'',r-\delta)}\big[\reach^{\leq T}_{\leq n}\goal\big],
 \mathcal{A} + \probm^{\sigma, \pi^\ast}_{\dgame,\history\,t\,(v,r-\delta)}\big[\reach^{\leq T}_{\leq n}\goal\big]
\}
$$
where $v \acttranlong{\committran}v''$ and $\mathcal{A} = \int_0^\delta \mu \cdot e^{-\mu x} \sum_{v \probtran{\lambda} v'} \frac{\lambda}{\mu} \cdot \probm^{\sigma, \pi^\ast}_{\dgame,\history\,t+x\,(v',r-x)}\big[\reach^{\leq T}_{\leq n}\goal\big] \de{x}$.
\item If $v$ is timed and the specification state is about to change before the next artificial tick, i.e. at time $b < \delta$, we set $\pi(\history)$ to $\almostzero$ or to $\delta$, depending on which minimizes
$$
\min\{ \;
  \probm^{\sigma, \pi^\ast}_{\dgame,\history\,t\,(v'',r)}\big[\reach^{\leq T}_{\leq n}\goal\big], \;
  \int_0^\infty f(x) \cdot \probm^{\sigma, \pi^\ast}_{\dgame,\history\,t+b\,(v''',x)}\big[\reach^{\leq T}_{\leq n}\goal\big] \de{x},
\;\}
$$
where $v'''$ is reached from $v$ when the specification changes state and $f$ is the density of waiting in the new specification state.
\end{itemize}
Second, let $\history$ be a $i$-step history of total time $t$ where a Markovian transition already occurred since the last artificial tick. Let $\history$ end in a state $(v,r)$ and let $a$ be minimal such that $t+a = k\delta$ for some $k\in\Nset$.
\begin{itemize}
 \item If $v$ is immediate, we again set $\pi(\history)$ to the action that minimizes
$$\min\{ 
    \sum_{v \tauc v'} \sigma(\history)(v') \cdot \probm^{\sigma, \pi^\ast}_{\dgame,\history\,t\,(v',r)}\big[\reach^{\leq T}_{\leq n}\goal\big],
    \min_{v \taue v'} \probm^{\sigma, \pi^\ast}_{\dgame,\history\,t\,(v',r)}\big[\reach^{\leq T}_{\leq n}\goal\big]
\}.$$
\item If $v$ is timed, we set $\pi(\history)$ to $\almost{a}$ or to $a$ depending on which minimizes
$$
\min \{ \;
 \probm^{\sigma, \pi^\ast}_{\dgame,\history\,t+\overline{a}\,(v'',r-a)}\big[\reach^{\leq T}_{\leq n}\goal\big], \;
 \probm^{\sigma, \pi^\ast}_{\dgame,\history\,t+a\,(v,r-a)}\big[\reach^{\leq T}_{\leq n}\goal\big]
\; \}
$$
\end{itemize}
For all remaining $i$-step histories we set the strategy so that it is a grid strategy.

As regards the second point of the claim, observe that all the reachability probabilities are from longer histories, i.e. they do not depend on exact timing of $\history$. Hence, also the choice and the reachability probability in the $i$-th step does not depend on the exact timing, it cannot be lower for any $\history' \pattern \history$. Observe that for an immediate state, no other choice (possibly mixing among the pure choices) can yield a lower reachability probability. For a timed state, the chain must move either to history $\history\,t+a\,(v,r-a)$ (no action taken until the artificial tick) or to a history of the form $\history\,t+\overline{x}\,(v'',r-x)$ for some $x \leq a$ ($\committran$ taken at time $x$). From the induction hypothesis, the reachability probability does not depend on $x$, i.e. restricting to the pure choice $\almost{a}$ does not hamper optimality and results in a grid strategy.

As regards the first point of the claim, we again fix an $i$-step history $\history$ of total time $k\delta$ for some $k \in \Nset$. Let $\history$ end in a state $(v,r)$.
\begin{itemize}
 \item If $v$ is immediate, again, no other choice (possibly mixing among the pure choices) can yield a lower reachability probability.
 \item If $v$ is timed and a Markovian transition can occur before the next artificial tick, the situation is much more complicated. First observe that when a Markovian transition occurs in state $v'$ at time $x$, the (optimal) probability to reach the target does not depend on $x$ due to the second point of the claim. Thus, we can denote it $p_{v'}$. Waiting in the interval $[a,b]$ in state $v'$ for a Markovian transition to occur contributes to the reachability probability with $\int_{a}^{b} \mu \cdot e^{-\mu x} \cdot p_{v'} \de{x} = p_{v'} (e^{-\mu a}- e^{-\mu b})$. Hence optimizing the decisions in $v$ boils down to spending the $\delta$ time in such a state $v^\ast$ where the contribution is maximal, i.e. where $p_{v^\ast}$ is maximal; then before or after the next artificial tick moving to the state where the contribution is maximal for the following interval of size $\delta$ (the strategy $\sigma$ may take different actions before and after the artificial tick, hence we need to consider both options). 
Indeed, there is no reason to hesitate with moving to such $v^\ast$ as to probability to move to such a state also does not depend on time $x$ when the move is taken. Precisely, taking $\committran$ in state $v'$ at time $x$ results in traversing a finite sequence of states in $0$ time, ending in some timed state where a Markovian transition is again awaited. Importantly, the (possibly random) decisions of $\playercon$ in this sequence do not depend on $x$. All in all, action $\almostzero$ is taken to change the current state, and actions $\almost{\delta}$ or $\delta$ are taken if the contribution of the current state is optimal and the state is to be changed before or after the next artificial tick.
\end{itemize}

This concludes the proof of the claim as well as the proof of the lemma.
\QED
\end{proof}

\subsection{Discrete game $\gamed$}

We will define the discrete game $\gamed$ as an extensive-form game~\cite{DBLP:conf/stoc/KollerMS94}.
As the game $\dgame$ with a grid strategy has an almost discrete structure not much work is left to define the discrete game. Observe that for a grid strategy $\pi$ the behaviour of the Markov chain of $\dgame$ in a history $\history$ does not depend on exact timing of $\history$, i.e. for all $\history' \pattern \history$ it holds $\probm^{\sigma,\pi}_{\dgame,\history} \left[ \reach^{\leq T} G \right] = \probm^{\sigma,\pi}_{\dgame,\history'} \left[ \reach^{\leq T} G \right]$. Hence, we can define the same game on the partition $V = \{[\history]_\pattern \mid \history \in X\}$ of the set of histories $X \subset \histories(\dgame)$ where the total time is $ \leq T$ and that, intuitively speaking, can possibly be ever played. Formally $X$ are the histories where
\begin{itemize}                                                                                                                                                                                                                                                                                                                                                                                                                                                                                                                                                                                                                               \item the total time is $\leq T$,
\item there is at most one Markovian transition in each grid interval,
\item the $\committran$ transitions are taken only at times $\ell\step$ or $\almost{(\ell\step)}$. \end{itemize}
Due to these restrictions, only a limited number of steps can be played up to time $T$, hence, $V$ is finite. For a vertex $v\in V$, we denote by $\last(v)$ the last state of all the histories in the class $v$.
To comply with the definition of extensive-form game, we divide the vertices where  $\playercon$ takes decisions from the vertices where $\playerenv$ takes decisions and from the stochastic vertices. Hence, we set 
$$
V' = 
    V \;\cup\; 
    V_i \times \{\checkmark\} \;\;\cup\;\;
    (V_\leftend \times \{0,\almost{\step},\step\} \;\cup\; 
    V_\inside \times \{\almost{\step},\step\} \;\cup\; 
    V_\rightend),
$$
where $V_i$ are the vertices corresponding to immediate states, $V_\leftend$, $V_\inside$, and $V_\rightend$ are the vertices corresponding to timed states with total time $\ell\step$, $\ell\step+x$ and $\almost{\ell\step}$, respectively, for some $\ell \in \Nseto$ and $x \in (0,\step)$. The vertices are divided among the players as follows.
\begin{itemize}
 \item $V_i \cup V_\leftend \cup V_\inside$ are the vertices of the first player (player $\playerenv$) with actions 
$\{v' \in V \mid \current(v) \taue \current(v')\} \cup \{\checkmark\}$ if $v\in V_i$,
$\{0,\step,\almost{\step}\}$ if $v\in V_\leftend$, and $\{\step,\almost{\step}\}$ if $v\in V_\inside$;
 \item $V_i \times \{\checkmark\}$ are the vertices of the second player (player $\playercon$) with actions $\{v' \in V \mid \last(v) \tauc \last(v')\}$ for any vertex $(v,\checkmark)$ of $\playercon$;
 \item $V_\leftend \times \{0,\almost{\step},\step\} \;\cup\; V_\inside \times \{\almost{\step},\step\} \;\cup\; V_\rightend$ are the stochastic vertices.
\end{itemize}

Timed vertices with total time $T$ are the terminal vertices, denoted $Z$. For all other remaining vertices, the tree-like transition structure is defined as follows.
In a vertex $v \in V$, any action $a \in \{\checkmark, 0, \almost{\step},\step\}$ leads to the vertex $(v,a)$. In a vertex $v \in V'$, action $v' \in V$ leads to the vertex $v'$. The probability matrix $\kernel^\gamed$ for the stochastic vertices is defined as follows. 
\begin{itemize}
 \item For $v \in V_\rightend$ with $\history\in v$ and $v' \in V'$, we set $\kernel^\delta(v,v') = \kernel''(\history,v')$ where $\kernel''$ is induced by arbitrary strategies $\sigma$ and $\pi$.
 \item For $(v,a) \in (V_\leftend \times \{0,\almost{\step},\step\}) \cup (V_\inside \times \{\almost{\step},\step\})$ with $\history\in v$ and $v' \in V'$, we set $\kernel^\delta((v,a),v') = \kernel''(\history,v')$ where $\kernel''$ is induced by any strategy $\sigma$ and a grid strategy that chooses in $\history$ action $a$.
\end{itemize}

Each terminal vertex $v\in Z$ has payoff $u(v)$ associated: vertex of histories that visit $G$ in the first component have payoff $1$, other vertices have payoff $0$. Recall that $\step = n\cdot \delta$. The observation sets for the first player are 
$$H_1 = \{ \{v \in V' \mid \last(v) = (c,s,e), \; \text{$t$ is the total time of $v$}, \lfloor t/\delta \rfloor = i\} \mid i \in \Nseto, c \in C \},$$
i.e. the vertices in one observation set agree on the first component of the last state and on the total time up to the precision of $\delta$. Observation sets for the second player are singletons, i.e.
$$H_2 = \{ \{v\} \mid v \in V'\},$$

A \emph{behavioural strategy}  of the first and the second player is a function that assigns to each observation of the player a probability distribution over actions available in this vertex. We denote these sets by $\Sigma_\gamed$ and $\Pi_\gamed$. Observe that the strategies of player $\playercon$ and grid strategies of player $\playerenv$ in the game $\dgame$ coincide with the strategies of the first and the second player in $\gamed$.
The value of the game is defined as the expected payoff: $$\sup_{\sigma\in\Sigma_\gamed}\inf_{\pi\in\Pi_\gamed} E^{\sigma,\pi}[u].$$ Notice that this definition is equivalent to the definition via reachability in the main body. From these observations and from the construction we immediately get that the values equal:
\begin{lemma}\label{lem:dgame-gamed}
 We have that $v_\dgame = v_\gamed$ and any strategy of player $1$ in $v_\gamed$ corresponds to a strategy of player $\playercon$ in $\dgame$ guaranteeing the same value.
\end{lemma}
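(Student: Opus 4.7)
The plan is to prove the theorem via a four-step chain of reductions $\game \rightsquigarrow \igame \rightsquigarrow \dgame \rightsquigarrow \gamed$, where only the first step incurs error, and then to invoke the Koller--Megiddo--von Stengel algorithm on the resulting extensive-form game. First I would introduce an auxiliary game $\igame$ on the same state space of histories as $\game$, differing only in that (a) at most one Markovian transition is allowed to occur inside any grid interval $[\ell\step,(\ell+1)\step)$, and (b) whenever the specification changes location at time $\ell\step+x$, the freshly sampled waiting time is shifted by $\step-x$ so that, effectively, new specification times are only generated at grid boundaries. The whole approximation error will be absorbed here.

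The hard part will be bounding $|v_\game - v_\igame|$ by $10\step(\maxconst T)^2\ln\tfrac{1}{\step}$. My plan is to introduce three ``good'' sets of runs: a set $Z$ on which each grid interval carries at most one Markovian transition, and sets $X,Y$ on which every freshly sampled specification pair $(r,x)$ under the density $f$ is \emph{simulable}, meaning that the same point $(r,x)$ (up to a shift of at most $\step$) lies below the density curve both in $\game$ and in $\igame$. The simulability condition lets me biject the relevant parts of the two probability spaces so that the conditional reachability probabilities coincide; then standard manipulations of conditional probabilities yield $|v_\game - v_\igame| \le 2(1-\prob[Z]) + (1-\prob[X]) + (1-\prob[Y])$. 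To bound these three deficits I would (i) use the Poisson tail for rate $\maxconst$ over the $T/\step$ intervals to get $(1-\prob[Z])\le \tfrac12\step\maxconst^2 T$; (ii) use Hoeffding on the sum of waiting times (fastest specification transitions are bounded by $1/\maxconst$) to show that with probability $\ge 1-\step$ no more than $K=2\maxconst T(1+\ln\tfrac1\step)$ specification transitions occur before $T$; and (iii) bound the per-sample failure probability by $2T\step\maxconst$ using that $\maxconst$ bounds both densities and their derivatives. Summing gives the claimed $10\step(\maxconst T)^2\ln\tfrac{1}{\step}$ bound, and since a strategy $\sigma$ is shared between $\game$ and $\igame$, the bound transfers to strategy quality.

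Next I would pass from $\igame$ to the discrete-step game $\dgame$ by inserting an artificial self-loop ``tick'' every $\step$ time units and enlarging the $\playerenv$ action set at a timed state with total time $\ell\step+x$ to include the pseudo-delays $0$ and $\almost{(\step-x)}$. The inequality $v_\dgame \le v_\igame$ is immediate since $\dgame$ only restricts (the artificial ticks do not change probabilities); for $v_\dgame \ge v_\igame$ I would approximate any $\pi$ of $\dgame$ by strategies $\pi_i$ in $\igame$ that replace $\almostzero$ by $\delta/(i 2^j)$ at step $j$, simulating $\pi$'s downstream decisions on the precisely-shifted history; imprecision sums to at most $\delta/i$ per run so the measure of runs affected by a Markovian interruption vanishes. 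Then, using the artificial ticks, Lemma~\ref{lem:grid-strategies} reduces $\playerenv$'s strategy space to grid strategies $\bar\Pi_\delta$ that commit to $\step$-multiples of absolute time and depend only on the $\pattern$-pattern of the history. The induction showing this is the crux of Lemma~\ref{lem:grid-strategies}: the per-state-visit reachability contribution $p_{v'}(e^{-\mu a}-e^{-\mu b})$ is linear in $(e^{-\mu a}-e^{-\mu b})$, so the optimal $\playerenv$ choice is to spend each $\step$-slot in the successor state maximizing $p_{v'}$ and to switch states only at boundaries or ``almost'' boundaries.

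Finally, on the grid-strategy fragment, histories collapse under the equivalence $\pattern$ to a finite set $V$ of vertices and the game acquires the tree/dag structure of $\gamed$, yielding $v_\dgame = v_\gamed$ (Lemma~\ref{lem:dgame-gamed}). The number of vertices is bounded by $b^{\leq N\cdot |G|}$ where $N=T/\step$ and $b$ the maximum branching, giving $|\gamed| = 2^{\mathcal{O}(|\game|)}$ for a suitable choice of $\step$. Because $\gamed$ is an extensive-form game of imperfect information on a tree with observation sets $H_1$ collapsing second/third coordinates and elapsed time modulo $\delta$, and perfect information for $\playerenv$, the LP-based algorithm of~\cite{DBLP:conf/stoc/KollerMS94} computes both $v_\gamed$ and an optimal behavioural strategy for $\playercon$ in time polynomial in $|\gamed|$. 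Combining all four lemmata gives the error bound, the transfer of $\sigma^\ast$ back through the chain (since $\Sigma_\gamed$ is in bijection with the relevant subclass of $\scheduler(\imc)$-induced strategies in $\game$), and the $2^{\mathcal{O}(|\game|)}$ running time stated in the theorem.
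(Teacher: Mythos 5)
Your treatment of this particular lemma is correct and follows the same route as the paper: after Lemma~\ref{lem:grid-strategies} restricts $\playerenv$ to grid strategies, whose behaviour depends only on the $\pattern$-class of a history, you quotient the history space by $\pattern$ to obtain the finite extensive-form game $\gamed$, whose player-$1$/player-$2$ strategies are in bijection with the $\playercon$-strategies and grid $\playerenv$-strategies of $\dgame$, so the values and the guaranteeing strategies transfer. (Most of your text actually re-derives the surrounding lemmata of Theorem~\ref{thm:disc}; the portion relevant to Lemma~\ref{lem:dgame-gamed} itself coincides with the paper's argument.)
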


\subsection{Solution of the discrete game $\gamed$}

\begin{lemma}\label{lem:gamed-solution}
 The value $v_\gamed$ and an optimal strategy can be computed in time polynomial in $|\gamed|$.
\end{lemma}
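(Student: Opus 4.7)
The plan is to appeal to the classical result of Koller, Megiddo and von Stengel~\cite{DBLP:conf/stoc/KollerMS94}, which solves two-player zero-sum extensive-form games with imperfect information on a tree in time polynomial in the size of the tree via the \emph{sequence form} linear program. Since $\gamed$ was constructed precisely as a finite, tree-shaped, zero-sum extensive-form game with terminal payoffs in $\{0,1\}$ (value $1$ exactly when the first component of the last state lies in $G$), it is a direct fit for this framework provided the two players have perfect recall.

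First, I would verify perfect recall. Player $\playerenv$ is trivial, since by construction his information sets $H_2$ are singletons, so he has full information and hence perfect recall. For player $\playercon$, the information sets $H_1$ group vertices that agree on the first component of $\last(v)$ and on the total time $\lfloor t/\delta\rfloor$ up to the clock resolution. The key observation is that this observation is exactly the projection used to define $\Sigma$ in Sections~\ref{sec:ceg} and~\ref{sec:disc}, and it is computed as a function of the prefix; hence any two histories in a common information set of $\playercon$ induce the same sequence of past observations and past actions of $\playercon$. This is the textbook definition of perfect recall.

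Second, with perfect recall in hand, I would construct the sequence-form LP: variables are \emph{realization weights} indexed by sequences of (own) actions along the tree, with linear consistency constraints at each information set; the payoff is a bilinear form on the two realization plans, and saddle points of this bilinear form are found by solving a single LP of size linear in the number of sequences, which is bounded by $|\gamed|$. Solving this LP in polynomial time yields both $v_\gamed$ and an optimal realization plan for $\playercon$, from which an optimal behavioural strategy $\sigma^\ast \in \Sigma_\gamed$ is recovered by the standard division of realization weights at each information set (well-defined under perfect recall). Combined with Lemma~\ref{lem:dgame-gamed}, $\sigma^\ast$ transfers to a strategy of $\playercon$ in $\dgame$ achieving $v_\gamed$, and by Lemmata~\ref{lem:igame-dgame} and~\ref{lem:game-igame} to an $(10\step(\maxconst T)^2 \ln\tfrac{1}{\step})$-optimal one in $\game$.

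The main obstacle is really the bookkeeping for perfect recall: one must check that the observation function never forgets a previous own action, even across the artificial ticks and the $\almost{(\ell\step)}$ refinements introduced in $\dgame$. Because both refinements only add vertices that are either stochastic or controlled by $\playerenv$, and because the observations of $\playercon$ are a deterministic function of the history available to him, this check goes through without difficulty; all remaining ingredients are a direct invocation of~\cite{DBLP:conf/stoc/KollerMS94}, and the claimed polynomial bound in $|\gamed|$ (hence exponential in $|\game|$, by the $b^{\le N\cdot |G|}$ bound stated in the construction) follows.
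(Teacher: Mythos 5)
Your proposal follows essentially the same route as the paper: verify perfect recall for both players (trivially for the fully-informed one, via the tree-structured observation sets for $\playercon$) and then invoke the sequence-form linear program of Koller, Megiddo and von Stengel to solve the extensive-form game in time polynomial in $|\gamed|$. Your write-up merely spells out the LP construction and the transfer back through the earlier lemmata in more detail than the paper does.
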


\begin{proof}
Because the observation sets of player $1$ form a tree such that each action of player $1$ results in a move in this tree, it is easy to see that $\gamed$ satisfies the condition of perfect recall (see~\cite{DBLP:conf/stoc/KollerMS94}). The value and the optimal strategies can be then computed using a linear program~\cite{DBLP:conf/stoc/KollerMS94} of size linear in the state space. 
\end{proof}

%\input{old_and_garbage}

% \putbib[str-short,li,concur]
% \end{bibunit}

\end{document}